\documentclass[pdflatex,sn-mathphys-num]{sn-jnl}

\usepackage{graphicx}%
\usepackage{multirow}%
\usepackage{amsmath,amssymb,amsfonts}%
\usepackage{amsthm}%
\usepackage{mathrsfs}%
\usepackage[title]{appendix}%
\usepackage{xcolor}%
\usepackage{textcomp}%
\usepackage{manyfoot}%
\usepackage{booktabs}%
\usepackage{algorithm}%
\usepackage{algorithmicx}%
\usepackage{algpseudocode}%
\usepackage{listings}%
\usepackage{natbib}
\usepackage{hyperref}
\usepackage{lscape}
\usepackage{color}
\usepackage{colortbl}
\usepackage{setspace}
\usepackage{dsfont}
\usepackage[hang,flushmargin]{footmisc} 
\usepackage{graphics}
\usepackage{subcaption}
\usepackage{stmaryrd}
\usepackage{bbm}
\usepackage{enumitem}
\usepackage{sidecap}
\usepackage{xparse}
\usepackage{sansmath}

\usepackage{mathtools}
\mathtoolsset{showonlyrefs} 

\geometry{hmargin=2.5cm, vmargin=2.5cm}

\newtheorem{theorem}{Theorem}[section]
\newtheorem{lemma}[theorem]{Lemma}

\newtheorem{proposition}[theorem]{Proposition}

\newtheorem{remark}[theorem]{Remark}

\makeatletter 
\@addtoreset{equation}{section}

\makeatother 
\setcounter{section}{0}
\setcounter{equation}{0}

\newcommand{\prob}{\mathsf{P}}

\newcommand{\media}{\mathsf{E}}
\newcommand{\beq}{\begin{equation}}
\newcommand{\eeq}{\end{equation}}

\newcommand{\reali}{\ensuremath{\mathbb{R}}}
\newcommand{\naturali}{\ensuremath{\mathbb{N}}}

\newcommand{\cH}{\mathcal{H}}

\newcommand{\cF}{\mathcal{F}}
\newcommand{\cG}{\mathcal{G}}

\newcommand{\ind}{\boldsymbol{1}}
\newcommand{\bG}{\ensuremath{\mathbb{G}}}

\newcommand{\bH}{\ensuremath{\mathbb{H}}}
\newcommand{\bF}{\ensuremath{\mathbb{F}}}

\def \ud{\hspace{1pt} \mathrm{d}}
\def \e{\mathrm{e}}
\def \given{\hspace{1.5pt}|\hspace{1.5pt}}
\def \Given{\hspace{1.5pt}\Big|\hspace{1.5pt}}


\begin{document}

\title{\textbf{Filtering in a hazard rate change-point model with financial and life-insurance applications}}

\author*[1]{\fnm{Matteo} \sur{Buttarazzi}}\email{matteo.buttarazzi@unito.it}
\author[2]{\fnm{Claudia} \sur{Ceci}}\email{claudia.ceci@uniroma1.it}

\affil[1]{{\orgdiv School of Management and Economics, Dept. ESOMAS}, \orgname{University of Torino}, \orgaddress{\street{Corso
Unione Sovietica, 218 Bis, 10134, Torino, Italy}}}

\affil[2]{\orgdiv{Department MEMOTEF}, \orgname{Sapienza University of Rome}, \orgaddress{\street{Via del Castro Laurenziano, 9}, \city{Rome}, Italy}}
\abstract{This paper develops a continuous-time filtering framework for estimating a hazard rate subject to an unobservable change-point. This framework naturally arises in both financial and insurance applications, where the default intensity of a firm or the mortality rate of an individual may experience a sudden jump at an unobservable time, representing, for instance, a shift in the firm’s risk profile or a deterioration in an individual’s health status.
By employing a progressive enlargement of filtration, we integrate noisy observations of the hazard rate with default-related information. We characterise the filter, i.e. the conditional probability of the change-point given the information flow, as the unique strong solution to a stochastic differential equation driven by the innovation process enriched with the discontinuous component. 
A sensitivity analysis and a comparison of the filter's behaviour under various information structures are provided.
Our framework further allows for the derivation of an explicit formula for the survival probability conditional on partial information. This result applies to the pricing of credit-sensitive financial instruments such as defaultable bonds, credit default swaps, and life insurance contracts.
Finally, a numerical analysis illustrates how partial information leads to delayed adjustments in the estimation of the hazard rate and consequently to mispricing of credit-sensitive instruments when compared to a full-information setting.}

\keywords{Nonlinear filtering, incomplete information, detection, credit risk, actuarial mathematics}
\pacs[JEL Classification]{C60, D81}
\pacs[MSC Classification]{60G35, 91G40, 91G05, 60G55}
\maketitle

\section{Introduction}
In many real-world applications, one often only has access to partial, limited, or noisy data, making it challenging to derive meaningful information. This is where stochastic filtering comes into play. Filtering problems concern the estimation of an unobserved stochastic process $(X_t)_{t\geq0}$, referred to as the signal,  given observations of a related process $(Y_t)_{t\geq0}$. This leads to computing, for each time $t$, the conditional distribution of $X_t$ given the information flow $\mathcal{F}^Y_t=\sigma\{Y_s, s\leq t\}$, which provides the best estimate of the signal according to the mean-square error.
This problem was solved for linear Gaussian systems by \cite{kalman1}, \cite{kalman2}, leading to the widely used Kalman filter. However, many practical systems exhibit nonlinear dynamics and non-Gaussian noise, rendering the filtering problem significantly more complex and generally infinite-dimensional.
Early contributions from Kushner \cite{kushner}, Stratonovich \cite{stratonovich}, Kailath \cite{kailath} and Zakai \cite{zakai} laid the foundation for nonlinear filtering. 
Building on these foundational works, filtering theory has continued to evolve, extending to more complex settings, including those involving jump processes. In particular, filtering for pure-jump or jump-diffusion models has been studied extensively, with significant contributions from \cite{Bain-Crisan}, \cite{bremaud1981point}, \cite{ceci2012nonlinear}, \cite{ceciG2000filtering}, \cite{ceciG2001nonlinear}  and \cite{frey2012pricing}, among others. In these discontinuous settings, the innovation process is enriched with a compensated jump component.

A closely related framework is provided by the detection problem, which aims to determine whether a certain hypothesis about an observed process is true or not. 
In its Bayesian formulation, detection involves sequentially testing between statistical hypotheses regarding the underlying probability measure governing an observed process.
Classical examples include testing hypotheses about the mean of a Wiener process or the intensity of a Poisson process
(cf. \cite{61shiryaev}, \cite{63shiryaev}, \cite[Ch. 6]{peskir2006optimal} and \cite[Ch. 4]{shiryaev2007optimal}).
See also \cite{Bayraktar}, \cite{deangelis2022quickestdetectionproblemfalse}, \cite{Ernst} and \cite{Gapeev} for recent developments in this area.
Filtering and detection are closely related: solving a detection problem may involve estimating processes under different probability measures, while filtering problems can sometimes be approached by introducing auxiliary probability measures, as discussed in \cite{Wong1985}.

The study of partially observable hazard rates has been explored in  \cite{CCC1}, \cite{CCC2}, \cite{CCC3}, \cite{frey2012pricing}, among others. In both \cite{frey2012pricing} and \cite{CCC1}, the residual lifetimes of a set of individuals are modelled as conditionally independent, doubly stochastic random times. 
Analogously to our setting, \cite{frey2012pricing} considers a mixed-type information structure incorporating default events and noisy observations of the Markov chain.
The authors derive traded security prices and compute risk-minimizing hedging strategies by applying the innovation approach. In \cite{CCC1}, the authors analyse a local risk-minimization approach in a combined financial-insurance framework. The insurer has complete information about the financial market and the number of surviving policyholders, but cannot observe the underlying hazard rate. This framework is applied to hedge unit-linked life insurance contracts under partial information.

The filtering setting in \cite{CCC2} and \cite{CCC3} involves enlargement of filtrations but under different information structures: continuous observations in \cite{CCC2} and pure discontinuous observations given by the knowledge of the death status of the policyholder in \cite{CCC3}. 
Whereas \cite{CCC2} focuses on hedging unit-linked life insurance contracts under partial information, \cite{CCC3} addresses pricing pure endowment contracts.

In this paper, we study the filtering problem for a hazard rate change-point model. Specifically, we consider a Bayesian setting where the hazard rate associated with an exogenous random time $\tau$, which may represent a firm's default time or a policyholder's time of death, evolves according to a \textit{single jump process} $(\mu_t)_{t\ge0}$. This process exhibits a jump at an unobservable time $\xi$, representing the moment when an event occurs that alters the firm's risk profile or an individual's health status.
The random time $\xi$ takes the value $0$ with probability $\pi$, and conditional on $\xi>0$ it is exponentially distributed with parameter $\lambda>0$.
The goal is to estimate the hazard rate process $(\mu_t)_{t\ge0}$ given the available information. Due to the structure of $(\mu_t)_{t\ge0}$, this reduces to estimating the conditional probability $(\Pi_t)_{t\ge 0}$ of the jump time $\xi$ given the information flow.
Our approach builds on \cite[pp. 308-310]{peskir2006optimal}, where a similar problem is studied. Although they estimate the time $\xi$ based only on noisy observations of $(\mu_t)_{t\ge0}$, we assume that the default or survival status is also observable, leading to mixed-type observations. 
We model the random time $\tau$ as a doubly stochastic random time and employ a progressive enlargement of the filtration to integrate the information on its occurrence into our analysis. By extending the change-of-measure techniques of \cite[pp. 308-310]{peskir2006optimal} to this progressively enlarged framework, we characterize the process $(\Pi_t)_{t\ge 0}$ as the unique strong solution to a stochastic differential equation driven by the \textit{innovation process} enriched with the discontinuous component.
To the best of our knowledge, this is the first time that this approach has been discussed in a progressively enlarged framework.
The second main contribution of this paper lies in the derivation, via a PDE approach, of closed-form expressions for both the conditional survival probability and the conditional density of the random time $\tau$. These analytical results form the foundation for practical applications in pricing credit/mortality-sensitive instruments under partial information. 
We emphasize that we derive the filtering equation starting from a general underlying filtration $\bF$, without requiring that the hazard-rate process $(\mu_t)_{t\ge 0}$ is Markov with respect to $\bF$. Therefore, we cannot apply classical filtering techniques based on the innovation approach, as in \cite{frey2012pricing}, where a similar framework is considered under the assumption that the unobservable process follows a finite-state Markov chain. Moreover, our derivation is developed within a Bayesian framework. The Bayesian setting also directly connects to quickest detection problems, where one seeks to identify the change-point as rapidly and accurately as possible. Consequently, our results may be relevant to a wide range of applications in finance and actuarial science involving a hidden regime switch that affects system dynamics.

The paper is organised as follows.
\begin{itemize}
    \item Section \ref{Sec:modelingfr} formulates the nonlinear filtering problem for a hazard rate change-point model with mixed-type observations: noisy observation of the hazard rate and default/death-related information. 

    \item Section \ref{Sec:Filter} solves the filtering problem via a change-of-measure approach, extending \cite[pp. 308–310]{peskir2006optimal} to a progressively enlarged filtration setting. The filter $(\Pi_t)_{t \ge 0}$ is characterised as the unique strong solution to the filtering equation. We further analyse parameter sensitivity and compare filter behaviour across different information flows.
    
    \item Section \ref{Sec:finapp} applies the filtering results to price credit/mortality-sensitive instruments under partial information. Section \ref{Sec:condprob} derives closed-form expressions for the conditional survival probability and the conditional density of the random time $\tau$. Section \ref{Sec:Creditd} obtains explicit pricing formulas for credit derivatives. In the numerical analysis, we compare prices under partial and full information. Section \ref{Sec:extension} extends the framework to price, under restricted information, instruments contingent on both default/mortality events and exogenous market factors, assuming independence between them.
    
    \item[] Finally, some technical proofs and auxiliary results are collected in Appendix \ref{App:proofs}.
\end{itemize}

\section{Modeling framework} \label{Sec:modelingfr}
We consider a Bayesian framework to model the change in the hazard rate $(\mu_t)_{t\ge0}$ of a firm's default time $\tau$, which undergoes a shift at an unobservable random time $\xi$.
In this setting, we have partial information about  $(\mu_t)_{t\ge0}$ consisting of the knowledge, at any time $t$, of whether the default event has occurred or not, as well as noisy observations of  $(\mu_t)_{t\ge0}$ through a Gaussian additive noise model.
Thus, the available information is modelled by the progressive enlargement of the filtration generated by the observation of a diffusion process $(\hat Y_t)_{t\ge0}$ with the filtration generated by the default indicator process $(H_t)_{t\ge 0}$.

To keep in line with the literature on quickest detection problems, we consider a filtered probability space $(\Omega,\cF, \bF, \prob_\pi)$ where the filtration $\bF = (\cF_t)_{t\ge 0}$ satisfies the usual hypothesis of right-continuity and completeness. The probability measure $\prob_\pi$ has the following structure
\begin{equation} \label{definitionProbPi}
\prob_\pi = \pi \prob^0+(1-\pi)\int_0^\infty \lambda \e^{-\lambda s}\prob^s \ud s, \end{equation}
for $\pi\in[0,1]$ and where $\prob^s$ is the probability measure under which the default time $\tau$ in \eqref{construction} has a hazard rate that shifts from $\mu_\ell>0$ to $\mu_h>0$ at time $s \in [0,+\infty)$   
and the observed process $(\hat Y_t)_{t\ge0}$ undergoes a drift change at the same time $s$ (cf. \eqref{Addnoise}).
Let $\xi$ be a non-negative random variable such that, for $\lambda>0$, \begin{align}
\prob_\pi(\xi=0)=\pi \quad \text{and} \quad \prob_\pi(\xi>t\given \xi>0)=\e^{-\lambda t}, \quad t\geq 0.    
\end{align} 
In our setting, for any $A\in\cF$,  we have that \begin{equation}\label{definitionProbs} \prob^s(A) =\prob_\pi(A\given \xi=s), \quad s\geq 0. \end{equation}

We introduce an $\bF$-adapted \textit{hazard rate} process $(\mu_t)_{t\ge0}$ that describes the dynamic evolution of a firm's default risk over time.
Let $\Delta\mu \coloneqq \mu_h-\mu_\ell$ and define 
\begin{align}
    \mu_t \coloneqq  \mu_\ell + \Delta\mu \ind_{\{t\geq \xi\}}= \mu_\ell \ind_{\{t<\xi\}} + \mu_h \ind_{\{t\geq \xi\}}, \quad t\ge0.
\end{align}
Initially, the firm's hazard rate is determined by the random variable $\mu_0= \mu_\ell \ind_{\{\xi>0\}} + \mu_h \ind_{\{\xi=0\}}$, so that the sigma-field $\cF_0$ is non-trivial as $\sigma\{\mu_0\}\subseteq \cF_0$.
The random variable $\xi$ represents the time at which an event occurs that alters the firm's risk profile. Specifically, when $\xi>0$, the firm's initial hazard rate is $\mu_\ell$, and at time $\xi$ the event takes place, shifting the hazard rate to $\mu_h$. When $\xi=0$, the change in the hazard rate occurs at time $t=0$ and $\mu_t=\mu_h$, for any $t\geq 0$. Notice that since $(\mu_t)_{t\ge 0}$ is $\bF$-adapted, $\xi$ is an $\bF$-stopping time. 
We assume that $\xi$ is not $\cF_0$-measurable so that the exact occurrence time of the change is not known at the initial time $0$, making the filtering problem nontrivial. Notice that $(\mu_t)_{t\ge 0}$ may not be an $(\bF, \prob_\pi)$-Markov process.

We now model the default time of the firm $\tau$ as an exogenous doubly stochastic random time with $\bF$-hazard rate $(\mu_t)_{t \geq 0}$ (cf. \cite[Ch. 8.2.1]{bielecki2013credit}, \cite[Ch. 2.3]{aksamit2017enlargement}).
According to the canonical construction, we assume that there exists, on the space $(\Omega,\cF, \prob_\pi)$, a random variable $\Theta$ independent of $\cF_\infty \coloneqq \vee_{t\geq0}\cF_t$ and exponentially distributed with parameter one. 
We define $\tau$ as the first time when the strictly increasing process $\Lambda_t \coloneqq \int_0^t \mu_s \ud s$ is above the random level $\Theta$, that is
\begin{equation}\label{construction}
\tau \coloneqq \inf \{t\geq 0: \Lambda_t\geq \Theta\}. 
\end{equation}
It is well-known that (\cite[Lemma 7.3.2.1]{jeanblanc2009mathematical}) 
\begin{align}\begin{aligned} \label{Eq:tau_d_F_infty} \prob_\pi(\tau>s|\cF_s)=\prob_\pi(\tau>s|\cF_\infty)=\exp(-\Lambda_s),\quad \text{for all $s\ge 0$}. 
\end{aligned}\end{align}
\begin{remark} \label{Rm:avoidance}
Since $\tau$ is a finite random time and the Az\'ema supermartingale $Z_t \coloneqq \exp(-\Lambda_t)$ is continuous, it follows that $\tau$ avoids $\bF$-stopping times. Specifically, $\prob_\pi(\tau = \sigma < + \infty)=0$ for any $\bF$-stopping time $\sigma$ (cf. \cite[Proposition 3.3]{coculescu2012hazard}).    
\end{remark}
Let $H_t  \coloneqq  \ind_{\{\tau\leq t\}}$ for $t\ge0$ be the default indicator process associated to $\tau$ and define $$\cH_t  \coloneqq  \sigma\{H_u, 0 \leq u \leq t\}, \quad t\ge 0.$$
Since the random time $\tau$ is not an $\bF$-stopping time, we consider a progressive enlargement of the filtration $\bF$. Let $ {\bG}  \coloneqq  (\cG_t)_{t\geq 0}$ denote the progressively enlarged filtration given by
\begin{equation}
 \cG_t  \coloneqq  \cF_t \vee  \mathcal{H}_t \quad t\ge0.
\end{equation}
In particular, ${\bG}$ is the smallest filtration which contains $\bF$ and such that $\tau$ is a ${\bG}$-stopping time and plays the role of the market full information.

\begin{remark}\label{Rm:immersion} As an immediate consequence of the canonical construction, see \eqref{construction}, we get that
the so-called \textit{Immersion property} between filtrations $\bF$ and $\bG$ holds, i.e. every $\bF$-(local) martingale is also a $\bG$-(local) martingale, see \cite{bremaud1978changes} or \cite{aksamit2017enlargement}. Moreover, the process 
\begin{equation}
H_t- \int_0^{\tau \wedge  t} \mu_s \ud s = H_t- \int_0^t (1-H_s)\mu_s \ud s , \quad t \geq 0,\end{equation}
is a $( \bG,\prob_\pi)$-martingale and $\tau$ is a totally inaccessible $\bG$-stopping time. 
\end{remark}

\begin{remark}
Although we explicitly discuss a firm's default risk, the framework can equally be applied to individual mortality risk. In this interpretation, the random time $\tau$ denotes the time of death, and the process $(\mu_t)_{t\ge0}$ represents the mortality force process that describes the change in an individual's mortality risk profile occurring at a random time $\xi$.
\end{remark}

We suppose that the change-point $\xi$ is unobservable and therefore we have restricted information on the hazard rate $(\mu_t)_{t\geq 0}$.
Similar to \cite{frey2012pricing}, we assume partial information of the type 
\begin{equation} \label{Addnoise} \ud \hat{Y}_t=\mu_t \ud t + \beta \ud B_t, \quad \hat{Y}_0=0,\quad \end{equation}
where $\beta>0$ and $(B_t)_{t\geq0}$ is an $\bF$-adapted standard Brownian motion independent of $\xi$. Let $Y_t \coloneqq \hat{Y}_t-\mu_\ell t$, then
\begin{equation} \label{Eq:dY} \ud Y_t=(\mu_h-\mu_\ell)\ind_{\{t\geq\xi\}}\ud t+\beta \ud B_t, \quad Y_0=0, \end{equation}
hence
\begin{equation} Y_t=\begin{cases}
    \beta B_t &\text{if } t<\xi\\
    \Delta\mu (t-\xi) + \beta B_t &\text{if } t\geq\xi.
\end{cases} \end{equation}
Let $\bF^{Y} \coloneqq ( \cF^Y_t)_{t\geq 0}$ denote the filtration generated by the sample paths of the process $(Y_t)_{t\geq 0}$, i.e. each $\sigma$-field is given by  \begin{equation} \cF^Y_t \coloneqq \sigma\{Y_s,s\leq t\}. \end{equation}
Notice that the filtration generated by the process $(\hat Y_t)_{t\geq 0}$ coincides with $\bF^{Y}$. 

The information flow available to the individual is represented by the progressively enlarged filtration $\bG^Y=(\cG^Y_t)_{t \geq 0}$ defined as
\begin{equation}  \cG^Y_t  \coloneqq  \cF^Y_t \vee  \mathcal{H}_t \subset \cF_t \vee  \mathcal{H}_t = \cG_t, \quad t\geq 0.\end{equation}
The filtration $\bG^Y$ captures two key sources of information. First, it includes the observation of the process $(\hat{Y})_{t \geq 0}$, which provides a Gaussian additive noisy observation of the firm's hazard rate $(\mu_t)_{t \geq 0}$. Second, it incorporates default-related information through the process $(H_t)_{t\geq 0}$, which indicates whether a default has occurred ($H_t = 1$) or not ($H_t = 0$) up to time $t$.

The goal is to obtain the best estimate of $(\mu_t)_{t\geq 0}$ given the available information. According to the filtering literature, this estimate is provided by the process $(\hat \mu_t = \media_\pi[ \mu_t \given  \cG^Y_t])_{t\geq 0}$, which in our setting can be written as 
\begin{equation} \label{G-estimate} \hat \mu_t = \media_\pi[ \mu_t \given  \cG^Y_t] = \mu_\ell (1- \Pi_t) + \mu_h  \Pi_t =  \mu_\ell + \Delta \mu \Pi_t, \quad t \geq 0,\end{equation}
where the filter \begin{equation} \label{DefinitionPi} \Pi_t \coloneqq  \prob_\pi(\xi\leq t \given \cG^Y_t), \quad t \geq 0,\end{equation} provides the conditional distribution of $\xi$ given $\cG^Y_t$, for any $t \geq 0$.
According to \cite[Lemma 1.1]{kurtz1988unique}, there exists a càdlàg version of the processes $(\hat{\mu}_t)_{t \geq 0}$ and $(\Pi_t)_{t \geq 0}$. As usual, for a càdlàg process $(R_t)_{t\ge0}$, we denote by $(R_{t-})_{t\ge0}$ its left-continuous version, that is $R_{t-}  \coloneqq  \lim_{s\to {t-}} R_s$.

\section{The filtering problem} \label{Sec:Filter}
In this section, we characterize the process $(\Pi_t)_{t\ge 0}$ as the unique strong solution to a stochastic differential equation (SDE), referred to as the filtering equation, by extending the change-of-measure techniques of \cite[pp. 308-310]{peskir2006optimal} to an enlarged filtration framework. 
The filtering equation is then employed in Section \ref{Sec:finapp} to compute conditional survival probabilities
and to price credit/mortality-sensitive contracts under partial information.

We first establish key preliminary results in Propositions \ref{Prop:Probabilities}--\ref{App:hatBhatm}, whose proofs are provided in Appendix \ref{App:proofs}. We start by defining $\prob^\infty$ as the probability measure under which the default time $\tau$ has a constant hazard rate equal to $\mu_\ell$ and the process $(Y_t)_{t\ge 0}$ undergoes no change in drift, i.e., $Y_t=\beta B_t$, $t\geq 0$.
\begin{proposition} \label{Prop:Probabilities}
For any $t\ge0$, the following equalities hold $\prob_\pi$-a.s.:
\begin{equation} \label{Probabilities_and_Pi}
\Pi_t=\pi\frac{\ud \prob^0}{\ud \prob_\pi}\Big|_{ \cG^Y_t}+(1-\pi)\int_0^t\frac{\ud \prob^s}{\ud \prob_\pi}\Big|_{ \cG^Y_t}\lambda \e^{-\lambda s}\ud s
\end{equation} 
and
\begin{equation} \label{Probabilities_and_1-Pi}
1-\Pi_t=(1-\pi)\e^{-\lambda t}\frac{\ud \prob^t}{\ud \prob_\pi}\Big|_{ \cG^Y_t}=(1-\pi)\e^{-\lambda t}\frac{\ud \prob^\infty}{\ud \prob_\pi}\Big|_{ \cG^Y_t}.
\end{equation} 
\end{proposition}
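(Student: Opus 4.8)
The plan is to verify both identities by testing against an arbitrary bounded $\cG^Y_t$-measurable random variable $G$ and invoking the defining property of conditional expectation, exploiting the mixture representation \eqref{definitionProbPi} of $\prob_\pi$ together with the fact that, under $\prob^s$, the change-point is deterministic and equal to $s$. Concretely, fix $t\ge 0$. Since $\Pi_t=\media_\pi[\ind_{\{\xi\le t\}}\given\cG^Y_t]$ by \eqref{DefinitionPi} and the right-hand side of \eqref{Probabilities_and_Pi} is $\cG^Y_t$-measurable, it suffices to show that
\[
\media_\pi\big[G\,\ind_{\{\xi\le t\}}\big]=\media_\pi\Big[G\Big(\pi\,\tfrac{\ud\prob^0}{\ud\prob_\pi}\big|_{\cG^Y_t}+(1-\pi)\int_0^t\tfrac{\ud\prob^s}{\ud\prob_\pi}\big|_{\cG^Y_t}\,\lambda\e^{-\lambda s}\,\ud s\Big)\Big]
\]
for every such $G$, together with the companion identity carrying $\ind_{\{\xi>t\}}$ on the left and $(1-\pi)\e^{-\lambda t}\,\tfrac{\ud\prob^\infty}{\ud\prob_\pi}\big|_{\cG^Y_t}$ (equivalently $(1-\pi)\e^{-\lambda t}\,\tfrac{\ud\prob^t}{\ud\prob_\pi}\big|_{\cG^Y_t}$) on the right. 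Here all Radon--Nikodym derivatives are those of the restrictions of the respective measures to the sub-$\sigma$-field $\cG^Y_t$, which is what makes them well defined even though the unrestricted measures may be mutually singular.

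First I would insert \eqref{definitionProbPi} into $\media_\pi[G\,\ind_{\{\xi\le t\}}]$ and use the definition \eqref{definitionProbs}: under $\prob^0$ one has $\xi=0$ a.s. and under $\prob^s$ one has $\xi=s$ a.s., so the indicator freezes, $\ind_{\{\xi\le t\}}=1$ $\prob^0$-a.s. and $\ind_{\{\xi\le t\}}=\ind_{\{s\le t\}}$ $\prob^s$-a.s. This collapses the $\ud s$-integral to $[0,t]$ and gives
\[
\media_\pi\big[G\,\ind_{\{\xi\le t\}}\big]=\pi\,\media_{\prob^0}[G]+(1-\pi)\int_0^t\lambda\e^{-\lambda s}\,\media_{\prob^s}[G]\,\ud s,\qquad \media_\pi\big[G\,\ind_{\{\xi>t\}}\big]=(1-\pi)\int_t^\infty\lambda\e^{-\lambda s}\,\media_{\prob^s}[G]\,\ud s.
\]

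Next, since $G$ is $\cG^Y_t$-measurable one has $\media_{\prob^s}[G]=\media_\pi\big[G\,\tfrac{\ud\prob^s}{\ud\prob_\pi}\big|_{\cG^Y_t}\big]$, and analogously with $\prob^0$, $\prob^t$, $\prob^\infty$ in place of $\prob^s$; substituting this into the decomposition of $\media_\pi[G\,\ind_{\{\xi\le t\}}]$ obtained above and using a jointly measurable version of $(s,\omega)\mapsto\tfrac{\ud\prob^s}{\ud\prob_\pi}\big|_{\cG^Y_t}(\omega)$ to apply Fubini's theorem reproduces precisely the testing identity of the first display, so that \eqref{Probabilities_and_Pi} follows $\prob_\pi$-a.s. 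For the companion identity I would use the key observation that for $s\ge t$ the change-point has not occurred on $[0,t]$, hence $\mu_u=\mu_1$ for $u\le t$; therefore by \eqref{Eq:dY} $Y_u=\beta B_u$ and by Remark \ref{HunderPs} $H_u=\ind_{\{\mu_1 u\ge\Theta\}}$ for $u\le t$, $\prob^s$-a.s. Consequently the restriction of $\prob^s$ to $\cG^Y_t$ does not depend on $s$ for $s\ge t$ and coincides with the restriction of $\prob^\infty$ (the law of $(B_u,\ind_{\{\mu_1 u\ge\Theta\}})_{u\ge 0}$) and with that of $\prob^t$, so $\media_{\prob^s}[G]=\media_{\prob^\infty}[G]=\media_{\prob^t}[G]$ for all $s\ge t$, whence
\[
\media_\pi\big[G\,\ind_{\{\xi>t\}}\big]=(1-\pi)\,\media_{\prob^\infty}[G]\int_t^\infty\lambda\e^{-\lambda s}\,\ud s=(1-\pi)\e^{-\lambda t}\,\media_\pi\big[G\,\tfrac{\ud\prob^\infty}{\ud\prob_\pi}\big|_{\cG^Y_t}\big],
\]
and the same with $\prob^t$ replacing $\prob^\infty$; this yields \eqref{Probabilities_and_1-Pi}.

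The algebra above is essentially bookkeeping — conditioning on $\{\xi=s\}$ freezes the change-point, and for $s\ge t$ it leaves the observed paths $(Y_u,H_u)_{u\le t}$ untouched — so I expect the genuine difficulty to lie in the measure-theoretic underpinning of the change of measure: exhibiting a version of $s\mapsto\tfrac{\ud\prob^s}{\ud\prob_\pi}\big|_{\cG^Y_t}$ that is jointly measurable in $(s,\omega)$, so that both the integral in \eqref{Probabilities_and_Pi} is meaningful and the Fubini interchange is legitimate, and verifying that $\prob^s$ restricted to $\cG^Y_t$ is absolutely continuous with respect to $\prob_\pi$ on $\cG^Y_t$ for each $s$. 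Both points are conveniently handled by exploiting that $\bG^Y$ is countably generated and, should an explicit density be desired, by a Girsanov computation of the density on $\cF^Y_t$ combined with the conditional law of $(H_u)_{u\le t}$ given $\{\xi=s\}$.
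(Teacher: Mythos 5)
Your proposal is correct and follows essentially the same route as the paper's proof: testing the identities against $\cG^Y_t$-measurable random variables, exploiting the mixture structure \eqref{definitionProbPi} together with the fact that $\xi$ is frozen under $\prob^0$ and $\prob^s$, and identifying $\prob^s\big|_{\cG^Y_t}$ with $\prob^\infty\big|_{\cG^Y_t}$ for $s\ge t$ because $(Y_u,H_u)_{u\le t}=(\beta B_u,\ind_{\{\mu_1 u\ge\Theta\}})_{u\le t}$ $\prob^s$-a.s.\ (the paper carries this out on the cylinder sets generating $\cG^Y_t$, exactly as you indicate). Your explicit attention to joint measurability of $s\mapsto\frac{\ud\prob^s}{\ud\prob_\pi}\big|_{\cG^Y_t}$ for the Fubini step is a point the paper leaves implicit, but it does not change the argument.
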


It follows from \eqref{Probabilities_and_1-Pi} that, for any $\pi \in [0,1)$, the process $(\Pi_t)_{t \ge 0}$ remains in $[0,1)$ $\prob_\pi$-a.s.
For $\pi\in[0,1)$, we introduce an auxiliary process
\begin{equation} \label{Eq:varphi}
    \varphi_t  \coloneqq  \frac{\Pi_t}{1 - \Pi_t}, \quad t\ge0,
\end{equation}
and, in the following proposition, we derive the stochastic differential equation which it solves.

\begin{proposition}\label{dvarphi} Let $\pi\in[0,1)$, the process $(\varphi_t)_{t\ge 0}$ is solution to
    \begin{align}
    \begin{aligned} \label{dvarphieq}
        \ud \varphi_t & = \lambda( 1 + \varphi_t) \ud t +  \varphi_{t^-} \ud M_t, \quad  \varphi_0  =  \frac{\pi}{1 - \pi}
    \end{aligned}
\end{align}
where $(M_t)_{t\geq 0}$ is the $( \bG^Y,\prob^\infty)$-martingale given by
\begin{align}
\label{dM} \ud M_t \coloneqq  \frac{\Delta \mu}{\beta^2}  \ud Y_t + \frac{\Delta \mu}{\mu_\ell} \big (\ud H_t  -\mu_\ell(1-H_{t^-})  \ud t).    \end{align}

\end{proposition}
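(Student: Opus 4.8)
The plan is to follow the change-of-measure route of \cite[pp.~308--310]{peskir2006optimal}, now carried out conditionally on the progressively enlarged $\sigma$-fields $\cG^Y_t$: I would express $\varphi_t$ through likelihood ratios relative to the reference measure $\prob^\infty$, identify the dynamics of those ratios, and then differentiate.

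\emph{Step 1 (unnormalised filter).} For $\pi\in[0,1)$ one has $1-\Pi_t>0$ $\prob_\pi$-a.s.\ (as noted after Proposition~\ref{Prop:Probabilities}), so \eqref{Probabilities_and_1-Pi} shows that $\prob_\pi$ and $\prob^\infty$ are equivalent on $\cG^Y_t$; since moreover $\prob^s\ll\prob_\pi$, the ratio $L^s_t:=\frac{\ud\prob^s}{\ud\prob^\infty}\big|_{\cG^Y_t}$ is well defined for all $s,t$. Combining \eqref{Probabilities_and_Pi} and \eqref{Probabilities_and_1-Pi} via the chain rule $\frac{\ud\prob^s}{\ud\prob_\pi}\frac{\ud\prob_\pi}{\ud\prob^\infty}=\frac{\ud\prob^s}{\ud\prob^\infty}$ on $\cG^Y_t$ yields
\begin{equation*}
U_t:=(1-\pi)\e^{-\lambda t}\varphi_t=\pi L^0_t+(1-\pi)\int_0^t\lambda\e^{-\lambda s}L^s_t\,\ud s.
\end{equation*}

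\emph{Step 2 (dynamics of the likelihood ratios).} Under $\prob^\infty$ the observation $Y$ is $\beta$ times a $(\bG^Y,\prob^\infty)$-Brownian motion and the default time is independent of $\bF^Y$ with $(\bG^Y,\prob^\infty)$-intensity $\mu_1(1-H_{t^-})$ (by the definition of $\prob^\infty$ together with Remark~\ref{HunderPs}); in particular $M$ in \eqref{dM} is a square-integrable $(\bG^Y,\prob^\infty)$-martingale. Passing from $\prob^\infty$ to $\prob^s$ only replaces, on $[s,t]$, the null drift of $Y$ by $\Delta\mu$ and the intensity of $H$ by $\mu_2(1-H_{t^-})$; since these modifications act respectively on the continuous and on the purely discontinuous component, Girsanov's theorem and the change of measure for point processes combine into $L^s_t=\mathcal{E}(M)_t/\mathcal{E}(M)_s$ (with $\mathcal{E}(M)$ the stochastic exponential of $M$), i.e.
\begin{equation*}
L^s_t=\exp\!\Big(\tfrac{\Delta\mu}{\beta^2}(Y_t-Y_s)-\tfrac{(\Delta\mu)^2}{2\beta^2}(t-s)-\Delta\mu\!\int_s^t(1-H_u)\,\ud u\Big)\Big(\tfrac{\mu_2}{\mu_1}\Big)^{H_t-H_s},\qquad t\ge s,
\end{equation*}
so that $L^s_s=1$ and $\ud L^s_t=L^s_{t^-}\,\ud M_t$ for $t\ge s$, and likewise $\ud L^0_t=L^0_{t^-}\,\ud M_t$ with $L^0_0=1$.

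\emph{Step 3 (differentiation).} Writing $L^s_t=1+\int_s^tL^s_{u^-}\,\ud M_u$ and using the stochastic Fubini theorem to interchange $\ud s$ and $\ud M_u$ in the integral term of $U_t$ (the boundary term producing $\lambda\e^{-\lambda t}L^t_t\,\ud t=\lambda\e^{-\lambda t}\,\ud t$), one finds $\ud U_t=(1-\pi)\lambda\e^{-\lambda t}\,\ud t+U_{t^-}\,\ud M_t$. Since $\varphi_t=\frac{1}{1-\pi}\e^{\lambda t}U_t$ with $t\mapsto\e^{\lambda t}$ continuous and of finite variation, the product rule gives
\begin{equation*}
\ud\varphi_t=\frac{\lambda}{1-\pi}\e^{\lambda t}U_t\,\ud t+\frac{1}{1-\pi}\e^{\lambda t}\,\ud U_t=\lambda\varphi_t\,\ud t+\lambda\,\ud t+\varphi_{t^-}\,\ud M_t,
\end{equation*}
which is \eqref{dvarphieq}; the initial value $\varphi_0=\pi/(1-\pi)$ is consistent with $U_0=\pi$.

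\emph{Main obstacle.} The delicate point is Step 2: obtaining the explicit form of $\frac{\ud\prob^s}{\ud\prob^\infty}\big|_{\cG^Y_t}$ by running Girsanov on the Brownian part and the intensity change for the doubly stochastic time \emph{simultaneously and inside the progressively enlarged filtration} $\bG^Y$, together with the verification that $L^s_s=1$. A secondary technical issue is the stochastic Fubini interchange in Step 3, which is justified by the pathwise local boundedness of $(s,u)\mapsto L^s_u$, itself a consequence of $\mathcal{E}(M)$ being càdlàg and strictly positive (the latter because $\Delta M=\Delta\mu/\mu_1>-1$).
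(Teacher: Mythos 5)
Your proposal is correct and follows essentially the same route as the paper: both express $\varphi_t$ through the likelihood ratios $\frac{\ud\prob^s}{\ud\prob^\infty}\big|_{\cG^Y_t}$, identify them via Girsanov's theorem in the enlarged filtration $\bG^Y$ as ratios of the Dol\'eans-Dade exponential of the same martingale $M$ from \eqref{dM}, and then differentiate. The only (cosmetic) difference is in the last step, where you interchange $\ud s$ and $\ud M_u$ by stochastic Fubini in the mixture representation, while the paper factors out $Z_t$ using $Z^s_t=Z_t/Z_s$ and applies the product rule to $\e^{\lambda t}Z_t\big(\tfrac{\pi}{1-\pi}+\lambda\int_0^t\e^{-\lambda s}Z_s^{-1}\ud s\big)$; both yield \eqref{dvarphieq}.
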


We introduce the \textit{Innovation process}, consisting of the pair $\big( (\hat B_t)_{t\ge0},  (\hat m_t)_{t\ge0} \big)$ given by
\begin{equation}\label{innovationB}  \hat  B_t \coloneqq \frac{1}{\beta}\bigg( Y_t - \Delta\mu\int_0^t \Pi_s \ud s \bigg), \quad t \geq 0,\end{equation}
\begin{equation} \label{innovationm} \hat m_t \coloneqq H_t  - \int_0^t (1 - H_{s^-})\hat \mu_{s^-} \ud s , \quad t \geq 0,\end{equation}
with $(\hat \mu_t)_{t\ge 0}$ defined in \eqref{G-estimate}. 

\begin{proposition} \label{App:hatBhatm}
The process $(\hat B_t)_{t\ge0}$ defined in \eqref{innovationB} is a $(\bG^Y,\prob_\pi)$-Brownian motion and $(\hat m_t)_{t\ge0}$, defined in \eqref{innovationm}, is a $(\bG^Y,\prob_\pi)$-martingale. 
\end{proposition}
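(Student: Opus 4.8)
The plan is to verify the two martingale claims separately, working throughout on a fixed but arbitrary finite horizon $t$ and exploiting the already-established representation $\varphi_t = \e^{\lambda t} Z_t \big( \tfrac{\pi}{1-\pi} + \lambda \int_0^t \e^{-\lambda s} Z_s^{-1} \ud s\big)$ together with $\Pi_t = \varphi_t/(1+\varphi_t)$ and $1-\Pi_t = (1-\pi)\e^{-\lambda t}\, \ud\prob^\infty/\ud\prob_\pi|_{\cG^Y_t}$ from Proposition \ref{Prop:Probabilities}. For $(\hat B_t)_{t\ge0}$, since $\ud Y_t = \beta \ud B_t$ under $\prob^\infty$, the process $\tfrac1\beta Y_t$ is a $(\bG^Y,\prob^\infty)$-Brownian motion, and the density process of $\prob_\pi$ restricted to $\cG^Y_t$ relative to $\prob^\infty$ is, by \eqref{Probabilities_and_1-Pi}, proportional to $1/(1-\Pi_t) = \e^{\lambda t} Z_t (\text{const})$; more precisely one reads off $\ud\prob_\pi/\ud\prob^\infty|_{\cG^Y_t} = (1-\pi)\e^{-\lambda t}/(1-\Pi_t)$ up to the normalizing treatment of the atom at $\pi$. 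I would compute the Girsanov drift this density process induces on $\tfrac1\beta Y_t$: its logarithmic derivative with respect to the continuous $\prob^\infty$-martingale part of $Y$ is $\tfrac{\Delta\mu}{\beta^2}\Pi_t$ (this is exactly the $\ud Y_t$-coefficient of $M$ evaluated along the filter, which follows from differentiating $\log(1-\Pi_t)^{-1}$ and matching the continuous martingale part), so under $\prob_\pi$ the process $\tfrac1\beta Y_t - \tfrac{\Delta\mu}{\beta}\int_0^t \Pi_s\,\ud s = \hat B_t$ is a Brownian motion. Equivalently and perhaps more cleanly, I would verify directly that $\hat B$ is a continuous $(\bG^Y,\prob_\pi)$-local martingale with $\langle \hat B\rangle_t = t$ and invoke L\'evy's characterization; the local-martingale property is the assertion that $\media_\pi[\hat B_t - \hat B_u \mid \cG^Y_u]=0$, which reduces to $\media_\pi[Y_t - Y_u \mid \cG^Y_u] = \Delta\mu \int_u^t \media_\pi[\ind_{\{\xi\le s\}}\mid \cG^Y_u]\ud s$ after using the tower property and the definition of $\hat\mu$; the quadratic variation is immediate since $\hat B$ and $Y/\beta$ differ by a finite-variation term.

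For $(\hat m_t)_{t\ge0}$, the claim is that $\hat\mu$ restricted to $\{t<\tau\}$ is the $(\bG^Y,\prob_\pi)$-intensity of $H$. By Remark \ref{Rm:immersion}, $H_t - \int_0^{\tau\wedge t}\mu_s\ud s$ is a $(\bG,\prob_\pi)$-martingale, hence a $\bG$-special semimartingale; the strategy is to project this onto the subfiltration $\bG^Y \subseteq \bG$ using the optional (dual predictable) projection, and to identify the $\bG^Y$-compensator of $H$ with $\int_0^t (1-H_{s^-})\hat\mu_{s^-}\ud s$. Concretely, for $u\le t$ and a bounded $\cG^Y_u$-measurable $Y$-adapted test functional, I would write $\media_\pi[(H_t - H_u)\,\mathbf{1}_A]$ for $A\in\cG^Y_u$, use the $\bG$-compensator to replace $H_t - H_u$ by $\int_u^t (1-H_s)\mu_s\,\ud s$ inside the expectation, and then insert conditional expectations given $\cG^Y_s$ under the integral (justified by Fubini and the tower property), obtaining $\media_\pi\big[\int_u^t \media_\pi[(1-H_s)\mu_s\mid \cG^Y_s]\,\ud s\,\mathbf 1_A\big]$. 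It then remains to show $\media_\pi[(1-H_s)\mu_s\mid \cG^Y_s] = (1-H_s)\hat\mu_s$; this is where the doubly-stochastic structure enters, via $\media_\pi[(1-H_s)g \mid \cG^Y_s] = \media_\pi[(1-H_s)\mid \cG^Y_s]\,\media_\pi[g\mid \cF^Y_s \vee \{\tau>s\}]$-type identities or, more directly, the "key lemma" for progressive enlargement which gives $\media_\pi[(1-H_s)X_s\mid\cG^Y_s] = (1-H_s)\,\media_\pi[X_s Z_s \mid \cF^Y_s]/\media_\pi[Z_s\mid\cF^Y_s]$ with $Z_s = \e^{-\Lambda_s}$ the Az\'ema supermartingale; applying this with $X_s = \mu_s$ and separately with $X_s = 1$ and taking the ratio yields $\media_\pi[\mu_s\mid \cG^Y_s,\tau>s] = \hat\mu_s$ up to the left-limit adjustments (which are harmless since $\tau$ avoids $\bG$-stopping times and $\hat\mu$ is càdlàg). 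Replacing $s$ by $s^-$ in the integrand is legitimate since the two versions differ on a Lebesgue-null set of times $\prob_\pi$-a.s.

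The main obstacle I anticipate is the bookkeeping around the initial atom: $\cF^Y_0$ and $\cG^Y_0$ are nontrivial (the event $\{\xi=0\}$ has mass $\pi$), and $\prob^\infty$ is the law of $(B,\ind_{\{\mu_1 t\ge\Theta\}})$, so the Radon–Nikodym derivative $\ud\prob_\pi/\ud\prob^\infty|_{\cG^Y_t}$ is only defined after handling the mixture \eqref{definitionProbPi} and, strictly, is a genuine density only for $\pi\in[0,1)$ — which is exactly the standing assumption under which $\Pi_t<1$ and $\varphi_t$ is well-defined. For $\pi\in[0,1)$ the Girsanov argument for $\hat B$ goes through verbatim using the density process $D_t := (1-\pi)\e^{-\lambda t}/(1-\Pi_t)$, once one checks $D$ is a true $(\bG^Y,\prob^\infty)$-martingale of expectation one (which follows from $\media_\pi[1] = 1$ read through \eqref{Probabilities_and_1-Pi}). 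The secondary technical point is justifying the interchange of conditional expectation and time-integral and the passage to left limits in the $\hat m$ computation; both are routine given càdlàg versions (guaranteed by \cite[Lemma 1.1]{kurtz1988unique}) and the avoidance property (Remark \ref{Rm:avoidance}), but they need to be stated. I would structure the write-up as: (1) the density process $D_t$ and its martingale property; (2) Girsanov to conclude $\hat B$ is a $\prob_\pi$-Brownian motion; (3) the $\bG\to\bG^Y$ projection of the $\bG$-compensator of $H$, with the key-lemma identification $\media_\pi[\mu_s\mid\cG^Y_s, \tau>s]=\hat\mu_s$; (4) the left-limit clean-up.
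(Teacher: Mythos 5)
Your proposal is correct, and for the Brownian part it takes a genuinely different primary route from the paper. The paper proves that $(\hat B_t)_{t\ge0}$ is a $(\bG^Y,\prob_\pi)$-Brownian motion by the direct computation of $\media_\pi[\hat B_t\mid \cG^Y_s]$ plus L\'evy's characterization — the increment $\media_\pi[B_t-B_s\mid\cG^Y_s]$ vanishes because, by the immersion property of Remark \ref{Rm:immersion}, $B$ is a $(\bG,\prob_\pi)$-Brownian motion and $\cG^Y_s\subset\cG_s$ — which is exactly your ``fallback'' argument (there you should state explicitly that this step uses immersion). Your primary Girsanov route, changing measure from $\prob^\infty$ with the density process $D_t=(1-\pi)\e^{-\lambda t}/(1-\Pi_t)$, is also sound and has the appeal of reusing the machinery of Proposition \ref{Prop:Probabilities} and the decomposition of $M$ in \eqref{dM}; its cost is that you must verify $D$ is a true $(\bG^Y,\prob^\infty)$-martingale and restrict to $\pi\in[0,1)$ (the case $\pi=1$ is trivial since then $\Pi\equiv1$ and $\hat B=B$), whereas the paper's computation needs no density at all and covers every $\pi$. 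For $(\hat m_t)_{t\ge0}$ your argument is essentially the paper's: you project the $\bG$-compensator of $H$ onto $\bG^Y$ by hand (Fubini plus tower on test sets $A\in\cG^Y_u$), while the paper takes the optional projection $\hat M_t=\media_\pi[M_t\mid\cG^Y_t]$ of the $(\bG,\prob_\pi)$-martingale $M_t=H_t-\int_0^t(1-H_s)\mu_s\,\ud s$ and cites \cite[Lemma 8.4]{liptser2013statistics} for the interchange you perform explicitly. The one point you overcomplicate is the identification $\media_\pi[(1-H_s)\mu_s\mid\cG^Y_s]=(1-H_s)\hat\mu_s$: since $\cH_s\subset\cG^Y_s$, the factor $1-H_s$ is $\cG^Y_s$-measurable and pulls out immediately, so no key lemma or Az\'ema-supermartingale ratio is needed — that detour merely re-expresses $\hat\mu_s$ on $\{\tau>s\}$. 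Your left-limit clean-up via countably many discontinuities and the avoidance property is fine and matches what the paper needs implicitly.
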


The following theorem is the main result of the section. We derive the stochastic differential equation of Kushner-Stratonovich type that the filter $(\Pi_t)_{t\geq 0}$ solves.
\begin{theorem} \label{Th:SDEFilter}
For $\pi\in[0,1]$ and under $\prob_\pi$, the process $(\Pi_t)_{t\geq 0}$ solves the following stochastic differential equation    
\begin{align} \label{SDEFilterG} \ud \Pi_t = \lambda(1-\Pi_t)\ud t+\frac{\Delta\mu}{\beta}\Pi_t(1-\Pi_t)\ud \hat{B}_t +  \frac{\Delta\mu (1-\Pi_{t^-})\Pi_{t^-}}  {\mu_\ell + \Delta \mu \Pi_{t^-}}  \ud \hat m_t,\end{align}
with initial condition $\Pi_0=\pi$.
\end{theorem}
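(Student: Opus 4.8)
The plan is to obtain the SDE for $(\Pi_t)_{t\ge0}$ by transforming the SDE for $(\varphi_t)_{t\ge0}$ obtained in Proposition \ref{dvarphi}, using the relation $\Pi_t = \varphi_t/(1+\varphi_t)$ and rewriting the driving $(\bG^Y,\prob^\infty)$-martingale $(M_t)_{t\ge0}$ in terms of the innovation pair $\big((\hat B_t)_{t\ge0},(\hat m_t)_{t\ge0}\big)$. First I would apply the It\^o formula to $\Pi_t = f(\varphi_t)$ with $f(x) = x/(1+x)$, noting that the continuous-martingale part of $\varphi$ has quadratic variation $\ud\langle \varphi^c\rangle_t = \varphi_t^2 (\Delta\mu)^2/\beta^4 \,\ud\langle Y\rangle_t = \varphi_t^2 (\Delta\mu)^2/\beta^2\,\ud t$, while the jump part comes from $\ud H$. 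Since $f'(x) = 1/(1+x)^2$ and $f''(x) = -2/(1+x)^3$, the continuous part produces a drift correction $\tfrac12 f''(\varphi_t)\,\ud\langle\varphi^c\rangle_t$; the jump contribution must be handled separately by writing $\Delta\Pi_t = f(\varphi_{t^-} + \Delta\varphi_t) - f(\varphi_{t^-})$ exactly (not via Taylor expansion), since jumps of $H$ are of size one. Collecting terms and repeatedly using $1 - \Pi_t = 1/(1+\varphi_t)$ and $\Pi_t(1-\Pi_t) = \varphi_t/(1+\varphi_t)^2$ should yield a preliminary SDE driven by $\ud Y_t$ and $\ud H_t - \mu_1(1-H_{t^-})\,\ud t$.

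The second step is the change of driving martingale. From \eqref{innovationB}, $\ud Y_t = \beta\,\ud\hat B_t + \Delta\mu\,\Pi_t\,\ud t$, and from \eqref{innovationm}, $\ud H_t = \ud\hat m_t + (1-H_{t^-})\hat\mu_{t^-}\,\ud t$ with $\hat\mu_{t^-} = \mu_1 + \Delta\mu\,\Pi_{t^-}$. Substituting these into the preliminary SDE converts the $\prob^\infty$-martingale terms into $\prob_\pi$-martingale terms plus extra drift. I expect the coefficient of $\ud\hat B_t$ to come out as $\tfrac{\Delta\mu}{\beta}\Pi_t(1-\Pi_t)$ directly. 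For the jump term, the coefficient multiplying $\ud H_t$ before substitution should be of the form $c(\Pi_{t^-})$; after replacing $\ud H_t$, the piece $c(\Pi_{t^-})(1-H_{t^-})\hat\mu_{t^-}\,\ud t$ must combine with the continuous-in-$t$ drift so that, on $\{H_{t^-}=1\}$ (i.e. after default) the filter's SDE degenerates correctly, and on $\{H_{t^-}=0\}$ all spurious drift cancels, leaving only $\lambda(1-\Pi_t)\,\ud t$. Verifying this cancellation is the crux: one must check that the $\ud t$ terms arising from (a) the $\lambda(1+\varphi_t)\,\ud t$ drift of $\varphi$, (b) the It\^o second-order correction, (c) the compensator of the jump part of $M$, and (d) the compensator absorbed when passing from $\ud H_t$ to $\ud\hat m_t$, sum exactly to $\lambda(1-\Pi_t)\,\ud t$.

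The main obstacle I anticipate is bookkeeping the jump term consistently, because $(M_t)_{t\ge0}$ itself already contains a compensated jump component $\tfrac{\Delta\mu}{\mu_1}(\ud H_t - \mu_1(1-H_{t^-})\,\ud t)$, so $\varphi$ jumps by $\Delta\varphi_t = \varphi_{t^-}\tfrac{\Delta\mu}{\mu_1}\Delta H_t$, and then $\Pi$ jumps by $f(\varphi_{t^-}(1+\tfrac{\Delta\mu}{\mu_1})) - f(\varphi_{t^-})$ at a default. Simplifying this exact jump size into the asserted coefficient $\tfrac{\Delta\mu(1-\Pi_{t^-})\Pi_{t^-}}{\mu_1+\Delta\mu\Pi_{t^-}}$ requires a careful algebraic identity: writing $\varphi_{t^-} = \Pi_{t^-}/(1-\Pi_{t^-})$ and computing $f\!\big(\tfrac{\mu_2}{\mu_1}\varphi_{t^-}\big) - \Pi_{t^-}$ should give exactly $\tfrac{\Delta\mu\,\Pi_{t^-}(1-\Pi_{t^-})}{\mu_1 + \Delta\mu\,\Pi_{t^-}}$, after clearing denominators. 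I would do this computation explicitly as a lemma-style side calculation. Once the jump coefficient is pinned down and the drift cancellation is verified, the boundary cases $\pi=0$ and $\pi=1$ follow by continuity or by direct inspection (for $\pi=1$ one has $\Pi_t\equiv 1$ and $\hat\mu_t\equiv\mu_2$, and \eqref{SDEFilterG} reduces to $\ud\Pi_t = 0$, consistent with $\Delta H_t$ having coefficient zero), and the initial condition $\Pi_0 = \prob_\pi(\xi\le 0) = \prob_\pi(\xi=0) = \pi$ is immediate from \eqref{DefinitionPi}.
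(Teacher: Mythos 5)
Your proposal is correct and follows essentially the same route as the paper's proof: apply It\^o's formula to $\Pi_t=f(\varphi_t)$ with $f(x)=x/(1+x)$, treat the jump at $\tau$ exactly (the paper likewise computes $f\big(\varphi_{\tau^-}(1+\tfrac{\Delta\mu}{\mu_1})\big)-\Pi_{\tau^-}=\tfrac{\Delta\mu\,\Pi_{\tau^-}(1-\Pi_{\tau^-})}{\mu_1+\Delta\mu\,\Pi_{\tau^-}}$), rewrite the drivers in terms of $(\hat B_t)$ and $(\hat m_t)$ so the spurious drift is absorbed into the compensator of $\hat m$, and settle $\pi=1$ by noting $\Pi_t\equiv 1$ solves \eqref{SDEFilterG}. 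No substantive differences.
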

\begin{proof}
First, we consider $\pi\in[0,1)$. From \eqref{Eq:varphi}, for any $t\ge0$, one has
\begin{align} \label{pi_tit0}
\Pi_t = \frac{\varphi_t}{1 + \varphi_t}.    
\end{align}
For any $x\ge0$, for $f(x)= \frac{x}{1 + x}$ we have $f^{\prime}(x)= \frac{1}{(1 + x)^2}$ and $f^{\prime \prime}(x)= - \frac{2}{(1 + x)^3}$. Applying It\^o's formula to \eqref{pi_tit0} and using \eqref{dvarphieq}, it yields
\begin{equation} \label{filtersdeeq}
\ud \Pi_t = \frac{1}{(1 + \varphi_t)^2} \ud \varphi^c_t - \frac{1}{2} \frac{2}{(1 + \varphi_t)^3} \frac{\Delta \mu^2}{\beta^2} \varphi^2_t \ud t + \ud \bigg(\sum_{u\leq t}\Big [\frac{\varphi_u} {1 + \varphi_u} - \frac{\varphi_{u^-}}{1 + \varphi_{u^-}} \Big ] \bigg),
\end{equation}
where $(\varphi^c_t)_{t\ge0}$ denotes the continuous part of $(\varphi_t)_{t\ge0}$, given by
$$ \ud \varphi^c_t = \lambda( 1 + \varphi_t) \ud t + \frac{\Delta \mu}{\beta^2}  \varphi_t \ud Y_t - \varphi_{t^-} \Delta \mu (1-H_{t^-})  \ud t.$$
Using \eqref{innovationB} and the easily verifiable equalities
$1 + \varphi_t = \frac{1}{1-\Pi_t}$, $\frac{\varphi_t}{(1 + \varphi_t)^2} = \Pi_t (1 -\Pi_t)$ and $\frac{\varphi^2_t}{(1 + \varphi_t)^3} = \Pi^2_t (1- \Pi_t)$, we get
\begin{align}\label{C4}
        \begin{aligned}
        &\frac {1}{(1 + \varphi_t)^2} \ud \varphi^c_t - \frac{1}{2} \frac{2}{(1 + \varphi_t)^3} \frac{\Delta \mu^2}{\beta^2} \varphi^2_t \ud t \\
        &\quad  = \lambda(1-\Pi_t)\ud t+\frac{\Delta\mu}{\beta}\Pi_t(1-\Pi_t)\ud \hat{B}_t -  \Delta\mu (1-\Pi_t)\Pi_t (1-H_{t^-})\ud t.
        \end{aligned}
\end{align}

It remains to compute
\begin{align} \label{C5}
\begin{aligned}
&\sum_{u\leq t}\Big [\frac{\varphi_u} {1 + \varphi_u} - \frac{\varphi_{u^-}}{1 + \varphi_{u^-}} \Big ]  = \int_0^t \left ( \frac{\varphi_{u^-}( 1 + \frac{\Delta \mu}{\mu_\ell}) } {1 + \varphi_{u^-}( 1 + \frac{\Delta \mu}{\mu_\ell}) } - \Pi_{u^-} \right ) \ud H_u\\
& =  \int_0^t \left ( \frac{\Pi_{u^-}( 1 + \frac{\Delta \mu}{\mu_\ell}) }  {1 + \Pi_{u^-}\frac{\Delta \mu}{\mu_\ell} } - \Pi_{u^-} \right ) \ud H_u  = \int_0^t \left ( \frac{\mu_h\Pi_{u^-} }  {\mu_\ell + \Delta \mu\Pi_{u^-}} - \Pi_{u^-} \right ) \ud H_u.
 \end{aligned}
\end{align}
Plugging \eqref{C4} and \eqref{C5} into \eqref{filtersdeeq}, we obtain the filtering equation
$$\ud \Pi_t = \lambda(1-\Pi_t)\ud t+\frac{\Delta\mu}{\beta}\Pi_t(1-\Pi_t)\ud \hat{B}_t -  \Delta\mu (1-\Pi_t)\Pi_t (1-H_{t^-})\ud t + \Big ( \frac{\mu_h\Pi_{t^-} }  {\mu_\ell + \Delta \mu \Pi_{t^-}} - \Pi_{t^-} \Big ) \ud H_t.$$
We derive equation \eqref{SDEFilterG} observing that
$$\frac{\mu_h\Pi_{t^-} }  {\mu_\ell + \Delta \mu \Pi_{t^-}} - \Pi_{t^-}=
\frac{\Delta\mu (1-\Pi_{t^-})\Pi_{t^-}}{\mu_\ell + \Delta \mu \Pi_{t^-}}$$
and
$$\hat m_t=H_t  - \int_0^t (1 - H_{s^-})(\mu_\ell + \Delta \mu\Pi_{s^-}) \ud s.$$

Finally, we consider the case $\pi=1$. Since $P_\pi(\xi=0)=1$ we get $\Pi_t\equiv1$, which is a solution to \eqref{SDEFilterG}.
\end{proof}

\newpage

\begin{remark}\label{rem jump}

Equation \eqref{SDEFilterG} can be equivalently rewritten as:
\begin{itemize}
\item[(i)] for $t<\tau$
\begin{equation}\label{before}\ud \Pi_t=\lambda(1-\Pi_t)\ud t+\frac{\Delta\mu}{\beta}\Pi_t(1-\Pi_t)\ud \hat{B}_t  -  \Delta\mu (1-\Pi_t)\Pi_t \ud t, \quad \Pi_0=\pi\end{equation}

\item[(ii)] for $t=\tau$
\begin{equation} \label{KSjump} \Delta \Pi_{\tau} = \frac {\Delta\mu \Pi_{\tau^-} (1 - \Pi_{\tau^-}) }{\mu_\ell + \Delta\mu \Pi_{\tau^-} } = \frac {\mu_h \Pi_{\tau^-} }{\mu_\ell + \Delta\mu \Pi_{\tau^-} } - \Pi_{\tau^-}\end{equation}
    
\item[(iii)]  for $t > \tau$
\begin{equation} \label{after}\Pi_t = \Pi_{\tau} + \int_{\tau}^t \lambda(1-\Pi_s) \ud s +
\int_{\tau}^t \frac{\Delta\mu}{\beta}\Pi_s(1-\Pi_s)\ud \hat{B}_s.\end{equation}
\end{itemize}
\end{remark}

 \begin{proposition}
     The filter is the unique strong solution to the SDE \eqref{SDEFilterG}.
 \end{proposition}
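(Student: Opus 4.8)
The plan is to upgrade the existence statement of Theorem~\ref{Th:SDEFilter} to pathwise uniqueness by transferring the problem, via the bijection $\Pi\leftrightarrow\varphi=\Pi/(1-\Pi)$, to the \emph{linear} equation \eqref{dvarphieq}. The decisive structural fact is that \eqref{dvarphieq} is driven by the process $M$ of \eqref{dM}, which is built only out of the observable drivers $Y$ and $H$ and does \emph{not} depend on the unknown $\Pi$; once \eqref{SDEFilterG} is read — after substituting \eqref{innovationB}--\eqref{innovationm} — as a closed SDE for $\Pi$ driven by $(Y,H)$, uniqueness of the filter will follow from uniqueness for \eqref{dvarphieq}.

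\emph{Step 1 (confinement in $[0,1)$).} For $\pi=1$ there is nothing to prove: all coefficients of \eqref{SDEFilterG} vanish at the value $1$, so $\{1\}$ is absorbing and a Gronwall argument on $1-\Pi$ forces $\Pi\equiv1$. For $\pi\in[0,1)$ I would first show that every solution is $[0,1)$-valued. On $[0,\tau)$ we have $H\equiv0$, \eqref{SDEFilterG} reduces to the continuous equation \eqref{before}, and $M$ there equals $\bar M_t=\tfrac{\Delta\mu}{\beta^2}Y_t-\Delta\mu\,t$. On the stochastic interval before $\Pi$ reaches $1-2^{-n}$, the process $\varphi=\Pi/(1-\Pi)$ is well defined; applying It\^o's formula to $x\mapsto x/(1-x)$ gives the linear equation $\ud\varphi_t=\lambda(1+\varphi_t)\ud t+\varphi_t\,\ud\bar M_t$, whose variation-of-constants solution $\varphi_t=\e^{\lambda t}\mathcal{E}(\bar M)_t\big(\tfrac{\pi}{1-\pi}+\lambda\int_0^t\e^{-\lambda s}\mathcal{E}(\bar M)_s^{-1}\,\ud s\big)$ is nonnegative and finite on every $[0,T]$; letting $n\to\infty$, the localising times increase to $\infty$ and hence $\Pi\in[0,1)$ on $[0,\tau)$. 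Since the jump \eqref{KSjump} maps $\Pi_{\tau^-}\in[0,1)$ to $\mu_2\Pi_{\tau^-}/(\mu_1+\Delta\mu\Pi_{\tau^-})\in[0,1)$ and the post-$\tau$ equation \eqref{after} is the same continuous equation restarted, the confinement holds globally.

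\emph{Step 2 (reduction to \eqref{dvarphieq} and uniqueness).} Given Step~1, for any solution $\Pi$ of \eqref{SDEFilterG} the process $\varphi_t:=\Pi_t/(1-\Pi_t)$ is a well-defined c\`adl\`ag $\bG^Y$-semimartingale, and reversing the It\^o computation in the proof of Theorem~\ref{Th:SDEFilter} (now applying It\^o to $x\mapsto x/(1-x)$) shows that $\varphi$ solves \eqref{dvarphieq} with $\varphi_0=\pi/(1-\pi)$ and $M$ the fixed semimartingale \eqref{dM}. The jumps of $M$ equal $\tfrac{\Delta\mu}{\mu_1}\Delta H\in\{0,\tfrac{\Delta\mu}{\mu_1}\}$, so $1+\Delta M>0$ and $\mathcal{E}(M)>0$; hence \eqref{dvarphieq} admits the unique strong solution given by \eqref{C2} with $Z=\mathcal{E}(M)$ — equivalently, the difference $\delta=\varphi-\tilde\varphi$ of two solutions solves the homogeneous linear equation $\ud\delta_t=\lambda\delta_t\,\ud t+\delta_{t^-}\,\ud M_t$ with $\delta_0=0$, whence $\delta_t=\delta_0\,\e^{\lambda t}\mathcal{E}(M)_t=0$. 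Since $\Pi=\varphi/(1+\varphi)$ is a continuous bijective function of $\varphi$, the solution of \eqref{SDEFilterG} is unique, and by Theorem~\ref{Th:SDEFilter} this unique solution is the filter.

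The only genuinely delicate point is Step~1: showing that an \emph{arbitrary} solution of \eqref{SDEFilterG} — not merely the filter, for which \eqref{Probabilities_and_1-Pi} already yields $\Pi<1$ — stays in $[0,1)$, which is what legitimises passing to $\varphi$; everything afterwards is the standard theory of linear SDEs. An essentially equivalent route argues directly on \eqref{SDEFilterG} split as \eqref{before}--\eqref{KSjump}--\eqref{after}: on each continuous piece one has a one-dimensional SDE driven by the continuous semimartingale $Y$ with $C^{1}$ (hence locally Lipschitz) coefficients, so pathwise uniqueness holds up to the exit time from $[0,1)$, while the jump \eqref{KSjump} is a deterministic function of $\Pi_{\tau^-}$; combined with the $[0,1)$-confinement of Step~1 this again yields global pathwise uniqueness and identification with the filter.
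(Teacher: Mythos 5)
Your proposal is correct, but it takes a genuinely different route from the paper's. The paper argues directly on \eqref{SDEFilterG} through the decomposition of Remark \ref{rem jump}: before and after $\tau$ the equation is a one-dimensional diffusion driven by the innovation Brownian motion with common diffusion coefficient $\frac{\Delta\mu}{\beta}x(1-x)$ and drifts $(1-x)(\lambda-\Delta\mu x)$, $\lambda(1-x)$, all locally Lipschitz, so pathwise uniqueness on each piece follows from \cite[Theorem 3.1, p.~164]{ikeda1981stochastic}, while the jump \eqref{KSjump} is a deterministic function of $\Pi_{\tau^-}$; this is essentially the ``alternative route'' you sketch in your last paragraph. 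Your main argument instead linearizes: any $[0,1)$-valued solution is mapped by $x\mapsto x/(1-x)$ onto a solution of the linear equation \eqref{dvarphieq} driven by the fixed observable semimartingale $M$ of \eqref{dM}, which is pinned down by the Dol\'eans-Dade/variation-of-constants formula \eqref{C2}, and uniqueness transfers back through the bijection. Your route buys an explicit representation of every solution (so confinement and uniqueness come together, with no splitting at $\tau$) and states uniqueness for the closed equation driven by the observations $(Y,H)$, arguably the natural formulation for a Kushner--Stratonovich equation; the paper's route buys brevity via a standard citation, treating $\hat B$ and $\tau$ as given. Two points to make explicit: the linearization requires the closed-loop reading, i.e.\ substituting \eqref{innovationB} and \eqref{innovationm} with the candidate solution (with $\hat B$, $\hat m$ kept fixed the transformed equation is no longer linear in the unknown), and \eqref{after} is not literally ``the same equation restarted'' as \eqref{before} since its drift lacks the term $-\Delta\mu\,\Pi(1-\Pi)$, although your confinement and linearization arguments apply to it unchanged.
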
 

 \begin{proof}

From Remark \ref{rem jump}, observe that, before and after the jump time $\tau$, the filter solves two diffusion equations \eqref{before} and \eqref{after}, both having the same diffusion coefficient given by  
\[
\sigma(x)  \coloneqq  \frac{\Delta\mu}{\beta} x(1-x).
\]
Denote with $b_0(x)$ and $b_1(x)$ the drifts of the SDEs \eqref{before} and \eqref{after} given respectively by  
\[
b_0(x)  \coloneqq  (1-x)(\lambda - \Delta\mu \cdot x), \quad b_1(x) \coloneqq  \lambda (1-x).
\]
We observe that $b_0(x)$, $b_1(x)$, and $\sigma(x)$ are continuous functions on $\mathbb{R}$, and they satisfy a local Lipschitz continuity condition. Thus, by \cite[Theorem 3.1, p. 164]{ikeda1981stochastic} we get the uniqueness for \eqref{before} with initial condition $\pi \in [0,1]$. 
The filter jumps at $\tau$ and its value is given by
$$\Pi_{\tau} = \frac {\mu_h \Pi_{\tau^-} }{\mu_\ell + \Delta\mu \Pi_{\tau^-} }.$$
This value is completely determined by the value of the filter in the interval $[0,\tau)$ because
$\Pi_{\tau^-} = \lim_{t \to \tau^-} \Pi_t$. 
Now, again from \cite[Theorem 3.1, p. 164]{ikeda1981stochastic}, we get uniqueness for \eqref{after} that concludes the proof.
 \end{proof}

\subsection{Sensitivity Analysis and Comparison of Hazard Rate Estimation Approaches}

In this section, we analyse the sensitivity of the filter dynamics with respect to model parameters. Moreover, we compare two estimation approaches for the hazard rate process, highlighting the impact of using different information flows.

\subsubsection{Sensitivity Analysis of the Filter Dynamics}
We examine the sensitivity of the filter dynamics in Equation \eqref{SDEFilterG} to changes in key parameters.
The numerical simulations are performed with arbitrarily chosen parameters to enhance the visual representation of different behaviours. Specifically, we set $\lambda=0.06$, $\mu_\ell=0.02$, $\pi=0$, and $T=60$, while varying $\beta$ and $\mu_h$ in the following scenarios:
\begin{itemize}
    \item Case A: $\beta = 1$ and $\mu_h=0.12$ (Figure \ref{fig:sub1})
    \item Case B: $\beta = 2$ and $\mu_h=0.12$ (Figure \ref{fig:sub2})
    \item Case C: $\beta = 2$ and $\mu_h=0.22$ (Figure \ref{fig:sub3})
\end{itemize}

\begin{figure}[h]
    \centering
    \begin{subfigure}[b]{0.495\textwidth}
        \includegraphics[width=1\textwidth]{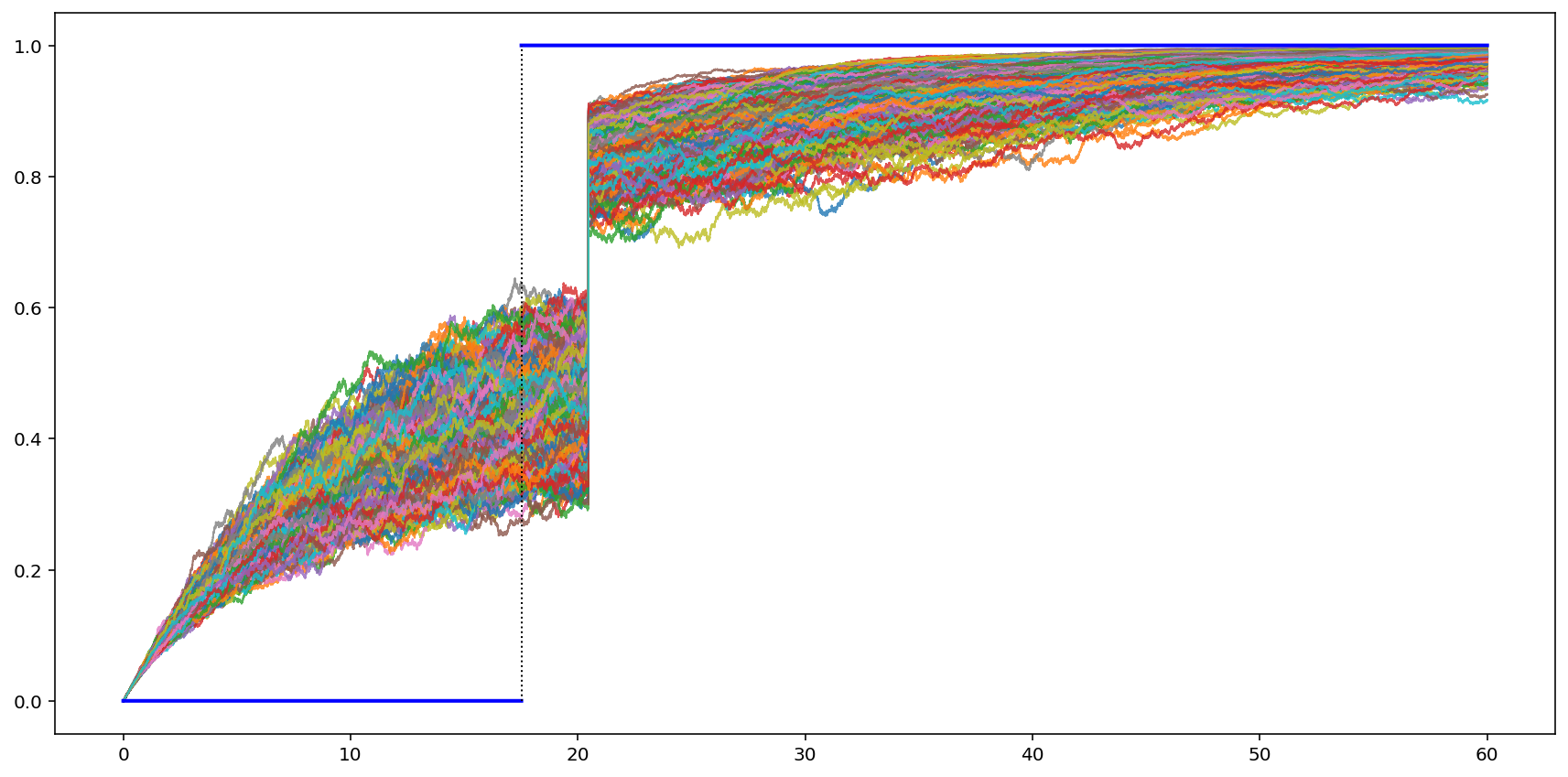}
        \caption{Case A: $\beta = 1$ and $\mu_h=0.12$}
        \label{fig:sub1}
    \end{subfigure}
    \hfill
    \begin{subfigure}[b]{0.495\textwidth}
        \includegraphics[width=1\textwidth]{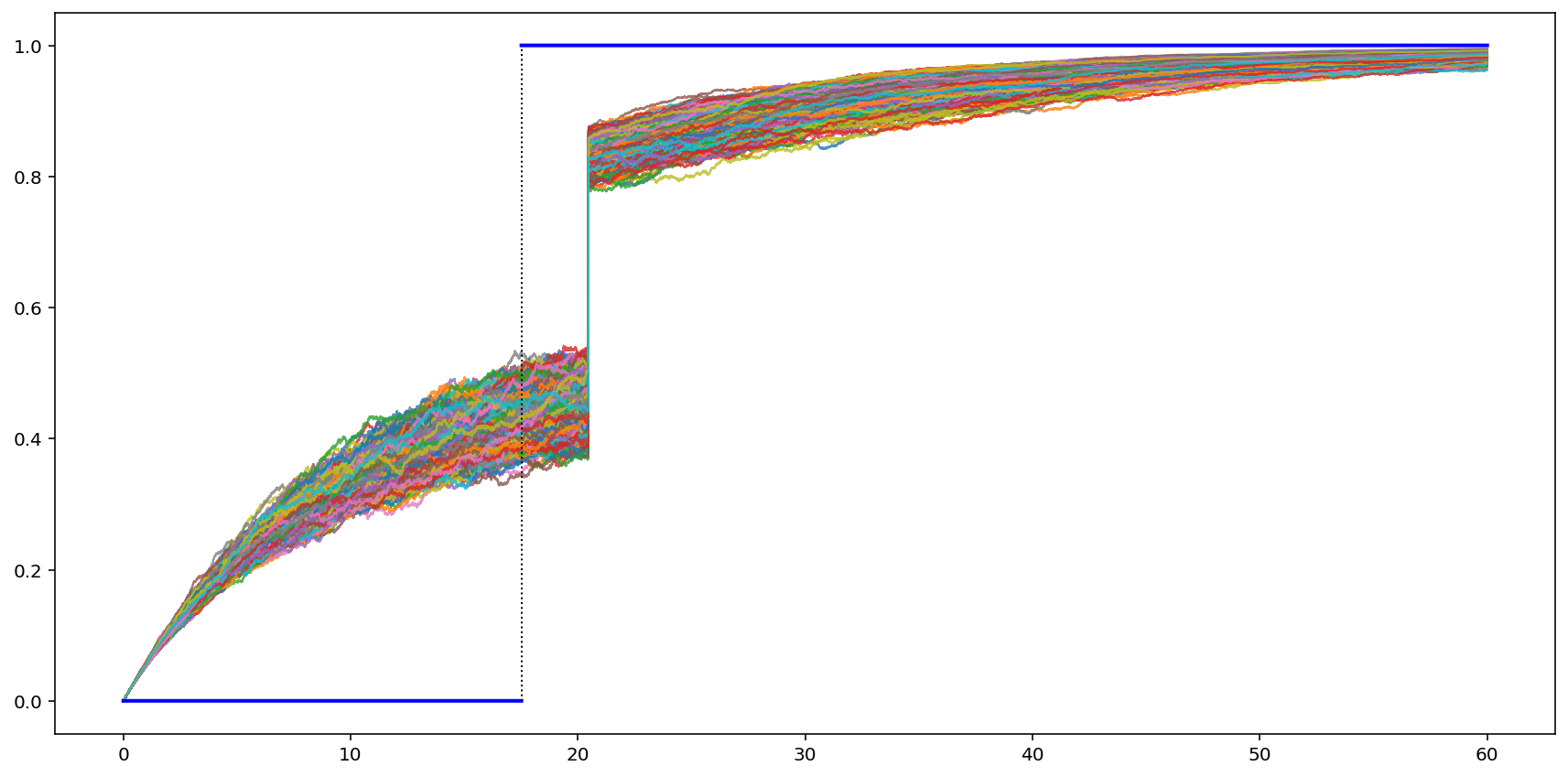}
        \caption{Case B: $\beta = 2$ and $\mu_h=0.12$}
        \label{fig:sub2}
    \end{subfigure}
    
    \vspace{0.5cm}
    
    \begin{subfigure}[b]{0.495\textwidth}
        \includegraphics[width=\textwidth]{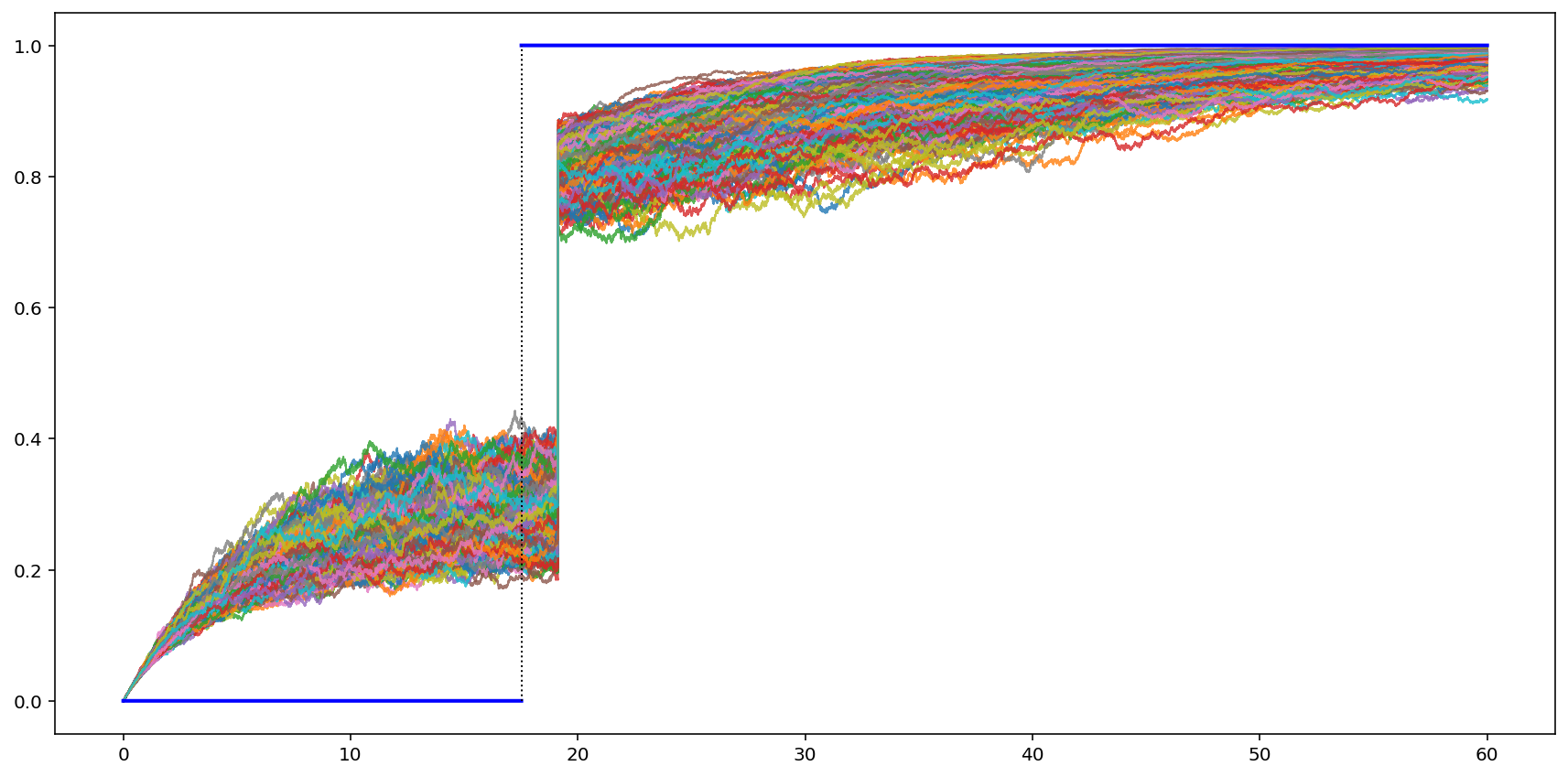}
        \caption{Case C: $\beta = 2$ and $\mu_h=0.22$}
        \label{fig:sub3}
    \end{subfigure}
    \caption{Trajectories of the filter dynamics for different parameter choices.}
\end{figure}

Figures \ref{fig:sub1}--\ref{fig:sub3} show the plot of $\mathbf{1}_{\{t<\xi\}}$ (in blue) and multiple sample paths (in different colors) of $\Pi_t$.
Due to the stochastic nature of the simulation, the values $\xi$, $\tau$ and trajectories of the process $\Pi_t$ vary each time the simulation is run. However, because our sensitivity analysis remains the same across different runs, we fix the seed to 18 to ensure the reproducibility of our results and interpretations. With this choice, we obtain $\xi=17.51$.
The value of $\tau$ depends on $\mu_h$. Specifically, for $\mu_h=0.12$ (cases A and B) we obtain $\tau=20.46$, whereas for $\mu_h=0.22$ (case C) the jump time occurs earlier at $\tau=19.12$. 

We first compare cases A and B to assess the impact of a higher $\beta$.  From figures \ref{fig:sub1}--\ref{fig:sub2}, observe that increasing $\beta$ attenuates Brownian fluctuations in filter dynamics. This is consistent with Equation \eqref{SDEFilterG}, where the diffusion term is scaled by $1/\beta$, reducing its impact as $\beta$ increases. 
However, a lower value of $\beta$ results in a more accurate estimate due to the additive Gaussian noise structure (cf. \eqref{Addnoise}).
To illustrate this point, we simulate 1000 trajectories of $\Pi_t$ and analyse the percentage of paths that remain close to $\mathbf{1}_{\{t<\xi\}}$ at specific time points. For $t<\xi$, greater accuracy corresponds to $\Pi_t$ staying near zero, whereas for $\xi\le t$, $\Pi_t$ is more accurate when it remains close to 1. 
At $t=\xi/2$, we find that the 45.4\% of trajectories for $\beta=1$ (case A) satisfy $\Pi_t < 0.3$, compared to only 35.2\% for $\beta=2$ (case B).
At $t=2\xi$, the 24.2\% of trajectories for $\beta=1$ exhibit $\Pi_t>0.95$ compared to 5.6\% for $\beta=2$.

Next, we analyse the effect of increasing $\mu_h$ by comparing cases B and C. As noted earlier, a higher $\mu_h$ results in an earlier jump time $\tau$. Moreover, it amplifies Brownian fluctuations, as $\mu_h$ directly influences the diffusion term in the SDE \eqref{SDEFilterG}. It also affects the drift term, impacting on the filter's accuracy. In case C, at $t=\xi/2$, 98.8\% of trajectories satisfy $\Pi_t<0.3$, while at $t=2\xi$, the 21.4\% of trajectories exhibit $\Pi_t>0.95$.

\subsubsection{Comparison between the $\bG^Y$-estimate and the $\bF^Y$-estimate of $(\mu_t)_{t\ge 0}$}

We now examine how different information sets affect the estimation of the hazard rate process $(\mu_t)_{t\ge 0}$. Specifically, we compare the $\bG^Y$-estimate of the hazard rate process $(\mu_t)_{t\ge 0}$ with the case where the information available is restricted to $\bF^Y$.
Recall that the $\bG^Y$-estimate, denoted by $(\hat{\mu}_t)_{t\ge 0}$, incorporates both the noisy observations of the hazard rate (contained in $\bF^Y$) and the default-related information (contained in $\bH$). In contrast, the $\bF^Y$-estimate is based solely on the noisy observation of the hazard rate.

For the case with information restricted to $\bF^Y$, we define the $\bF^Y$-estimate as
\begin{equation}
\hat{\mu}^{F}_t  \coloneqq  \media_\pi [\mu_t \given \mathcal{F}^Y_t], \quad t\ge 0.
\end{equation}
To express this estimate, we introduce the filter 
\begin{align}
\Pi^{F}_t  \coloneqq  \prob_\pi (\xi \leq t \given \mathcal{F}^Y_t), \quad t\ge 0,
\end{align} 
Under this setup, $(\Pi^{F}_t)_{t\ge 0}$ is governed by the stochastic differential equation 
\begin{align} \label{SDEFilterY}
\ud \Pi^{F}_t = \lambda(1-\Pi^{F}_t)\ud t + \frac{\Delta\mu}{\beta}\Pi^{F}_t(1-\Pi^{F}_t)\ud \hat B^{F}_t, \quad \Pi^{F}_0 = \pi,
\end{align} 
where $(\hat B^{F}_t)_{t\ge 0}$ is a standard $(\bF^Y,\prob_\pi)$-Brownian motion defined, for $t\ge 0$, by  
\begin{align}
\hat B^{F}_t  \coloneqq  \frac{1}{\beta}\bigg(Y_t - \Delta\mu \int_0^t \Pi^{F}_s \, \ud s \bigg).
\end{align} 
Further details are given in \cite[pp. 308-310]{peskir2006optimal}. The $\bF^Y$-estimate of $(\mu_t)_{t\ge 0}$ takes the form \begin{equation} \label{F-estimate} \hat \mu^{F}_t = \media_\pi[ \mu_t \given  \cF^Y_t] = \mu_\ell (1- \Pi^{F}_t) + \mu_h  \Pi^{F}_t =  \mu_\ell + \Delta \mu \Pi^{F}_t \quad t \geq 0.\end{equation}
By the properties of conditional expectation, we have that for any fixed $t\in [0,T]$ the mean square error of the $\bG^Y$-estimate is less or equal than the mean square error of the $\bF^Y$-estimate, indeed  
$$\media_\pi[ (\hat \mu_t - \mu_t)^2] = \min _{Z_t^G \in L^{2}(\cG^Y_t)}\media_\pi[(Z_t^G-\mu_t)^2]] \leq \min _{Z_t^F \in L^{2}(\cF^Y_t)}\media_\pi[(Z_t^F-\mu_t)^2]]= \media_\pi[ (\hat \mu^{F}_t - \mu_t)^2],$$
where $L^{2}(\cG^Y_t)$ $\big(L^{2}(\cF^Y_t)\big)$ is the set of square-integrable random variables $\cG^Y_t$-measurable ($\cF^Y_t$-measurable) and clearly $L^{2}(\cF^Y_t)\subseteq L^{2}(\cG^Y_t)$.

\begin{table}[ht]
\centering
\begin{tabular}{|c|}
\hline
Parameters \\
\hline
$\pi=0$, $T=60$, $\lambda=0.06$, $\mu_\ell=0.02$, $\mu_h=0.22$, $\beta=1$\\
\hline
\end{tabular}
\caption{Parameters used to plot Figure \ref{fig:GFEstimate}.}
\label{tab:parameters}
\end{table}

To illustrate the pathwise behaviour of the two estimates, we perform a numerical analysis with the parameters summarized in Table \ref{tab:parameters}.
Notice that with $\pi=0$, we have $\prob_\pi(\xi>0)=1$. 
In Figure \ref{fig:GFEstimate}, the unobservable hazard rate $(\mu_t)_{t\ge 0}$ is shown in blue. 
It remains constant at $\mu_\ell$ until the jump at time $\xi>0$, after which it increases to $\mu_h$. 
The yellow trajectories represent the $\bG^Y$-estimate, which incorporates both the noisy observation of the hazard rate process and the default-related information. These trajectories exhibit a sudden change at the default time $\tau$, reflecting the immediate update triggered by the default event. In contrast, the green trajectories represent the $\bF^Y$-estimate, which is based solely on the noisy observation and evolves smoothly without a sudden jump.  
By comparing the trajectories of $(\hat{\mu}^F_t)_{t\ge 0}$ (the $\bF^Y$-estimate) and $(\hat{\mu}_t)_{t\ge 0}$ (the $\bG^Y$-estimate), we observe that before the time $\xi$, the $\bG^Y$-estimate is generally more accurate, as its trajectory remains closer to $\mu_\ell$. However, in the period between $\xi$ and $\tau$, the $\bF^Y$-estimate provides a more reliable estimate. 
After default, the $\bG^Y$-estimate quickly realigns with $(\mu_t)_{t\ge 0}$, demonstrating the advantage of incorporating full default-related data.

\begin{figure}
    \centering
    \includegraphics[width=0.75\linewidth]{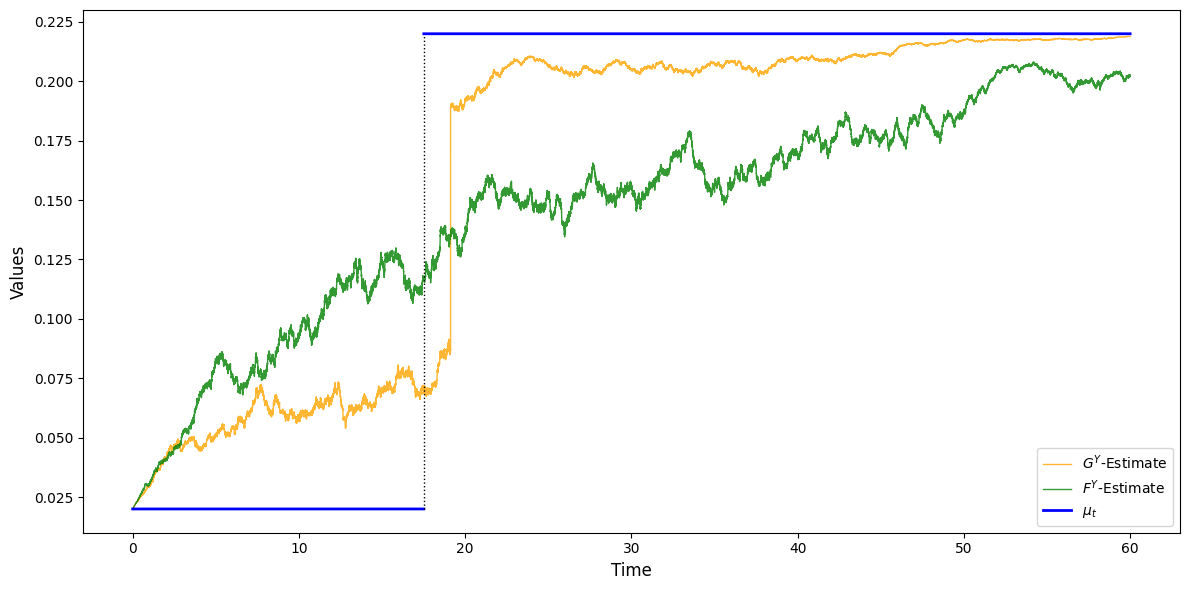}
    \caption{Plot of the hazard rate $(\mu_t)_{t\ge 0}$ (in blue), its $\bG^Y$-estimate $(\hat \mu_t)_{t\ge0}$ (in yellow) and its $\bF^Y$-estimate $(\hat \mu^F_t)_{t\ge0}$ (in green).}
    \label{fig:GFEstimate}
\end{figure}

\section{Financial and actuarial applications} \label{Sec:finapp}
In this section, we illustrate the practical relevance of our filtering framework by applying it to the pricing of credit-sensitive instruments and life-insurance contracts under partial information.
Precisely, in the next section, we derive a closed-form expression for the conditional survival probability under restricted information (see Proposition \ref{Prop:conditional}), which serves as a key building block for pricing credit derivatives and life-insurance contracts. Building on this result, in Section \ref{Sec:Creditd}, we obtain explicit pricing formulas for instruments such as defaultable bonds and credit default swaps. Finally, in Section \ref{Sec:extension}, we discuss an extension to price instruments not only by default/mortality events but also by other market factors.

From now on, we consider a finite time horizon $T$ and for notational simplicity, we adopt the convention that blackboard bold capital letters denote filtrations restricted to the finite time horizon $[0,T]$. For instance, we will write $\bG$ for $(\cG_t)_{t\in[0,T]}$.

\subsection{The conditional survival probability} \label{Sec:condprob}

From now on, we assume that the filtration $\bF$ takes the form 
\begin{equation} \label{choiceF} \bF = \bF^\mu \vee \bF^B, \end{equation} where $\bF^\mu = (\mathcal F_t^\mu)_{t \ge 0}$ denotes the natural filtration generated by the hazard rate process $(\mu_t)_{t \ge 0}$, and $\bF^B = (\mathcal F_t^B)_{t \ge 0}$ is the filtration generated by the Brownian motion $(B_t)_{t \ge 0}$ that drives the observation process in Eq. \eqref{Addnoise}. Notice that since $\xi$ and $(B_t)_{t \ge 0}$ are independent, the filtration $\bF$ is right-continuous, see \cite[Ch.\ 1.1.4, Prop.\ 1.12, p.\ 6]{aksamit2017enlargement}.

We next present two complementary approaches for computing the conditional survival probability under partial information, i.e. $\prob_\pi(\tau > T \mid \mathcal G_t^Y)$ for $t\le T$.
The first approach exploits the $(\bG, \prob_\pi)$-Markovian structure of the pair $(\mu_t, H_t)_{t\in[0,T]}$, while the second approach leverages the  $(\bG^Y, \prob_\pi)$-Markovian structure of $(\Pi_t, H_t)_{t\in[0,T]}$. 
An advantage of the second approach is that it remains valid even without assuming condition  \eqref{choiceF}.
On the other hand, the first approach also allows us to compute the conditional survival probability under full information $\prob_\pi(\tau > T \given \cG_t)$ for $t\leq T$. Therefore, in order to compare the results between the full- and partial-information cases, we work under \eqref{choiceF}.

We first state a preliminary lemma, whose proof is postponed in Appendix \ref{App:proofs}.
\begin{lemma}\label{Lemma4.1}
   Let $H^\xi_t  \coloneqq  \ind_{\{\xi \leq t\}}$ for any $t \in [0,T]$. Then the process
   \begin{equation}
      M^\xi_t= H^\xi_t - \int_0^t \lambda (1-H^\xi_{s^-})\ud s, \quad t\in [0,T],
   \end{equation}
   is a $(\bG,\prob_\pi)$-martingale.
\end{lemma}

We now derive the Markov generator of the pair $(\mu_t,H_t)_{t\in[0,T]}$.

\begin{proposition}
  The pair $(\mu_t, H_t)_{t\in[0,T]}$ is a $(\bG, \prob_\pi )$-Markov process with generator 
  \begin{align} \begin{aligned}
  \mathcal L^{(\mu,H)} f(t, x, h) &= \frac{\partial f}{\partial t}(t,x,h) + \lambda[ f(t, x + \Delta \mu, h) - f(t, x, h)] \ind_{\{x=\mu_\ell\}} \\
  &\quad + x [ f(t, x , h + 1) - f(t, x, h)] \ind_{\{h=0\}},
  \end{aligned} \end{align} 
  for any measurable function $f(t, x, h)$, $(t,x,h) \in [0,T]\times \{\mu_\ell, \mu_h\} \times \{0,1\}$, $C^1$ on $t$.
\end{proposition}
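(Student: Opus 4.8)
The plan is to verify the Markov property and identify the generator by a direct computation based on the dynamics of $(\mu_t)_{t\in[0,T]}$ and $(H_t)_{t\in[0,T]}$. First I would recall the structure of the two components: $\mu_t = \mu_1 + \Delta\mu\,\ind_{\{t\ge\xi\}}$ is a pure-jump process with a single jump of size $\Delta\mu$, and by \eqref{Eq:tau_d_F_infty} together with Remark~\ref{Rm:immersion}, $H_t$ admits the $(\bG,\prob_\pi)$-compensator $\int_0^t (1-H_{s^-})\mu_s\,\ud s$. The jump of $\mu$ happens at $\xi$, which by construction satisfies $\prob_\pi(\xi > t \mid \xi > 0) = \e^{-\lambda t}$; conditionally on $\{\mu_t = \mu_1\}$ (equivalently $\{t < \xi\}$) the residual time $\xi - t$ is again exponential with parameter $\lambda$ by the memorylessness of the exponential law, so the jump intensity of $\mu$ is $\lambda\ind_{\{\mu_t = \mu_1\}}$, and once $\mu$ has jumped it stays at $\mu_2$. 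Hence $(\mu_t)_{t\in[0,T]}$ alone is a time-homogeneous Markov process, and because $\tau$ is a doubly stochastic random time built from the hazard rate $\mu$ via \eqref{construction} with $\Theta$ independent of $\cF_\infty$, the pair $(\mu_t, H_t)_{t\in[0,T]}$ is jointly Markov under $\bG$: once the current state $(\mu_t, H_t)$ is known, the future evolution of $\mu$ depends only on whether $\mu_t = \mu_1$ or $\mu_2$, and the future of $H$ depends only on $H_t$ (absorbed at $1$) and the intensity $\mu_t(1-H_t)$.

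Next I would compute the generator. For $f(t,x,h)$ that is $C^1$ in $t$, apply the integration-by-parts / It\^o formula for finite-variation pure-jump processes to $f(t,\mu_t,H_t)$:
\begin{align}
\begin{aligned}
f(t,\mu_t,H_t) - f(0,\mu_0,H_0) &= \int_0^t \frac{\partial f}{\partial t}(s,\mu_s,H_s)\,\ud s \\
&\quad + \sum_{0 < s \le t}\big[f(s,\mu_s,H_s) - f(s,\mu_{s^-},H_{s^-})\big].
\end{aligned}
\end{align}
The jump sum decomposes over the jump of $\mu$ at $\xi$ (which contributes $[f(\xi,\mu_{\xi^-}+\Delta\mu,H_{\xi^-}) - f(\xi,\mu_{\xi^-},H_{\xi^-})]$ and occurs only from the state $x = \mu_1$) and the jump of $H$ at $\tau$ (which contributes $[f(\tau,\mu_\tau,H_{\tau^-}+1) - f(\tau,\mu_\tau,H_{\tau^-})]$ and occurs only from $h=0$); by Remark~\ref{Rm:avoidance}, $\prob_\pi(\xi = \tau) = 0$, so the two types of jumps are a.s.\ disjoint. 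Compensating each jump term by its predictable intensity — $\lambda\ind_{\{\mu_{s^-}=\mu_1\}}$ for the $\mu$-jump and $\mu_{s^-}(1-H_{s^-})$ for the $H$-jump — yields that
\begin{align}
f(t,\mu_t,H_t) - \int_0^t \mathcal L^{(\mu,H)} f(s,\mu_s,H_s)\,\ud s
\end{align}
is a $(\bG,\prob_\pi)$-(local) martingale, with $\mathcal L^{(\mu,H)}$ as stated. Combined with the Markov property established above, this identifies $\mathcal L^{(\mu,H)}$ as the Markov generator.

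The main obstacle is the rigorous justification that the jump intensity of $\mu$ equals $\lambda\ind_{\{\mu_t = \mu_1\}}$ in the enlarged filtration $\bG$ — i.e.\ that enlarging $\bF$ by $\cH_t$ does not change the $\bF$-compensator of the counting process $N_t := \ind_{\{t\ge\xi\}}$. This follows from the Immersion property (Remark~\ref{Rm:immersion}): since $\xi$ is an $\bF$-stopping time with $\bF$-intensity $\lambda\ind_{\{t<\xi\}}$ (by the memoryless property of the exponential residual lifetime of $\xi$ given $\{\xi>0\}$, recalling $\prob_\pi(\xi = 0) = \pi$ contributes no jump on $(0,T]$), and every $\bF$-martingale remains a $\bG$-martingale, the $\bG$-compensator of $N$ coincides with its $\bF$-compensator. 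A secondary technical point is the integrability needed to pass from local martingale to martingale (or simply to state the result at the level of the generator / local martingale problem), which is immediate here since $f$ is bounded on the finite state space $[0,T]\times\{\mu_1,\mu_2\}\times\{0,1\}$ and the intensities are bounded.
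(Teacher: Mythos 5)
Your proposal is correct and follows essentially the same route as the paper: apply the pure-jump It\^o formula to $f(t,\mu_t,H_t)$, compensate the $\xi$-jump with intensity $\lambda\ind_{\{\mu_{s^-}=\mu_1\}}$ and the $\tau$-jump with intensity $\mu_{s^-}(1-H_{s^-})$ (using avoidance to keep the two jumps disjoint), and use boundedness of $f$ to get a true $(\bG,\prob_\pi)$-martingale. The only difference is cosmetic: you argue the Markov property directly via memorylessness, the doubly stochastic construction, and immersion (which also justifies that the $\bF$-compensator of $\ind_{\{t\ge\xi\}}$ remains the $\bG$-compensator), whereas the paper concludes from the martingale representation via \cite[Proposition 1.7, Chapter IV]{ethier2009markov}.
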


\begin{proof}
First, notice that $\prob_\pi(\tau=\xi)=0$ from the {\it avoidance} of $\bF$-stopping time, see Remark \ref{Rm:avoidance}.
Let $f(t, x, h)$, $(t,x,h) \in [0,T]\times \{\mu_\ell, \mu_h\} \times \{0,1\}$, be a measurable function having a continuous derivative w.r.t. time.
The It\^o's formula gives
\begin{align} 
\begin{aligned} \label{Eq:itof}
f(t,\mu_t,H_t) &= f(0,\mu_0,H_0)+\int_0^t \frac{\partial f}{\partial t}(s,\mu_s,H_s) \ud s  + \int_0^t \big( f(s,\mu_{s^-}, H_s) - f(s,\mu_{s^-}, H_{s^-}) \big) \ud H_s\\
&\quad + \int_0^t \big(  f(s,\mu_s, H_{s^-}) - f(s,\mu_{s^-}, H_{s^-})\big) dH^\xi_s.
\end{aligned}
\end{align} 
Define, for $t\le T$,
\begin{align} 
\begin{aligned} \label{Mt-mg}
M_t  \coloneqq  & \int_0^t [f(s, \mu_{s^-} + \Delta\mu, H_{s-}) - f(s, \mu_{s^-}, H_{s-})(dH^\xi_s - \lambda(1-H^\xi_{s^-}) \ud s) \\
& + \int_0^t [f(s, \mu_{s^-}, H_{s-}+1) - f(s, \mu_{s^-}, H_{s^-})(\ud H_s - \mu_{s^-}(1-H_{s^-}) \ud s).
\end{aligned}\end{align} 
It is easy to see that
 \begin{align} 
\begin{aligned}
 &\media_\pi\Big[\int_0^T |f(s, \mu_{s^-} + \Delta\mu, H_{s-}) - f(s, \mu_{s^-}, H_{s-})|\lambda(1-H^\xi_{s^-}) \ud s \Big]< \infty\\
&\media_\pi\Big[\int_0^T |f(s, \mu_{s^-}, H_{s^-}+1) - f(s, \mu_{s^-}, H_{s^-})|\mu_{s^-}(1-H_{s-}) \ud s\Big] < \infty\end{aligned}\end{align}  
because $f(t,x,h)$ is bounded.
Thus, $(M_t)_{t\in[0,T]}$ is a $(\bG, \prob_\pi )$-martingale and Equation 
\eqref{Eq:itof} rewrites as
\begin{align} 
\begin{aligned} 
f(t,\mu_t,H_t) &= f(0,\mu_0,H_0)+\int_0^t \frac{\partial f}{\partial t}(s,\mu_s,H_s) \ud s + M_t\\
&\quad + \int_0^t \big( f(s,\mu_{s-}, H_{s-}+1) - f(s,\mu_{s-}, H_{s-}) \big) \mu_{s-}(1-H_{s-}) \ud s\\
&\quad + \int_0^t \big(  f(s,\mu_{s-}+\Delta\mu, H_{s-}) - f(t,\mu_{s-}, H_{s-})\big) \lambda(1-H^\xi_{s-}) \ud s. 
\end{aligned}
\end{align} 
Since
$(1-H^\xi_{s-})=\ind_{\{s\leq \xi\}}=\ind_{\{\mu_{s-}=\mu_\ell \}}$ and $(1-H_{s-})=\ind_{\{s\leq \tau\}}=\ind_{\{H_{s-}=0 \}}$, we obtain
\begin{equation}
    f(t,\mu_t,H_t) = f(0,\mu_0,H_0)+\int_0^t \mathcal L^{(\mu,H)} f(s,\mu_{s-},H_{s-}) \ud s + M_t.
\end{equation}
Finally, from \cite[Proposition 1.7, Chapter IV]{ethier2009markov}, the thesis follows.
\end{proof}

From the $(\bG,\prob_\pi)$-Markov property of $(\mu_t, H_t)_{t\ge 0}$, it follows that there exists a measurable function $f(t,x,h)$, $(t,x,h) \in [0,T]\times\{\mu_\ell, \mu_h\} \times \{0,1\}$ such that, for all $t\le T$,
\begin{align} \label{DEF_f}
    \prob_\pi(\tau > T \given \cG_t)=f(t,\mu_t, H_t), \quad \prob_\pi-a.s.
\end{align}
The function $f$ can be characterized as a solution of a PDE with a final condition. Precisely, we have the following result.
\begin{proposition}\label{P1}
Let $f(t,x,h)$, $(t,x,h) \in [0, T]\times \{\mu_\ell, \mu_h\} \times \{0,1\}$ be a measurable function having continuous derivative w.r.t. time, and solution to 
\begin{equation} \label{PDE1approach}
\mathcal L^{(\mu,H)} f(t, x,h)=  0, \quad  f(T, x, h) = 1-h.  
\end{equation} 
Then, for any $t \in [0,T]$ and $\prob_\pi$-a.s., 
\begin{equation} \label{PDE1approachComplete}
\prob_\pi(\tau > T \given \cG_t)=f(t,\mu_t, H_t),
\end{equation}
and
\begin{equation}\label{PDE1approachINComplete}
\prob_\pi(\tau > T \given \cG^Y_t)= f(t,\mu_\ell, H_t) (1-\Pi_t) +  f(t,\mu_h, H_t) \Pi_t.    
\end{equation}
\end{proposition}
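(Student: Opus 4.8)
The plan is to verify the PDE characterization in two steps: first establish \eqref{PDE1approachComplete} using the Markov generator and a martingale argument, then obtain \eqref{PDE1approachINComplete} by conditioning down to $\cG^Y_t$.

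For the first part, I would start from the fact (cf. \eqref{Eq:tau_d_F_infty} together with the Immersion property) that $\prob_\pi(\tau > T \given \cG_t) = \ind_{\{\tau > t\}}\,\media_\pi\!\big[\exp(-(\Lambda_T-\Lambda_t)) \given \cF_t\big]$, or more directly invoke \eqref{DEF_f}: by the $(\bG,\prob_\pi)$-Markov property of $(\mu_t,H_t)_{t\le T}$ there is a measurable $f$ with $\prob_\pi(\tau>T\given\cG_t)=f(t,\mu_t,H_t)$. The process $N_t := f(t,\mu_t,H_t)$ is, for the specific choice $f(t,x,h)=\prob_\pi(\tau>T\given\mu_t=x,H_t=h)$, a $(\bG,\prob_\pi)$-martingale on $[0,T]$ (it is $\media_\pi[\ind_{\{\tau>T\}}\given\cG_t]$). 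Applying the generator identity $f(t,\mu_t,H_t)=f(0,\mu_0,H_0)+\int_0^t \mathcal L^{(\mu,H)}f(s,\mu_{s^-},H_{s^-})\,\ud s + M_t$ from the previous proposition and using that $N_t$ is a martingale, the finite-variation part must vanish, so $\mathcal L^{(\mu,H)}f(t,\mu_t,H_t)=0$ for Lebesgue-a.e.\ $t$, $\prob_\pi$-a.s.; since $(\mu_t,H_t)$ visits all four states in $\{\mu_1,\mu_2\}\times\{0,1\}$ with positive probability (before/after $\xi$, before/after $\tau$, using avoidance $\prob_\pi(\tau=\xi)=0$), this forces $\mathcal L^{(\mu,H)}f\equiv 0$ pointwise on the state space. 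The terminal condition $f(T,x,h)=\prob_\pi(\tau>T\given\mu_T=x,H_T=h)=\ind_{\{H_T=0\}}=1-h$ is immediate. Conversely, if $f$ solves \eqref{PDE1approach}, then $f(t,\mu_t,H_t)=f(0,\mu_0,H_0)+M_t$ is a martingale, so $f(t,\mu_t,H_t)=\media_\pi[f(T,\mu_T,H_T)\given\cG_t]=\media_\pi[1-H_T\given\cG_t]=\prob_\pi(\tau>T\given\cG_t)$, giving \eqref{PDE1approachComplete}.

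For the second part, \eqref{PDE1approachINComplete} follows by tower property: $\prob_\pi(\tau>T\given\cG^Y_t)=\media_\pi\big[\prob_\pi(\tau>T\given\cG_t)\given\cG^Y_t\big]=\media_\pi[f(t,\mu_t,H_t)\given\cG^Y_t]$. Since $H_t$ is $\cG^Y_t$-measurable and $\mu_t=\mu_1\ind_{\{t<\xi\}}+\mu_2\ind_{\{t\ge\xi\}}$, one has $\media_\pi[f(t,\mu_t,H_t)\given\cG^Y_t]=f(t,\mu_1,H_t)\,\prob_\pi(\xi>t\given\cG^Y_t)+f(t,\mu_2,H_t)\,\prob_\pi(\xi\le t\given\cG^Y_t)=f(t,\mu_1,H_t)(1-\Pi_t)+f(t,\mu_2,H_t)\Pi_t$, using the definition \eqref{DefinitionPi} of the filter $\Pi_t$.

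The main obstacle is the rigorous passage from ``$\mathcal L^{(\mu,H)}f(t,\mu_{t^-},H_{t^-})=0$ for a.e.\ $t$, $\prob_\pi$-a.s.'' to ``$\mathcal L^{(\mu,H)}f(t,x,h)=0$ for all $(t,x,h)$'': one must argue that the occupation measure of $(\mu_{t^-},H_{t^-})$ charges every point of the (discrete) state space $\{\mu_1,\mu_2\}\times\{0,1\}$ on sets of positive Lebesgue$\times\prob_\pi$ measure. This uses that $\xi$ has an atom at $0$ and an exponential tail, that $\tau$ is a.s.\ finite with a density, and the avoidance property $\prob_\pi(\tau=\xi)=0$ ensuring the state $(\mu_1,1)$ and $(\mu_2,0)$ are both genuinely visited on time intervals of positive length. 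A slightly cleaner route avoiding this, which I would fall back on if needed, is to prove only the ``if'' direction (assume $f$ solves the PDE, derive the probabilistic representation via the martingale property of $f(t,\mu_t,H_t)$), since that is all that \eqref{PDE1approachComplete}--\eqref{PDE1approachINComplete} actually assert.
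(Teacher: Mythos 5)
Your fallback route is exactly the paper's proof: assume $f$ solves \eqref{PDE1approach}, use the generator decomposition so that $f(t,\mu_t,H_t)+M_T-M_t=1-H_T$ with $(M_t)$ the $(\bG,\prob_\pi)$-martingale from the preceding proposition (its martingale property coming from boundedness of $f$), take $\media_\pi[\,\cdot\given\cG_t]$ to get \eqref{PDE1approachComplete}, and then apply the tower property with $\Pi_t=\prob_\pi(\mu_t=\mu_2\given\cG^Y_t)$ and the $\cG^Y_t$-measurability of $H_t$ to get \eqref{PDE1approachINComplete}. The converse direction you sketch first (recovering the PDE from the martingale $\prob_\pi(\tau>T\given\cG_t)$ via an occupation-measure argument) is not asserted by the proposition and is not needed, so the gap you flag there does not affect correctness.
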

\begin{proof}

Let $f(t,x,h)$, $(t,x,h) \in [0, T]\times \{\mu_\ell, \mu_h\} \times \{0,1\}$ be a solution to \eqref{PDE1approach}, by It\^o's formula, for any $t\le T$ 
\begin{align} 
\begin{aligned}\label{ito1}
  1-H_T=f(T,\mu_T,H_T) & =  f(t,\mu_t,H_t) + \int_t^T \mathcal L^{(\mu,H)} f(s, \mu_s,H_s)  \ud s + M_T - M_t\\
  & = f(t,\mu_t,H_t) + M_T - M_t,
 \end{aligned}\end{align}
where the $(\bG, \prob_\pi )$-martingale $(M_t)_{t\in[0,T]}$ is defined in \eqref{Mt-mg}.
Taking the conditional expectation to $\cG_t $ in \eqref{ito1}, we obtain \eqref{PDE1approachComplete}.

To obtain \eqref{PDE1approachINComplete}, we recall that
$\Pi_t=\prob_\pi(\mu_t=\mu_h\given \cG^Y_t)$ and $1-\Pi_t=\prob_\pi(\mu_t=\mu_\ell\given \cG^Y_t)$ for $t\le T$. By tower property and using \eqref{PDE1approachComplete}, for any $t \in [0,T]$
\begin{align}\begin{aligned} \label{EQ1a_fin}
    \prob_\pi(\tau > T \given \cG^Y_t)&=\media_\pi\Big[f(t,\mu_t, H_t) \given \cG^Y_t \Big]\\
    &= f(t,\mu_\ell, H_t) \prob_\pi(\mu_t=\mu_\ell\given \cG^Y_t) +  f(t,\mu_h, H_t) \prob_\pi(\mu_t=\mu_h\given \cG^Y_t) \\
    & = f(t,\mu_\ell, H_t) (1-\Pi_t) +  f(t,\mu_h, H_t) \Pi_t.
\end{aligned}
\end{align}
The proof is complete.
\end{proof}

From \eqref{PDE1approachINComplete}, we observe that the conditional survival probability under partial information can be expressed as a measurable function of $(t,\Pi_t, H_t)$. This representation follows from the $(\bG^Y, \prob_\pi)$-Markovianity of the process $(\Pi_t, H_t)_{t\in[0,T]}$. In the next proposition, we use the filtering equation \eqref{SDEFilterG} to derive the generator of $(\Pi_t, H_t)_{t\in[0,T]}$. The proof is given in Appendix \ref{App:proofs}.

\begin{proposition} \label{MarkovPiH}
  The pair $(\Pi_t, H_t)_{t\in[0,T]}$ is a $(\bG^Y, \prob_\pi )$-Markov process with generator 
  \begin{align} 
\begin{aligned}
    &\mathcal{L}^{(\Pi,H)} g(t, x, h) \\
    &=  \frac{\partial g}{\partial t}(t,x,h) + (1-x)(\lambda -  \Delta\mu \ x (1-h))\frac{\partial g}{\partial x}(t,x,h)+ \frac{1}{2} \bigg (\frac{\Delta\mu}{\beta}\bigg)^2  x^2(1-x)^2 \frac{\partial^2 g}{\partial x^2}(t,x,h) \\
    & \quad + \bigg [ g \bigg(t, \frac{\mu_h x}{\mu_\ell + \Delta \mu x}, h+1 \bigg) - g(t, x, h) \bigg](1-h) (\mu_\ell+\Delta \mu x),
\end{aligned} 
\end{align}
for any function $g(t, x, h)$, $(t,x,h) \in [0,T]\times[0,1] \times \{0,1\}$, $C^1$ on $t\in[0,T]$ and $C^2$ on $x\in [0,1]$.
\end{proposition}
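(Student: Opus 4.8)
The plan is to derive the generator of $(\Pi_t, H_t)_{t\in[0,T]}$ directly from the filtering SDE \eqref{SDEFilterG} by applying It\^o's formula to $g(t,\Pi_t,H_t)$ and identifying the bounded-variation (drift) part. First I would recall that by Theorem \ref{Th:SDEFilter} the filter satisfies
$$\ud \Pi_t = \lambda(1-\Pi_t)\ud t+\frac{\Delta\mu}{\beta}\Pi_t(1-\Pi_t)\ud \hat{B}_t +  \frac{\Delta\mu (1-\Pi_{t^-})\Pi_{t^-}}{\mu_1 + \Delta \mu \Pi_{t^-}}\ud \hat m_t,$$
where $(\hat B_t)_{t\ge 0}$ is a $(\bG^Y,\prob_\pi)$-Brownian motion and $(\hat m_t)_{t\ge 0}$ is a $(\bG^Y,\prob_\pi)$-martingale with $\hat m_t = H_t - \int_0^t(1-H_{s^-})(\mu_1+\Delta\mu\Pi_{s^-})\ud s$ (Proposition \ref{App:hatBhatm}). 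The continuous martingale part of $\Pi$ contributes the first-order term $\frac{\Delta\mu}{\beta}\Pi_t(1-\Pi_t)$ with quadratic variation $\big(\frac{\Delta\mu}{\beta}\big)^2\Pi_t^2(1-\Pi_t)^2\,\ud t$, and $\Pi$ jumps only at $\tau$, simultaneously with $H$.

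The key computation is the jump part. At $t=\tau$, using Remark \ref{rem jump}(ii), $\Pi$ jumps from $\Pi_{\tau^-}$ to $\frac{\mu_2\Pi_{\tau^-}}{\mu_1+\Delta\mu\Pi_{\tau^-}}$ while $H$ jumps from $0$ to $1$, so the jump of $g$ is $g\big(t,\frac{\mu_2 x}{\mu_1+\Delta\mu x},h+1\big)-g(t,x,h)$ evaluated at $(x,h)=(\Pi_{t^-},H_{t^-})$; this jump is driven by $\ud H_t$, whose $(\bG^Y,\prob_\pi)$-compensator is $(1-H_{t^-})(\mu_1+\Delta\mu\Pi_{t^-})\ud t$. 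Compensating the jump term then produces the last line of the claimed generator. Collecting the drift of $\Pi$, namely $\lambda(1-\Pi_t) - \Delta\mu(1-\Pi_t)\Pi_t(1-H_{t^-}) = (1-\Pi_t)(\lambda-\Delta\mu\Pi_t(1-H_{t^-}))$ (which matches the coefficient $(1-x)(\lambda-\Delta\mu\, x(1-h))$ after noting the paper writes $\mu\Delta x$ for $\Delta\mu\, x$), together with the half-second-derivative term and the compensated-jump term, gives exactly $\mathcal{L}^{(\Pi,H)}g$. One then writes $g(t,\Pi_t,H_t)=g(0,\Pi_0,H_0)+\int_0^t\mathcal{L}^{(\Pi,H)}g(s,\Pi_{s^-},H_{s^-})\ud s + (\text{martingale})$, where the martingale collects the $\ud\hat B$ integral and the compensated $\ud H$ integral; boundedness of $g$ (with $g$ $C^1$ in $t$, $C^2$ in $x$ on the compact $[0,T]\times[0,1]$) guarantees the requisite integrability, so the local martingale is a true martingale. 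Invoking \cite[Proposition 1.7, Chapter IV]{ethier2009markov} as in the previous proposition yields the $(\bG^Y,\prob_\pi)$-Markov property with the stated generator.

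The main obstacle is bookkeeping around the jump term rather than anything conceptual: one must be careful that $\Pi$ has no predictable jumps (it is quasi-left-continuous, jumping only at the totally inaccessible time $\tau$), that its continuous martingale part is genuinely $\frac{\Delta\mu}{\beta}\Pi(1-\Pi)\ud\hat B$ (the $\ud\hat m$ term is purely discontinuous), and that the jump of $g$ can be written as an integral against $\ud H$ with the correct integrand $\big[g(t,\frac{\mu_2 x}{\mu_1+\Delta\mu x},h+1)-g(t,x,h)\big](1-h)$ evaluated at left limits — here the factor $(1-h)$ is automatic since after $\tau$ there are no further jumps. A secondary point worth stating explicitly is that $\Pi_{\tau^-}\in[0,1)$ so the argument $\frac{\mu_2 x}{\mu_1+\Delta\mu x}$ stays in $[0,1]$ and $g$ is evaluated within its domain. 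Once these structural facts are in place, the drift identification is a routine matching of coefficients.
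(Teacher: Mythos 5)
Your proposal is correct and follows essentially the same route as the paper's proof: It\^o's formula applied to $g(t,\Pi_t,H_t)$ via the filtering SDE \eqref{SDEFilterG}, identification of the drift (continuous part plus the compensated jump at $\tau$) as $\mathcal{L}^{(\Pi,H)}g$, a true-martingale argument from boundedness of $g$ and $\partial g/\partial x$, and the conclusion via \cite[Proposition 1.7, Chapter IV]{ethier2009markov}. The only remark worth adding is that your (correct) quadratic-variation factor $x^2(1-x)^2$ multiplying $\tfrac12(\Delta\mu/\beta)^2\,\partial^2 g/\partial x^2$ does appear in the paper's proof but is dropped in the displayed generator of the statement, evidently a typo alongside ``$\mu\Delta\, x$'' for ``$\Delta\mu\, x$'', which you correctly flagged.
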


From the $(\bG^Y, \prob_\pi)$-Markov property of $(\Pi_t, H_t)_{t\in[0,T]}$, it follows that there exists a measurable function $g(t,x,h)$, $(t,x,h) \in [0, T]\times [0,1] \times \{0,1\}$ such that, for all $t\le T$,
\begin{align}
    \prob_\pi(\tau > T \given \cG^Y_t)=g(t,\Pi_t, H_t), \quad  \prob_\pi - a.s.
\end{align}
The function $g(t,x,h)$ can be characterized via a PDE with final condition, as stated in the following proposition. We omit the proof for brevity, as it follows by proceeding as in the Proof of Proposition \ref{P1}.
\begin{proposition}\label{P2}
Let $g(t,x,h)$, $(t,x,h) \in [0,T]\times [0,1] \times \{0,1\}$ be a measurable function $C^1$ on $t\in[0,T]$ and $C^2$ on $x\in [0,1]$, and solution to 
\begin{equation}\label{PDE filtro}
\mathcal L^{(\Pi,H)} g(t, x,h)=  0, \quad  g(T, x, h) = 1-h.  
\end{equation}
Then, for any $t \in [0,T]$ and $\prob_\pi$-a.s., $\prob_\pi(\tau > T \given \cG^Y_t)=g(t,\Pi_t, H_t)$.
\end{proposition}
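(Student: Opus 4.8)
The plan is to mirror the argument of Proposition~\ref{P1}, now working with the pair $(\Pi_t,H_t)_{t\in[0,T]}$ and its generator $\mathcal L^{(\Pi,H)}$ in place of $(\mu_t,H_t)_{t\in[0,T]}$ and $\mathcal L^{(\mu,H)}$. I would take $g$ as in the statement and apply It\^o's formula to $g(t,\Pi_t,H_t)$. This computation has essentially already been carried out inside the proof of Proposition~\ref{MarkovPiH}: starting from the filtering equation~\eqref{SDEFilterG} one obtains
\begin{align}
\begin{aligned}
\ud g(t,\Pi_t,H_t) &= \mathcal L^{(\Pi,H)} g(t,\Pi_t,H_t)\,\ud t + \frac{\Delta\mu}{\beta}\,\Pi_t(1-\Pi_t)\,\frac{\partial g}{\partial x}(t,\Pi_t,H_t)\,\ud \hat B_t \\
&\quad + \Big[\, g\Big(t,\tfrac{\mu_2\Pi_{t^-}}{\mu_1+\Delta\mu\Pi_{t^-}},H_{t^-}+1\Big) - g(t,\Pi_{t^-},H_{t^-})\Big]\,\ud \hat m_t .
\end{aligned}
\end{align}
Since $g$ solves $\mathcal L^{(\Pi,H)} g = 0$, the drift vanishes; integrating from $t$ to $T$, using the terminal condition $g(T,x,h)=1-h$ together with $1-H_T=\ind_{\{\tau>T\}}$, gives
\begin{equation}
\ind_{\{\tau>T\}} = g(t,\Pi_t,H_t) + \big(N_T - N_t\big),
\end{equation}
where $(N_t)_{t\in[0,T]}$ denotes the sum of the two stochastic integrals above.

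The next step is to check that $(N_t)_{t\in[0,T]}$ is a genuine $(\bG^Y,\prob_\pi)$-martingale, not merely a local one; this is the only real point of substance, and it is routine here. Indeed, $g$ and $\partial g/\partial x$ are continuous on the compact set $[0,T]\times[0,1]\times\{0,1\}$, hence bounded; $(\hat B_t)_{t\in[0,T]}$ is a $(\bG^Y,\prob_\pi)$-Brownian motion and $(\hat m_t)_{t\in[0,T]}$ a $(\bG^Y,\prob_\pi)$-martingale by Proposition~\ref{App:hatBhatm}, the latter with predictable compensator $\int_0^t (1-H_{s^-})(\mu_1+\Delta\mu\Pi_{s^-})\,\ud s$ whose integrand is bounded by $\mu_2$. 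Consequently both stochastic integrals have bounded expected quadratic variation, so $(N_t)_{t\in[0,T]}$ is a true $L^2$-martingale.

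Finally, taking $\media_\pi[\,\cdot\mid\cG^Y_t]$ in the displayed identity makes $N_T-N_t$ disappear and leaves $\prob_\pi(\tau>T\mid\cG^Y_t)=g(t,\Pi_t,H_t)$, $\prob_\pi$-a.s., which is the claim. Everything apart from the martingale verification is a direct transcription of the proof of Proposition~\ref{P1}; I would also note in passing that the evaluation of $\mathcal L^{(\Pi,H)}g$ along the trajectory is well posed, since $\Pi_t$ stays in $[0,1]$ (it lies in $[0,1)$ $\prob_\pi$-a.s.\ for $\pi\in[0,1)$ by the remark following Proposition~\ref{Prop:Probabilities}, and $\Pi_t\equiv1$ when $\pi=1$), and the post-jump value $\mu_2\Pi_{t^-}/(\mu_1+\Delta\mu\Pi_{t^-})$ again belongs to $[0,1]$, so no boundary issue arises.
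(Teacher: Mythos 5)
Your proposal is correct and follows exactly the route the paper intends: the paper omits the proof of Proposition \ref{P2} precisely because it proceeds as in Proposition \ref{P1}, i.e.\ It\^o's formula via the computation in Proposition \ref{MarkovPiH}, the vanishing drift from $\mathcal L^{(\Pi,H)}g=0$, the martingale property of the stochastic integrals (which you justify correctly by boundedness of $g$, $\partial g/\partial x$ and of the intensity), and conditioning on $\cG^Y_t$ with the terminal condition $g(T,x,h)=1-h$.
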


In what follows, we apply Proposition \ref{P1} to derive the conditional survival probability under both partial and full information. An alternative derivation of the conditional survival probability under partial information can be obtained using Proposition \ref{P2}, as emphasized in Remark \ref{R2}.
\begin{proposition} \label{Prop:conditional} 
The survival conditional probabilities under partial and full information are given by:
\begin{itemize}
    \item[i)] if $\mu_h=\mu_\ell+\lambda$, then for $t\le T$,
    \begin{align}
    \begin{aligned}
     &\prob_\pi(\tau > T \given \cG^Y_t)=\ind_{\{\tau>t\}} \big(1+\lambda (T-t) (1- \Pi_t) \big)\e^{-\mu_h(T-t)},\\
     &\prob_\pi(\tau > T \given \cG_t)=\ind_{\{\tau>t\}} \big(1+\lambda (T-t) \ind_{\{\xi>t\}} \big)\e^{-\mu_h(T-t)},
    \end{aligned}
    \end{align}
    \item[ii)] if $\mu_h\neq \mu_\ell+\lambda$, then for $t\le T$,
    \begin{align}\begin{aligned}
    &\prob_\pi(\tau > T \given \cG^Y_t)=\ind_{\{\tau>t\}}\Big( \kappa (1-\Pi_t) \e^{-(\mu_\ell+\lambda)(T-t)}+ \big(1-\kappa(1-\Pi_t)\big) \e^{-\mu_h(T-t)}\Big),\\
    &\prob_\pi(\tau > T \given \cG_t)=\ind_{\{\tau>t\}}\Big( \kappa \ind_{\{\xi>t\}} \e^{-(\mu_\ell+\lambda)(T-t)}+ \big(1-\kappa\ind_{\{\xi>t\}}\big) \e^{-\mu_h(T-t)}\Big),
    \end{aligned}
    \end{align}
\end{itemize}
        with 
\begin{align} \label{kappa}
    \kappa \coloneqq \frac{\Delta\mu}{\Delta\mu-\lambda}.
\end{align}
\end{proposition}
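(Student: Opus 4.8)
The plan is to apply Proposition \ref{P1}: it suffices to exhibit a function $f(t,x,h)$ on $[0,T]\times\{\mu_1,\mu_2\}\times\{0,1\}$, continuously differentiable in $t$, solving $\mathcal L^{(\mu,H)}f(t,x,h)=0$ with terminal condition $f(T,x,h)=1-h$, and then to read off the two formulas from \eqref{PDE1approachComplete} and \eqref{PDE1approachINComplete}. Since the state $(x,h)$ ranges over the four-point set $\{\mu_1,\mu_2\}\times\{0,1\}$, the PDE \eqref{PDE1approach} is really a small coupled system of linear first-order ODEs in $t$, which I would solve explicitly by backward integration from $T$.

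First I would dispose of the absorbing component $h=1$: the terminal value is $f(T,x,1)=0$ and, inspecting the generator, $f(t,x,1)\equiv 0$ solves the corresponding equation, so $f(t,\mu_1,1)=f(t,\mu_2,1)=0$. For $h=0$ the equation decouples at $x=\mu_2$ (the $\lambda$-term is killed by $\ind_{\{x=\mu_1\}}$), reducing to $\partial_t f(t,\mu_2,0)=\mu_2 f(t,\mu_2,0)$ with $f(T,\mu_2,0)=1$, hence $f(t,\mu_2,0)=\e^{-\mu_2(T-t)}$. At $x=\mu_1$ the equation becomes the inhomogeneous linear ODE
\[
\partial_t f(t,\mu_1,0)-(\mu_1+\lambda)\,f(t,\mu_1,0)=-\lambda\,\e^{-\mu_2(T-t)},\qquad f(T,\mu_1,0)=1,
\]
whose forcing oscillates at rate $-\mu_2$ while the homogeneous rate is $-(\mu_1+\lambda)$. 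The dichotomy in the statement is exactly the non-resonant versus resonant case: if $\mu_2=\mu_1+\lambda$ (i.e.\ $\Delta\mu=\lambda$) the integrating-factor computation gives $f(t,\mu_1,0)=(1+\lambda(T-t))\,\e^{-\mu_2(T-t)}$, while if $\mu_2\neq\mu_1+\lambda$ it gives $f(t,\mu_1,0)=\kappa\,\e^{-(\mu_1+\lambda)(T-t)}+(1-\kappa)\,\e^{-\mu_2(T-t)}$, with the coefficient $\kappa=\Delta\mu/(\Delta\mu-\lambda)$ of \eqref{kappa} forced by the terminal condition and $1-\kappa=-\lambda/(\Delta\mu-\lambda)$.

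It then remains to substitute into Proposition \ref{P1}. For full information I would use \eqref{PDE1approachComplete} with $\mu_t=\mu_1\ind_{\{\xi>t\}}+\mu_2\ind_{\{\xi\le t\}}$ and $H_t=\ind_{\{\tau\le t\}}$: on $\{\tau\le t\}$ the values $f(t,\cdot,1)$ vanish, and on $\{\tau>t\}$ one picks $f(t,\mu_1,0)$ or $f(t,\mu_2,0)$ according to $\{\xi>t\}$ or $\{\xi\le t\}$, which rearranges into the stated expressions. For partial information I would use \eqref{PDE1approachINComplete}, namely $f(t,\mu_1,H_t)(1-\Pi_t)+f(t,\mu_2,H_t)\Pi_t$, which again vanishes on $\{\tau\le t\}$ and on $\{\tau>t\}$ is the convex combination of the two values weighted by $1-\Pi_t$ and $\Pi_t$; the identity $(1-\kappa)(1-\Pi_t)+\Pi_t=1-\kappa(1-\Pi_t)$ in the non-resonant case (respectively $(1+\lambda(T-t))(1-\Pi_t)+\Pi_t=1+\lambda(T-t)(1-\Pi_t)$ in the resonant one) collapses it to the claimed formula. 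There is no serious obstacle: the only points needing care are keeping the two branches of the ODE separate and checking that the constructed $f$ satisfies the (trivial) regularity hypothesis of Proposition \ref{P1}, so that it indeed represents $\prob_\pi(\tau>T\given\cG_t)$.
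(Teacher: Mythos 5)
Your proposal is correct and follows essentially the same route as the paper: it verifies the regularity hypothesis, reduces the PDE \eqref{PDE1approach} to the four nested ODEs on $\{\mu_1,\mu_2\}\times\{0,1\}$, solves them backward from $T$ (with the resonant/non-resonant split $\mu_2=\mu_1+\lambda$ versus $\mu_2\neq\mu_1+\lambda$ yielding exactly the paper's $f(t,\mu_1,0)$ and $\kappa$), and then substitutes into \eqref{PDE1approachComplete} and \eqref{PDE1approachINComplete}. The algebraic collapses you note, e.g.\ $(1-\kappa)(1-\Pi_t)+\Pi_t=1-\kappa(1-\Pi_t)$, are precisely the final simplifications used in the paper.
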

\begin{proof}
In view of Proposition \ref{P1}, we show that the PDE \eqref{PDE1approach} admits a solution, which we can compute explicitly. Equation \eqref{PDE1approach} reads  
\begin{align} \label{PDE1approach2}
\begin{cases}
\begin{aligned}
\frac{\partial f}{\partial t}(t,x,h) &+ \lambda [ f(t, x + \Delta \mu, h) - f(t, x, h)] \ind_{\{x=\mu_\ell\}} \\ 
&+ x [ f(t, x , h + 1) - f(t, x, h)] \ind_{\{h=0\}} = 0,  
\end{aligned}
\\
f(T,x,h) = 1-h,    
\end{cases}
\end{align}  
for $x \in \{\mu_\ell, \mu_h\}$ and $h \in \{0,1\}$. The system \eqref{PDE1approach2} results in solving four nested ordinary differential equations (ODEs).  
We begin with the case $x = \mu_h$ and $h = 1$, for which we obtain:  
\begin{align}
\frac{\partial f}{\partial t}(t, \mu_h, 1) = 0, \ t \in (0,T) \quad f(T, \mu_h, 1) = 0.  
\end{align}  
Solving this equation yields  
\begin{align} \label{gfmu11}
f(t, \mu_h, 1) = 0, \quad \forall \ 0 \leq t \leq T.  
\end{align}  
Next, consider the case $x = \mu_\ell$ and $h = 1$. Using equation \eqref{gfmu11}, we get  
\begin{align}
\frac{\partial f}{\partial t}(t, \mu_\ell, 1) - \lambda f(t, \mu_\ell, 1) = 0, \quad f(T, \mu_\ell, 1) = 0. 
\end{align}  
Thus,
\begin{align} \label{gfmu01}
f(t, \mu_\ell, 1) = 0, \quad \forall \ 0 \leq  t \leq T.  
\end{align}  
Consider the case $x = \mu_h$ and $h = 0$. Using \eqref{gfmu11}, it yields
\begin{align} \label{gfmu1}
\frac{\partial f}{\partial t}(t,\mu_h,0) - \mu_h  f(t, \mu_h, 0) = 0,  \quad
f(T,\mu_h,0) = 1.    
\end{align} 
Its solution is
\begin{align} \label{gfmu10}
f(t, \mu_h, 0) = \e^{-\mu_h(T-t)}, \quad \forall \ t \leq T.  
\end{align}  
Finally, for $x = \mu_\ell$ and $h = 0$, we have
\begin{align} \label{gODE_mu0h0}
\begin{cases}
\begin{aligned}
\frac{\partial f}{\partial t}(t,\mu_\ell,0) &+ \lambda [ f(t, \mu_h , 0) - f(t, \mu_\ell, 0)] \\ 
&+ \mu_\ell [ f(t, \mu_\ell , 1) - f(t, \mu_\ell, 0)] = 0,   
\end{aligned}
\\
f(T,\mu_\ell,0) = 1.
\end{cases}
\end{align}  
Using \eqref{gfmu01} and \eqref{gfmu10}, equation \eqref{gODE_mu0h0} becomes
\begin{align}
\frac{\partial f}{\partial t}(t,\mu_\ell,0) &- (\lambda+\mu_\ell)f(t, \mu_\ell, 0)+ \lambda \e^{-\mu_h(T-t)}=0, \quad f(T,\mu_\ell,0) = 1.
\end{align}

\vspace{5mm}

Using the method of integrating factors for linear ODEs (see e.g. \cite{boyce2021elementary}), we find that 
\begin{itemize}
    \item[i)] if $\mu_h=\mu_\ell+\lambda$, then for $t\le T$,  
    \begin{align}
    f(t,\mu_\ell,0)=\big(1+\lambda(T-t)\big)\e^{-\mu_h(T-t)}.
    \end{align}
    \item[ii)] if $\mu_h\neq \mu_\ell+\lambda$, then for $t\le T$, 
    \begin{align}
    \begin{aligned}
    f(t,\mu_\ell,0)&= (1-\kappa) \e^{-\mu_h(T-t)} + \kappa \e^{-(\mu_\ell+\lambda)(T-t)}.
    \end{aligned}
    \end{align}
\end{itemize}

Finally, combining the above expressions, we get
\begin{itemize}
    \item[i)] if $\mu_h=\mu_\ell+\lambda$, then 
    \begin{align}\label{Eq:Probcompl1}
    f(t,x,h)=\ind_{\{h=0\}} \big(1+\ind_{\{x=\mu_\ell\}} \lambda (T-t)\big) \e^{-\mu_h(T-t)},
    \end{align}
    \item[ii)] if $\mu_h\neq \mu_\ell+\lambda$, then 
    \begin{align}\begin{aligned} \label{Eq:Probcompl2}
    f(t,x,h)= \ind_{\{h=0\}} \Big( (1-\ind_{\{x=\mu_\ell\}}\kappa) \e^{-\mu_h(T-t)} + \ind_{\{x=\mu_\ell\}} \kappa \e^{-(\mu_\ell+\lambda)(T-t)}\Big).
    \end{aligned}
    \end{align}
\end{itemize}
Finally, the proof is complete by applying \eqref{PDE1approachComplete} and \eqref{PDE1approachINComplete} from Proposition \ref{P1}.
\end{proof}

\begin{remark}\label{R2} An alternative derivation of the survival conditional probability under partial information is possible using Proposition \ref{P2}. Indeed, it is easy to see that the PDE \begin{equation}\label{PDE2 filtro}
\mathcal L^{(\Pi,H)} g(t, x,h)=  0, \quad  g(T, x, h) = 1-h,  
\end{equation}
with the ansatz $g(t, x, h) = f_0(t) (1-x) + f_1(t)x$ reduces to the nested ODEs in the Proof of Proposition \ref{Prop:conditional} leading to $f_0(t) = f(t,\mu_\ell,0)$ and $f_1(t)= f(t,\mu_h,0).$
\end{remark}

We conclude this section with the following remark, where we derive the conditional density of $\tau$ under partial and full information.
\begin{remark}
\label{densitytau}
Let $f_{\tau|\cG^Y_t}(s)$ and $f_{\tau|\cG_t}(s)$ for any $t\le s$ denote the conditional density of $\tau$ under partial and full information, respectively. From Proposition \ref{Prop:conditional}, it holds that:
\begin{itemize}
    \item[i)] If $\mu_h=\mu_\ell+\lambda$, then for $t\le s$
        \begin{align*}
        \begin{aligned}
         &f_{\tau|\cG^Y_t}(s)=\ind_{\{\tau>t\}} \e^{- \mu_h(s-t)} \bigg[\mu_h \big(1+\lambda (1-\Pi_t)(s-t)\big) - \lambda (1-\Pi_t) \bigg],\\
         &f_{\tau|\cG_t}(s)=\ind_{\{\tau>t\}} \e^{ -\mu_h(s-t)} \bigg[\mu_h\big(1+\lambda (s-t) \ind_{\{\xi>t\}}\big) -\lambda \ind_{\{\xi>t\}}  \bigg].
        \end{aligned}
    \end{align*}
    \item[ii)] If $\mu_h\neq \mu_\ell+\lambda$, then for $t\le s$
    \begin{align*}\begin{aligned}
    &f_{\tau|\cG^Y_t}(s)=\ind_{\{\tau>t\}}\bigg( (\mu_\ell+\lambda) \kappa (1-\Pi_t) \e^{-(\mu_\ell+\lambda)(s-t)}+\mu_h \big(1-\kappa(1-\Pi_t)\big) \e^{-\mu_h(s-t)}\bigg),\\
    &f_{\tau|\cG_t}(s)=\ind_{\{\tau>t\}}\bigg( (\mu_\ell+\lambda) \kappa \ind_{\{\xi>t\}} \e^{-(\mu_\ell+\lambda)(s-t)}+\mu_h \big(1-\kappa\ind_{\{\xi>t\}}\big) \e^{-\mu_h(s-t)}\bigg).
    \end{aligned}
    \end{align*}
\end{itemize}
\end{remark}

\subsection{Pricing of credit derivatives} \label{Sec:Creditd}
In this section, we develop a flexible framework for pricing a wide range of credit-sensitive instruments under partial information on the firm's hazard rate process. 
For a given $\pi\in[0,1]$, we assume that $\prob_\pi$ is the risk-neutral probability measure. To specify the cash flows associated with a defaultable claim, we introduce:
\begin{itemize}
    \item[-] an amount $L\in[0,\infty)$ paid at maturity $T$ if no default occurs before $T$,
    \item[-] a continuous premium (or coupon) payment $p:[0,\infty) \to [0,\infty)$ paid until default or maturity, whichever occurs first,
    \item[-] a recovery amount $W(\tau)$ paid at default, where $W:[0,\infty) \to [0,L]$ is a deterministic function, representing the fraction of $L$ recovered in the event of default.
\end{itemize}
Let $r:[0,\infty)\to(0,\infty)$ be the deterministic, time-dependent risk-free rate. 
The price of the credit-sensitive instrument at time $t \in [0,T]$ under partial information on the hazard rate of the firm is given by
\begin{align}
\begin{aligned} \label{pricecb}
P(t,T) :&= \media_\pi\Big[ \mathbf{1}_{\{\tau > T\}} \e^{-\int_t^T r(s) \ud s}L + \int_t^{\tau\wedge T} \e^{-\int_t^s r(u) \ud u} p(s) \ud s\\
&\quad + \mathbf{1}_{\{t< \tau \leq T\}} \e^{-\int_t^{\tau} r(s) \ud s} W(\tau) \given \mathcal{G}^Y_t \Big].
\end{aligned}
\end{align}

The first term of \eqref{pricecb} immediately reads 
\begin{align}
\media_\pi[ \mathbf{1}_{\{\tau > T\}} \e^{-\int_t^T r(s) \ud s} L \given \mathcal{G}^Y_t ]=\e^{-\int_t^T r(s) \ud s} L\  \prob_\pi(\tau > T \given \cG^Y_t),
\end{align}
where $\prob_\pi(\tau > T\given \cG^Y_t)$ is the conditional survival probability under partial information derived in Proposition \ref{Prop:conditional}.
The second term can be rewritten as follows
\begin{align}
\begin{aligned}
\media_\pi\Big[ \int_t^{\tau\wedge T} \e^{-\int_t^s r(u) \ud u} p(s) \ud s \given \mathcal{G}^Y_t \Big]&= \media_\pi\Big[ \int_t^{T} \ind_{\{\tau>s\}} \e^{-\int_t^s r(u) \ud u} p(s) \ud s \given \mathcal{G}^Y_t \Big] \\
&=\int_t^{T} \prob_\pi(\tau>s\given \cG^Y_t) \e^{-\int_t^s r(u) \ud u} p(s) \ud s.
\end{aligned}
\end{align}
Finally, the last term of \eqref{pricecb} becomes
\begin{align}
\begin{aligned}
\media_\pi\Big[ \mathbf{1}_{\{t< \tau \leq T\}} \e^{-\int_t^{\tau} r(u) \ud u}  W(\tau) \given \mathcal{G}^Y_t \Big] 
&=\int_t^T \e^{-\int_t^{s} r(u) \ud u} W(s) f_{\tau|\cG^Y_t}(s) \ud s,
\end{aligned}
\end{align}
where $f_{\tau|\cG^Y_t}(s)$ is the conditional density derived in Remark \ref{densitytau}.
Putting together the above considerations, the general pricing formula for a credit-sensitive instrument at time $t \in [0,T]$ under partial information is given by
\begin{align}
\begin{aligned} \label{pricingformulaCB}
P(t,T) &= \e^{-\int_t^T r(s) \ud s} L \prob_\pi(\tau > T \given \cG^Y_t)+ \int_t^{T} \prob_\pi(\tau>s\given \cG^Y_t) \e^{-\int_t^s r(u) \ud u} p(s) \ud s\\
&\quad + \int_t^T \e^{-\int_t^{s} r(u) \ud u} W(s) f_{\tau|\cG^Y_t}(s) \ud s.
\end{aligned}
\end{align}
Our framework can be tailored to price a wide range of credit-sensitive instruments by appropriately choosing the functions $p$ and $W$ and setting the parameter $L$ accordingly.
In the following, we illustrate its application to two common financial instruments.
\begin{itemize}
    \item \textbf{Defaultable coupon bond.} Consider a corporate bond that pays one monetary unit at maturity $T$ if no default occurs before $T$. However, should a default occur at a random time $\tau$ (with $\tau \le T$), the bondholder receives a recovery amount $W(\tau)$ paid at default. Furthermore, the bond pays a continuous coupon rate $p:[0,\infty)\to[0,1]$ until default or maturity, whichever comes first. Denoting $P_{CB}(t,T)$ the defaultable coupon bond's price and using the pricing formula \eqref{pricingformulaCB} with $L=1$, we obtain
\begin{align}
\begin{aligned} \label{Eq:pricingformulaCB}
P_{CB}(t,T) &= \e^{-\int_t^T r(s) \ud s}  \prob_\pi(\tau > T \given \cG^Y_t)+ \int_t^{T} \prob_\pi(\tau>s\given \cG^Y_t) \e^{-\int_t^s r(u) \ud u} p(s) \ud s\\
&\quad + \int_t^T \e^{-\int_t^{s} r(u) \ud u} W(s) f_{\tau|\cG^Y_t}(s) \ud s.
\end{aligned}
\end{align}

\item \textbf{Credit default swap.} Consider a credit default swap (CDS) where the protection buyer pays premium payments at rate $p(t)$ to the protection seller until default or maturity and receives a payment at default from the seller, $W(\tau)$, if the reference entity defaults before maturity.
Denoting $P_{CDS}(t,T)$ the price of the CDS and using \eqref{pricingformulaCB} with $L=0$,
\begin{align}
\begin{aligned} \label{cdsinsurance}
P_{CDS}(t,T) &= - \int_t^{T} \prob_\pi(\tau>s\given \cG^Y_t) \e^{-\int_t^s r(u) \ud u} p(s) \ud s\\
&\quad +\int_t^T \e^{-\int_t^{s} r(u) \ud u} W(s) f_{\tau|\cG^Y_t}(s) \ud s
\end{aligned}
\end{align}
\end{itemize}
Similar pricing formulas apply to life insurance contracts. For instance, \eqref{cdsinsurance} is suitable to price a contract where a payment is made upon the death of the insured (if it occurs before maturity), while the policyholder pays a continuous premium rate until death or maturity, whichever comes first.

\subsubsection{Numerical analysis}
In this section, we provide a numerical illustration of the pricing framework developed under partial information. We focus on defaultable zero-coupon bonds and compare their price under full and partial information, highlighting the impact of information asymmetry on their valuation.

Consider a defaultable zero-coupon bond (DZCB) with unitary face value and constant recovery rate $W(t)\equiv \delta\in[0,1]$ paid at default. Assume a constant risk-free rate $r>0$.
Denote with $P_{ZCB}(t,T)$ the price at time $t\in [0,T]$ of the DZCB under partial information, 
\begin{align}
\begin{aligned}
P_{ZCB}(t,T) & \coloneqq  \media_\pi[\e^{-r(T\wedge \tau-t)} (\delta \ind_{\{t<\tau \le T\}} + \ind_{\{\tau > T\}}) \given \cG^Y_t].
\end{aligned}
\end{align}
Using to the pricing formula \eqref{Eq:pricingformulaCB}, we get
\begin{align}
P_{ZCB}(t,T) \coloneqq \e^{-r(T-t)}\prob_\pi(\tau > T\given \cG^Y_t)+\delta \int_t^T  \e^{-r(s -t)} f_{\tau|\cG^Y_t}(s) \ud s.
\end{align}
From the explicit expressions for the conditional survival probability and the conditional density, provided in Proposition \ref{Prop:conditional} and Remark \ref{densitytau}, with some algebraic steps, we obtain explicit formulas for $P_{ZCB}(t,T)$.
\begin{itemize}
    \item[i)] If $\mu_h=\mu_\ell+\lambda$, then for $t\le T$
    \begin{align}
    \begin{aligned}
    P_{ZCB}(t,T)
    & =\mathbf{1}_{\{t<\tau\}}\left[1+\lambda (T-t)(1-\Pi_t)\left(1+\frac{\delta\mu_h}{r+\mu_h}\right)\right]\e^{-(r+\mu_h)(T-t)}\\
    &\quad +\mathbf{1}_{\{t<\tau\}}\frac{\delta}{r+\mu_h}\left[\mu_h-\lambda(1-\Pi_t)\left(1+\frac{\mu_h}{r+\mu_h}\right)\right]\left(1-\e^{-(r+\mu_h)(T-t)}\right).
    \end{aligned}
    \end{align}
    \item[ii)] If $\mu_h\neq \mu_\ell+\lambda$, then for $t\le T$
    \begin{align}
    \begin{aligned}
    P_{ZCB}(t,T)
    & = \mathbf{1}_{\{t<\tau\}}\kappa(1-\Pi_t)\left[\e^{-(r+\mu_\ell+\lambda)(T-t)}
    +\delta \frac{\mu_\ell+\lambda}{r+\mu_\ell+\lambda}\Bigl(1-\e^{-(r+\mu_\ell+\lambda)(T-t)}\Bigr) \right]\\
    &\quad +\mathbf{1}_{\{t<\tau\}}\Bigl(1-\kappa(1-\Pi_t)\Bigr)\left[
    \e^{-(r+\mu_h)(T-t)}
    +\delta\,\frac{\mu_h}{r+\mu_h}\Bigl(1-\e^{-(r+\mu_h)(T-t)}\Bigr)
    \right],
    \end{aligned}
    \end{align}
    with $\kappa$ defined in \eqref{kappa}.
\end{itemize}
To illustrate the impact of partial information on the price of a DZCB, we compare the partial information price $P_{ZCB}(t,T)$ with the full-information price $F_{ZCB}(t,T)$ defined as
\begin{align}
\begin{aligned}
F_{ZCB}(t,T) & \coloneqq  \media_\pi[\e^{-r(T\wedge \tau-t)} (\delta \ind_{\{t<\tau \le T\}} + \ind_{\{\tau > T\}}) \given \cG_t], \quad t\in[0,T].
\end{aligned}
\end{align}
Similar arguments as in the partial information case lead to
\begin{itemize}
    \item[i)] If $\mu_h=\mu_\ell+\lambda$, then for $t\le T$
    \begin{align}
    \begin{aligned}
    F_{ZCB}(t,T)
    &=\mathbf{1}_{\{t<\tau\}}\left[1+\lambda (T-t)\mathbf{1}_{\{t<\xi\}}\left(1+\frac{\delta\mu_h}{r+\mu_h}\right)\right]\e^{-(r+\mu_h)(T-t)}\\
    & \quad+\mathbf{1}_{\{t<\tau\}}\frac{\delta}{r+\mu_h}\left[\mu_h-\lambda\mathbf{1}_{\{t<\xi\}}\left(1+\frac{\mu_h}{r+\mu_h}\right)\right]\left(1-\e^{-(r+\mu_h)(T-t)}\right).
    \end{aligned}
    \end{align}
    \item[ii)] If $\mu_h\neq \mu_\ell+\lambda$, then for $t\le T$
    \begin{align}
    \begin{aligned}
    F_{ZCB}(t,T)
    &= \mathbf{1}_{\{t<\tau\}}\mathbf{1}_{\{t<\xi\}}\kappa\left[\e^{-(r+\mu_\ell+\lambda)(T-t)}
    +\delta \frac{\mu_\ell+\lambda}{r+\mu_\ell+\lambda}\Bigl(1-\e^{-(r+\mu_\ell+\lambda)(T-t)}\Bigr) \right]\\
    & \quad+\mathbf{1}_{\{t<\tau\}}\Bigl(1-\kappa\mathbf{1}_{\{t<\xi\}}\Bigr)\left[
    \e^{-(r+\mu_h)(T-t)}
    +\delta\,\frac{\mu_h}{r+\mu_h}\Bigl(1-\e^{-(r+\mu_h)(T-t)}\Bigr)
    \right].
    \end{aligned}
    \end{align}
\end{itemize}
We estimate the risk-free rate $r$ using monthly 3-month T-Bills data from 1990 to 2020, yielding $r = 0.0263$. The initial hazard rate level, $\mu_\ell$, is derived from ICE BofA BB US High Yield Index (Option-Adjusted Spread) and is estimated at $0.0366$, while the post-shock hazard rate level, $\mu_h$, is obtained from the ICE BofA CCC \& Lower US High Yield Index (Option-Adjusted Spread) yielding $0.1148$. We set $\lambda$ at $0.25$, implying that the jump occurs on average after 4 years. To enhance data visualization, we set the volatility of the noisy observation to $\beta=0.15$ and we analyse both the case without recovery ($\delta=0$) and with partial recovery ($\delta=0.5$). Moreover, we choose $\pi = 0$ implying that $\prob_\pi(\xi>0)=1$. 
Table \ref{tab:summary} summarizes these parameters along with additional statistical analysis.

\begin{table}[h]
\centering
\begin{tabular}{cccc}
\hline
\textbf{Parameter} & \textbf{Estimated Value} & \textbf{Std. Dev.} & \textbf{95\% CI} \\
\hline
$r$ & 0.0263 & 0.0222 & (0.0240, 0.0285) \\
$\mu_\ell$ & 0.0366 & 0.0183 & (0.0361, 0.0370) \\
$\mu_h$ & 0.1148 & 0.0539 & (0.1135, 0.1161) \\
\hline
\hline
\multicolumn{4}{l}{Other parameters: $\lambda = 0.25$, $\pi = 0$, $\beta = 0.15$, $T=10$, $\delta=0$ or $\delta=0.5$.} \\
\hline
\end{tabular}
\caption{Summary of Data and Statistical Analysis}
\label{tab:summary}
\end{table}
\begin{figure}[h!]
\centering

\begin{subfigure}[b]{\textwidth}
    \centering
    \includegraphics[width=\textwidth]{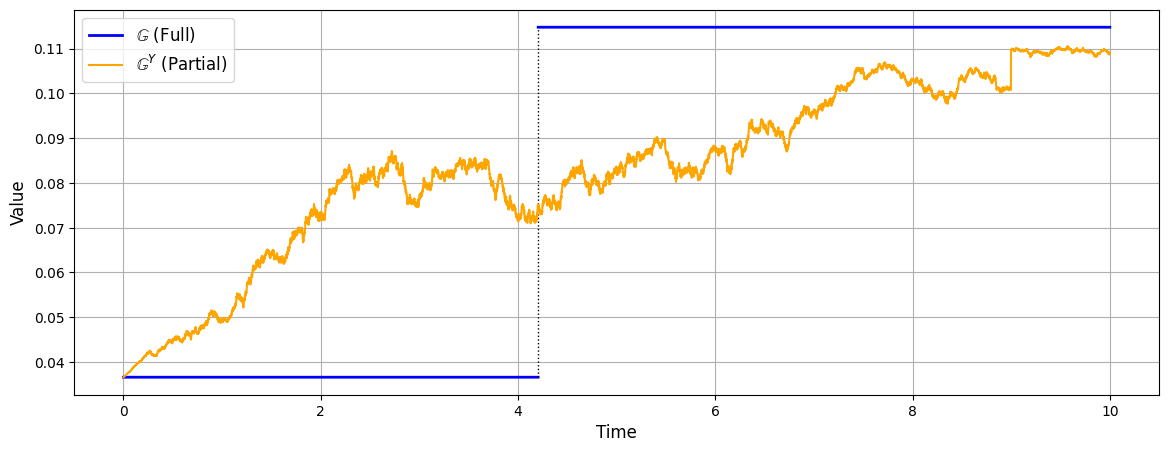}
    \caption{Evolution of the hazard rate process (blue) and its estimate under partial information (orange). The hazard rate jumps at $\xi$, while the estimated hazard rate adjusts gradually.}
    \label{fig:top}
\end{subfigure}

\begin{subfigure}[b]{\textwidth}
    \centering
    \includegraphics[width=\textwidth]{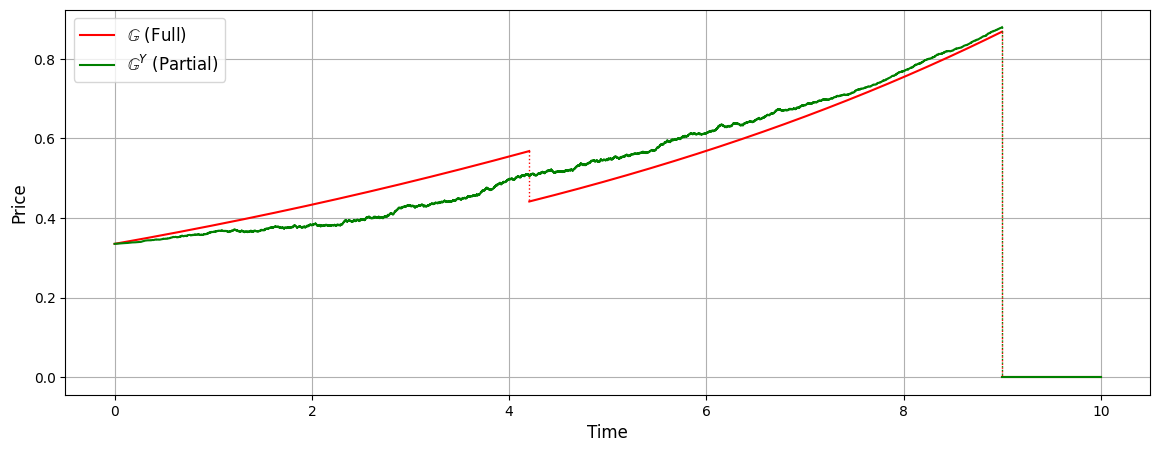}
    \caption{Price dynamics of a defaultable zero-coupon bond with no recovery under full information (red) and partial information (green).}
    \label{fig:middle}
\end{subfigure}

\begin{subfigure}[b]{\textwidth}
    \centering
    \includegraphics[width=\textwidth]{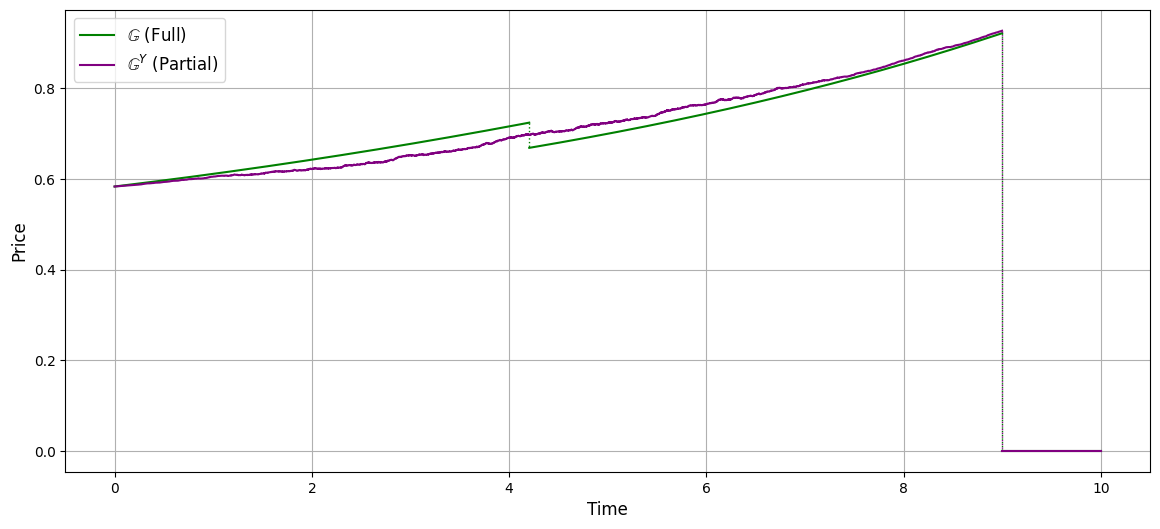}
    \caption{Price dynamics of a defaultable zero-coupon bond with a recovery rate of $\delta=0.5$ under full information (red) and partial information (green).}
    \label{fig:bottom}
\end{subfigure}
\caption{Comparison of the hazard rate process and defaultable zero-coupon bond prices under full and partial information.}
\label{fig:two_subfigures}
\end{figure}

Figure \ref{fig:top} shows the evolution of the hazard rate process $(\mu_t)_{t\in[0,T]}$ (blue curve) and its $\bG^Y$-estimate $(\hat \mu_t)_{t\in[0,T]}$ (orange curve) over time. As expected, the true hazard rate jumps at $\xi$ while the partial-information estimate adjusts more gradually and exhibits a jump at $\tau$. 
In figure \ref{fig:middle}, we plot the price of a defaultable zero-coupon bond with no recovery under full information (red curve) and partial information (green curve). 

The full-information price reacts immediately to changes in the hazard rate and thus shows jumps at both $\xi$ and $\tau$. These adjustments are marked by the vertical dotted lines in the figure.
By contrast, the partial-information price evolves continuously until default, since it does not instantaneously recognize the shock to the hazard rate and, analogously to the full-information price, jumps to zero at $\tau$. Before the shock in the hazard rate $(t<\xi)$, the partial information price is lower than the full information price due to an overestimation of default risk. Specifically, the partial-information estimate of the hazard rate $(\hat \mu_t)_{t\in[0,T]}$ remains above the true firm's hazard rate $(\mu_t)_{t\in[0,T]}$.
When the jump occurs at $\xi$, the price under full information immediately adjusts, and the price jumps down to reflect the increased risk of default. After the shock, the price under partial information remains above the full information price. 
In Figure \ref{fig:bottom}, we extend the analysis to a defaultable zero-coupon bond with recovery $\delta=0.5$. Here, the incorporation of recovery raises the bond price compared to the no-recovery case, as the bondholder is guaranteed to receive a fraction of the face value even in the event of default. The full-information price (green curve) continues to show sudden adjustments at $\xi$ and $\tau$, whereas the partial-information price (purple curve) evolves continuously until default.

\subsection{Extensions: an additional source of randomness}\label{Sec:extension}
We consider an extension of the baseline probability framework to include an additional source of randomness.
Many real-world contracts combine default/mortality risk with additional market factors. It is often reasonable to assume that these market factors are independent of either a counterparty's default or an individual's mortality. Examples may include:
\begin{itemize}
    \item Unit-linked life insurance contracts, which combine a pure endowment (paid at maturity if the insured is alive) and a term insurance benefit (paid at death if it occurs before maturity), with both components linked to the performance of an investment portfolio (see \cite{aase1994pricing}, \cite{moller1998risk}, \cite{CCC2} among others). 
    \item Vulnerable options whose value depends not only on the underlying asset's price dynamics but also on the creditworthiness of the counterparty (see \cite{johnson1987pricing}, \cite{jarrow1995pricing} among others).
\end{itemize}

To formalize this extension, we consider a probability space $(\widetilde\Omega,\widetilde{\cF}, \widetilde \prob)$ equipped with a filtration $\widetilde{\bF}$ satisfying the usual hypotheses of right continuity and $\widetilde \prob$-completeness. 
We then work on the product space
\begin{align}
    \big(\widetilde\Omega\times \Omega,\ \widetilde{\cF}\otimes\cF, \ \widetilde \prob\otimes \prob_\pi\big),
\end{align}
endowed with the right-continuous filtration $\widetilde{\bF}\vee \bF$. Within this enriched framework, we introduce a contingent claim whose payoff depends on whether $\tau$ occurs before the predetermined terminal time $T$, and on the evolution of an $\widetilde\bF$-adapted stochastic process $(X_t)_{t\in [0,T]}$. Let $\psi$ and $\phi$ be two measurable functions such that $\widetilde{\media}[\phi(X_T)] < \infty$  and for any $t \in [0,T]$, $\widetilde{\media}[\psi(X_t)]<\infty$.
If $\tau$ occurs before time $T$, the claim entitles the holder to a payment of $\psi(X_\tau)$, otherwise it pays $\phi(X_T)$ at $T$. 
Thus, the payoff of the claim is given by
\begin{align}
\mathbf{1}_{\{ \tau \le T\}}\,\psi(X_\tau) + \mathbf{1}_{\{\tau > T\}}\,\phi(X_T).  
\end{align} 
Our goal is to evaluate the expected discounted payoff at the current time, given the available information. In our setting, the investor/insurer has complete market-related information (captured by $\widetilde\bF$) but only partial default-related information (captured by $\bG^Y$). Accordingly, the price of the contingent claim at time $t\leq T$, under partial information on the default/mortality hazard rate, is defined as
\begin{align}
\begin{aligned}
J(t,T) \coloneqq \widetilde{\media}_\pi\bigg[&\mathbf{1}_{\{t< \tau \le T\}} \e^{-\int_t^\tau r(s) \ud s} \psi(X_\tau)+ \mathbf{1}_{\{\tau > T\}}\,\e^{-\int_t^T r(s) \ud s} \phi(X_T) \given \cG^Y_t\vee \widetilde \cF_t \bigg]
\end{aligned}
\end{align}
where $\widetilde{\media}_\pi$ denotes the expectation under the measure $\widetilde \prob\otimes \prob_\pi$ and $(r(t))_{t\in [0,T]} $ denotes the (deterministic) riskless interest rate.

By applying Fubini's Theorem, we derive the expression:
\begin{align}
\begin{aligned} \label{GenerReform}
    J(t,T)&= \int_t^T \e^{-\int_t^s r(u) \ud u} \widetilde{\media}[\psi(X_s) \given \widetilde \cF_t ] \, f_{\tau|\cG^Y_t}(s) \ud s + \e^{-\int_t^T r(u)\ud u} \widetilde{\media}\big[ \phi(X_T) \given \widetilde \cF_t \big] \prob_\pi (\tau>T\given \cG^Y_t).
\end{aligned}
\end{align}
This equation provides a pricing formula for a contingent claim under complete market-related information and partial default-related (or mortality-related) information. We note that, under the assumed independence, pricing in this framework reduces to pricing in a default-free market, adjusted by the conditional survival probability and density inferred from the partial default/mortality-related information.

\section{Conclusions}
In this paper, we develop a continuous-time framework for filtering in a hazard rate change-point model under partial information. Our approach combines noisy observations of the hazard rate with default-related information employing a progressive enlargement of filtration. 
By extending the change-of-measure techniques of \cite[pp. 308-310]{peskir2006optimal} to this setting, we derive a Kushner-Stratonovich type equation that estimates the conditional probability of the unobservable change-point.
Unlike the innovation approach, our methodology applies without requiring the hazard-rate process to be Markovian with respect to the underlying filtration.
We perform a sensitivity analysis of the filter dynamics with respect to key model parameters and provide a pathwise comparison under different information structures. In addition, we derive closed-form expressions for the conditional survival probability and the conditional density of the default time. These results are applied to the pricing of credit-sensitive instruments, including defaultable bonds and credit default swaps, in a partial information setting. We numerically compare defaultable zero-coupon bond prices under partial and full information, highlighting the effect of information asymmetry. The same formulas can be applied in the valuation of life-insurance contracts under restricted information on the mortality hazard rate of the insured.

Looking forward, future research could extend this framework to multi-event scenarios involving multiple change points or interacting hazard processes. Furthermore, incorporating additional sources of randomness that exhibit some dependence on default time may further enhance the model’s applicability in financial and insurance contexts.

\backmatter

\bmhead{Acknowledgements}
The second author was partially supported by INdAM-GNAMPA Project 2025 (CUP E5324001950001). Both authors were partially supported by the European Union – Next Generation EU – Project PRIN 2022 (code BEMMLZ), titled \textit{Stochastic Control and Games and the Role of Information}.

\begin{appendices}

\section{Preliminary proofs} \label{App:proofs}

\begin{proof}[\textbf{Proof of Proposition \ref{Prop:Probabilities}}]
The result is well known and established in \cite{shiryaev2007optimal} in a more general context. We also refer to \cite[Proposition 2.1]{deangelis2022quickestdetectionproblemfalse} for a proof similar to the one presented here. In our case, some additional care is required due to the specific structure of the filtration under consideration.

First, we show \eqref{Probabilities_and_Pi}. 
Take $A\in \cG^Y_t$, 
\begin{align}
    \begin{aligned}
        \media_\pi [\ind_A \Pi_t] &= \media_\pi [\ind_A \ind_{\{\xi\leq t\}}]\\
        &= \pi \media^0[\ind_A\ind_{\{\xi\leq t\}}]+ (1-\pi)\int_0^\infty  \media^s[\ind_A\ind_{\{\xi\leq t\}}] \lambda \e^{-\lambda s} \ud s\\
        &= \pi \media^0[\ind_A]+ (1-\pi)\int_0^t \media^s[\ind_A] \lambda \e^{-\lambda s} \ud s\\
        &= \media_\pi \bigg[\ind_A \bigg( \pi \frac{\ud \prob^0}{\ud \prob_\pi}\Big|_{ \cG^Y_t} + (1-\pi)\int_0^t \frac{\ud \prob^s}{\ud \prob_\pi}\Big|_{ \cG^Y_t} \lambda \e^{-\lambda s} \ud s \bigg) \bigg],
    \end{aligned}
\end{align}
where the first equality comes from the definition of $\Pi_t$, see \eqref{DefinitionPi}, the second equality use the definition of $\prob_\pi$ in \eqref{definitionProbPi}, the third one uses the definitions of $\prob^0$ and $\prob^s$ given in \eqref{definitionProbs}.

Next, we show \eqref{Probabilities_and_1-Pi}. 
First, we need to prove that for $s > t$\begin{align} \label{eqprob}
\prob^s\given \cG^Y_t=\prob^\infty \given \cG^Y_t.     
\end{align}
Observe that \begin{align}
\begin{aligned}
    \cG^Y_t=&\sigma((Y_{t_1}, H_{t_1})\in A_1,..,(Y_{t_n}, H_{t_n})\in A_n, \\
    &\quad t_1<...<t_n \in [0,t], \  A_1, ...A_n \in \mathcal{B}(\reali\times\{0,1\}), \ n\in \naturali),
\end{aligned}
\end{align}
and since $\reali\times\{0,1\}$ is separable, then $\mathcal{B}(\reali\times\{0,1\})=\mathcal{B}(\reali)\times\mathcal{B}(\{0,1\})$, see \cite[p. 244]{billingsley2013convergence}, it yields
\begin{align}
\begin{aligned}
    \cG^Y_t=&\sigma((Y_{t_1}, H_{t_1})\in C_1\times D_1,..,(Y_{t_n}, H_{t_n})\in C_n\times D_n, \\
    &\quad t_1<...<t_n \in [0,t], \  C_1, ...C_n \in \mathcal{B}(\reali), \ \  D_1, ...D_n \in \mathcal{B}(\{0,1\}),  \ n\in \naturali).
\end{aligned}
\end{align}
Now, take a set $A$ from the generator of $\cG^Y_t$,
\begin{align}
    \begin{aligned}
     \prob^s(A)= \prob^s((Y_{t_1}, H_{t_1})\in C_1\times D_1,..,(Y_{t_n}, H_{t_n})\in C_n\times D_n)
    \end{aligned}
\end{align}

Using the increasing property of $(\Lambda_t)_{t\ge 0}$, one gets $\{\tau>t\}=\{\Lambda_t<\Theta\}$ and, consequently $
H_t=\ind_{\{\Lambda_t\ge \Theta\}}$, $\prob_\pi$-a.s.
Moreover, from the definition of $\prob^s$ in \eqref{definitionProbs}, for $s > t>0$ it holds that
$H_t=\ind_{\{\mu_\ell t\ge \Theta\}}$ $\prob^s$-a.s.
Moreover, from \eqref{Eq:dY} we have that $Y_t=\beta B_t$ under $\prob^s$. Hence, denoting $Z_t \coloneqq \ind_{\{\mu_\ell t\ge \Theta\}}$,
\begin{align}
    \begin{aligned}
     \prob^s(A)&= \prob^s((\beta B_{t_1}, Z_{t_1})\in C_1\times D_1,..,(\beta B_{t_n}, Z_{t_n})\in C_n\times D_n)\\
     &=\mathsf{Q}((\beta B_{t_1}, Z_{t_1})\in C_1\times D_1,..,(\beta B_{t_n}, Z_{t_n})\in C_n\times D_n)
    \end{aligned}
\end{align}
where $\mathsf{Q}$ denotes the probability law of the process $(\beta B_{t}, Z_{t})_{t\ge0}$. By similar arguments,
\begin{align}
    \begin{aligned}
     \prob^\infty(A)=\mathsf{Q}((\beta B_{t_1}, Z_{t_1})\in C_1\times D_1,..,(\beta B_{t_n}, Z_{t_n})\in C_n\times D_n)
    \end{aligned}
\end{align}
from which it follows \eqref{eqprob}. To prove \eqref{Probabilities_and_1-Pi}, take $A\in  \cG^Y_t$,
\begin{align}
    \begin{aligned}
        \media_\pi [\ind_A (1-\Pi_t)] &= \media_\pi [\ind_A \ind_{\{\xi > t\}}]\\
        &=  (1-\pi)\int_0^\infty \media^s[\ind_A \ind_{\{\xi > t\}}] \lambda \e^{-\lambda s} \ud s\\
        &= \media_\pi \bigg[\ind_A  (1-\pi)\int_t^\infty \frac{\ud \prob^s}{\ud \prob_\pi}\Big|_{ \cG^Y_t} \lambda \e^{-\lambda s} \ud s  \bigg]\\
        &= \media_\pi \bigg[\ind_A  (1-\pi)\frac{\ud \prob^\infty}{\ud \prob_\pi}\Big|_{ \cG^Y_t}\int_t^\infty  \lambda \e^{-\lambda s} \ud s  \bigg]\\
        &= \media_\pi \bigg[\ind_A  (1-\pi)\e^{-\lambda t}\frac{\ud \prob^\infty}{\ud \prob_\pi}\Big|_{ \cG^Y_t}    \bigg],
    \end{aligned}
\end{align}
where in the fourth equality, we used \eqref{eqprob}. The proof is complete.

\end{proof}

\begin{proof}[\textbf{Proof of Proposition \ref{dvarphi}}]
Using \eqref{Probabilities_and_Pi} and \eqref{Probabilities_and_1-Pi} we get for any $t\ge0$

\begin{equation}\label{C0}
    \varphi_t = \frac{\pi}{1 - \pi} \e^{\lambda t} \frac{\ud \prob^0}{\ud \prob^\infty}\Big|_{ \cG^Y_t} + \e^{\lambda t} 
\int_0^t\frac{\ud \prob^s}{\ud \prob^\infty}\Big|_{ \cG^Y_t}\lambda \e^{-\lambda s}\ud s.\end{equation}
\\
We now focus on $Z_t \coloneqq  \frac{\ud \prob^0}{\ud \prob^\infty}\Big|_{ \cG^Y_t}$ and
$Z^s_t \coloneqq \frac{\ud \prob^s}{\ud \prob^\infty}\Big|_{ \cG^Y_t}$ for $s\le t$. Observe that, for $t$ fixed and $u\leq t$,
\begin{itemize}
    \item [(i)]
Under $\prob^0$, $\ud Y_u = \Delta \mu \ \ud u + \beta \ud B_u$ and $H_u$ has $(\cG^Y_u)_{u\leq t}$-predictable intensity $\mu_h(1-H_{u^-})$
\item [(ii)] Under $\prob^s$, $\ud Y_u = \Delta \mu \ind_{\{u\geq s \}} \ud u + \beta \ud B_u$ and $H_u$
has $( \cG^Y_u)_{u\leq t}$-predictable intensity $(\mu_\ell \ind_{\{u \leq s \}} + \mu_h \ind_{\{ u > s \}})( 1-H_{u^-})$
\item [(iii)] Under $\prob^\infty$, $\ud Y_u = \beta \ud B_u$ and $H_u$ has $( \cG^Y_u)_{u\leq t}$-predictable intensity $\mu_\ell(1-H_{u^-})$.
\end{itemize}

Thus by Girsanov's Theorem, 

\begin{equation} \label{zeta}
Z_t= \mathcal{E}\Big ( \int_0^t \frac{\Delta \mu }{\beta^2} \ud Y_u + \int_0^t  \frac{\Delta \mu }{\mu_\ell} \big (\ud H_u  -\mu_\ell(1-H_{u^-})  \ud u\big ) \Big ) = \mathcal{E}\big (M_t\big )
\end{equation}
where $\mathcal{E}$ denotes the Dol\'{e}ans-Dade exponential of the $( \bG^Y,\prob^\infty)$-martingale $(M_t)_{t\geq 0}$ given in \eqref{dM}, (see \cite[Theorem 4.61]{jacod2013limit}). In fact, observe that $(Z_u)_{u\ge 0}$ is a $( \bG^Y,\prob^\infty)$-martingale over any finite time horizon $t$ and under $\prob^0$ the $(\cG^Y_u)_{u\leq t}$-intensity of $H_u$ is given by 
$$\bigg(1 + \frac{\Delta \mu }{\mu_\ell}\bigg) \mu_\ell(1-H_{u^-}) = \mu_h  (1-H_{u^-})$$
and 
$$\frac{1}{\beta} (Y_u - \Delta \mu \ u)$$ 
is a $(\cG^Y_u)_{u\leq t}$-Brownian motion.
Similarly,   
\begin{equation}
Z^s_t= \mathcal{E}\Big ( \int_0^t \frac{\Delta \mu }{\beta^2}  \ind_{\{u \geq s \}}\ud Y_u + \int_0^t  \frac{\Delta \mu  }{\mu_\ell} \ind_{\{u \geq s \}}\big (\ud H_u  -\mu_\ell(1-H_{u^-})  \ud u\big ) \Big ).
\end{equation}
In fact, under $\prob^s$ the  $(\cG^Y_u)_{u\leq t}$-intensity of $H_u$ is given by 
$$\bigg(1 + \frac{\Delta \mu }{\mu_\ell} \ind_{\{u \geq s \}}\bigg)\mu_\ell(1-H_{u^-}) = (\mu_\ell \ind_{\{u < s \}} + \mu_h \ind_{\{u\geq s \}})( 1-H_{u^-})$$
and 
$$\frac{1}{\beta} \bigg(Y_u - \int_0^u \Delta \mu \ind_{\{v\geq s \}}\ud v\bigg)$$ 
is a  $(\cG^Y_u)_{u\leq t}$-Brownian motion. 
Deriving the explicit expression of $Z_t$ and $Z^s_t$, by applying the Dol\'{e}ans-Dade exponential formula (\cite[Corollary 11.5.6, p. 491]{shreve2004stochastic}), we get that 
\begin{equation}
Z_t= \exp \bigg( \int_0^t \frac{\Delta \mu}{\beta} \ud B_u + \frac{1}{2}\int_0^t \frac{(\Delta \mu)^2}{\beta^2} \ud u -\int_0^t \mu_\ell(1-H_{u^-})\ud u + \sum_{u\le t} \ln \Big( 1+ \frac{\Delta \mu}{\mu_\ell} \Delta H_u \Big)\bigg)
\end{equation}
and
\begin{equation}
Z_t^s= \exp \bigg( \int_s^t \frac{\Delta \mu}{\beta} \ud B_u + \frac{1}{2}\int_s^t \frac{(\Delta \mu)^2}{\beta^2} \ud u -\int_s^t \mu_\ell(1-H_{u^-})\ud u + \!\! \sum_{s\le u\le t} \ln \Big( 1+ \frac{\Delta \mu}{\mu_\ell} \Delta H_u \Big)\bigg)
\end{equation}
hence for $t \geq s$,
\begin{equation}\label{C1}
Z^s_t = \frac{Z_t}{Z_s}.
\end{equation}
From  \eqref{C0} and \eqref{C1}, we obtain
\begin{equation}\label{C2}
  \varphi_t = \e^{\lambda t} Z_t \Big ( \frac{\pi}{1 - \pi} + \lambda \int_0^t \frac{\e^{-\lambda s}}{Z_s} \ud s \Big) .
\end{equation}
Recalling that $dZ_t =Z_{t^-} \ud M_t$, with $M$ is defined in \eqref{dM}, we have that
\begin{align}
    \begin{aligned} \label{C2bis}
     d(\e^{\lambda t}Z_t)
     &= \lambda \e^{\lambda t} Z_t \ud t + \e^{\lambda t} Z_{t-}\ud M_t.
    \end{aligned}
\end{align}
Finally, from \eqref{C2} and \eqref{C2bis}, we obtain
\begin{align}
    \begin{aligned}
        \ud \varphi_t & = d(\e^{\lambda t}Z_t) \Big ( \frac{\pi}{1 - \pi} + \lambda \int_0^t \frac{\e^{-\lambda s}}{Z_s} \ud s \Big) + \lambda \ud t\\
        & = (\lambda \e^{\lambda t} Z_t \ud t + \e^{\lambda t} Z_{t-}\ud M_t) \Big ( \frac{\pi}{1 - \pi} + \lambda \int_0^t \frac{\e^{-\lambda s}}{Z_s} \ud s \Big) + \lambda \ud t\\
        &=\lambda( 1 + \varphi_t) \ud t +  \varphi_{t^-} \ud M_t,
    \end{aligned}
\end{align}
which concludes the proof.
\end{proof}

\begin{proof}[\textbf{Proof of Proposition \ref{App:hatBhatm}}]
We start by proving that $(\hat B_t)_{t\ge 0}$ is a $(\bG^Y,\prob_\pi)$-Brownian motion. 
First, we show that $(\hat B_t)_{t\ge 0}$ is a $(\bG^Y,\prob_\pi)$-martingale.
Observe that $(\hat B_t)_{t\ge 0}$ is integrable and $(\bG^Y,\prob_\pi)$-adapted, and, for $t\ge s$
\begin{align}
    \begin{aligned}
        \media_\pi [\hat B_t\given \cG^Y_s]
        &=\frac{1}{\beta} \media_\pi\bigg[ Y_t-Y_s-\Delta \mu \int_s^t \Pi_u \ud u \Given  \cG^Y_s \bigg]+\frac{1}{\beta} \bigg(Y_s-\Delta \mu \int_0^s \Pi_u \ud u\bigg) \\
        &=\frac{1}{\beta} \media_\pi\bigg[ \beta(B_t-B_s)-\Delta \mu \int_s^t (\ind_{\{u\ge \xi\}}-\Pi_u) \ud u \Given  \cG^Y_s \bigg]+ \hat B_s\\
        &=\frac{1}{\beta} \media_\pi[ \beta(B_t-B_s) \given  \cG^Y_s]+ \hat B_s,
    \end{aligned}
\end{align}
where in the third equality we used the definition of $(Y_t)_{t\ge 0}$ \eqref{Eq:dY} and the last equality we used the tower property.
The process $(B_t)_{t\ge 0}$ is, by definition, a $(\bF,\prob_\pi)$-Brownian motion and, using the \textit{Immersion property} (see  Remark \ref{Rm:immersion}),  
it is also a $(\bG,\prob_\pi)$-Brownian motion. By tower property, \[\media_\pi[B_t-B_s\given \cG^Y_s]=\media_\pi[\media_\pi[B_t-B_s\given \cG_s] \given \cG^Y_s]=0.\]

The continuity of $(\hat B_t)_{t\ge 0}$ follows from its definition, and next we show that its quadratic variation is $t$.
From \eqref{Eq:dY}, the quadratic variation of $(Y_t)_{t\ge 0}$ is $\langle Y \rangle_t=\beta^2 t$, yielding $\langle \hat B \rangle_t = \beta^{-2} \langle Y \rangle_t = t.$
From the L\'evy Martingale characterization of the Brownian motion \cite[Theorem 3.16]{karatzas1991brownian}, it follows that $(\hat B_t)_{t\ge 0}$ is a $(\bG^Y,\prob_\pi)$-Brownian motion.

Next, we show that $(\hat m_t)_{t\ge 0}$ is a $(\bG^Y,\prob_\pi)$-martingale. For $t \geq 0$, denote \begin{align}
    M_t \coloneqq  H_t- \int_0^t (1-H_s)\mu_s \ud s,
\end{align} and observe that $(M_t)_{t\ge 0}$ is a $( \bG,\prob_\pi)$-martingale (cf. Remark \ref{Rm:immersion}). Let 
\begin{align}
   \hat M_t \coloneqq  \media_\pi[M_t\given \cG^Y_t] , \quad t \geq 0
\end{align}
Since $\bG^Y\subset \bG$, by tower property, $(\hat M_t)_{t\ge 0}$ is a $(\bG^Y,\prob_\pi)$-martingale. Moreover
\begin{align}
    \begin{aligned}
        \hat M_t&=\media_\pi\bigg[H_t- \int_0^t (1-H_s)\mu_s \ud s \Given \cG^Y_t\bigg]\\
        &=H_t- \int_0^t (1-H_s)\hat \mu_s \ud s + \widetilde M_t=\hat m_t + \widetilde M_t
    \end{aligned}
\end{align}
where $(\widetilde M_t)_{t\ge 0}$ is a $(\bG^Y,\prob_\pi)$-martingale. In the second equality, we used the $\bG^Y_t$-measurability of $H_t$ and \cite[Lemma 8.4]{liptser2013statistics}. This shows that $(\hat m_t)_{t\ge 0}$ is a $(\bG^Y,\prob_\pi)$-martingale and the proof is complete.
\end{proof}

\begin{proof}[\textbf{Proof of Lemma \ref{Lemma4.1}}]

By \textit{Immersion property} (see, Remark \ref{Rm:immersion}) we only need to prove that $(M^\xi_t)_{t\in[0,T]}$ is an $(\bF,\prob_\pi)$-martingale.
Since $\xi$ is independent of $(B_t)_{t\in[0,T]}$ and $\bF = \bF^\mu \vee \bF^B$, using $\bF^\mu$-Markovianity of $(\mu_t )_{t\in[0,T]}$, we have for any $t\geq s \geq 0$ that
\begin{align}
\media_\pi\big[H^\xi_t \mid \cF_s    \big]=\media_\pi\big[H^\xi_t \mid \cF^\mu_s    \big]= \prob_\pi\big( \mu_t=\mu_h\mid \sigma(\mu_s) \big).
\end{align}
Thus, we get 
\begin{align}\media_\pi\big[H^\xi_t |\cF_s    \big]&=\prob_\pi( \mu_t=\mu_h|\mu_s=\mu_\ell) \ind_{\{\mu_s =\mu_\ell\}} + \prob_\pi( \mu_t=\mu_h|\mu_s=\mu_h) \ind_{\{\mu_s =\mu_h\}}\\
&=(1-e^{-\lambda(t-s))})\ind_{\{\xi > s\}} + \ind_{\{\xi \leq s\}}.
\end{align}
Consequently,
\begin{equation}
   \media_\pi\big[H^\xi_t - H^\xi_s|\cF_s    \big]= (1-e^{-\lambda(t-s))})\ind_{\{\xi > s\}}. 
\end{equation}
For any $t\geq s\geq 0$, we also have
\begin{equation}
   \media_\pi\Big[\int_s^t \lambda (1-H^\xi_{u^-})\ud u \Given \cF_s \Big] 
   = \int_s^t \lambda e^{-\lambda(u-s))} \ud u \ind_{\{\xi > s\}}=\media_\pi\big[H^\xi_t - H^\xi_s|\cF_s    \big].
   \end{equation}
   Thus $\media_\pi[M^\xi_t-M^\xi_s \mid \cF_s]=0$, which concludes the proof.
\end{proof}

\begin{proof}[\textbf{Proof of Proposition \ref{MarkovPiH}}]

Let $g(t, x, h)$ be a  measurable function on $(t,x,h) \in [0,T]\times[0,1] \times \{0,1\}$, $C^1$ on $t\in[0,T]$ and $C^2$ on $x\in [0,1]$. Using the filtering equation \eqref{SDEFilterG} and It\^o's formula, we obtain
\begin{align} 
\begin{aligned}\label{ito}
\ud g(t,\Pi_t,H_t) & =  \frac{\partial g}{\partial t}(t,\Pi_t,H_t) \ud t +  \frac{\partial g}{\partial x} (t,\Pi_t, H_t) \ud \Pi^c_t + \frac{1}{2} \frac{\Delta^2 \mu^2}{\beta^2}\Pi^2_t(1-\Pi_t)^2 \frac{\partial^2 g}{\partial x^2}(t,\Pi_t,H_t) \ud t \\  
& \quad  + [ g(t,\Pi_t, H_t) - g(t,\Pi_{t^-}, H_{t^-}) ] \ud H_t,
\end{aligned}
  \end{align} 
where the continuous component of the filter is given by 
\begin{equation}\label{parte continua}
\ud \Pi^c_t =  \lambda(1-\Pi_t)\ud t+\frac{\Delta\mu}{\beta}\Pi_t(1-\Pi_t)\ud \hat{B}_t -   \frac{\Delta\mu (1-\Pi_{t^-})\Pi_{t^-}}  {\mu_\ell + \Delta \mu \Pi_{t^-}}  (1-H_{t^-})(\mu_\ell + \Delta \mu \Pi_{t^-}).
\end{equation}
From \eqref{SDEFilterG}, we get
\begin{align} 
\begin{aligned}\label{salti}
&\big( g(t,\Pi_t, H_t) - g(t,\Pi_{t^-}, H_{t^-}) \big) \ud H_t \\
&= \Big [ g \Big (t,\Pi_{t^-} + \frac {\Delta\mu \Pi_{t^-} (1 - \Pi_{t^-}) }{\mu_\ell + \Delta\mu \Pi_{t^-} }, H_{t^-} +1 \Big ) - g(t,\Pi_{t^-}, H_{t^-}) \Big ] \ud H_t \\
&= \Big [ g \Big ( t,  \frac{\mu_h \Pi_{t^-}}{\mu_\ell + \Delta \mu \Pi_{t^-}}, H_{t^-} + 1 \Big ) - g(t,\Pi_{t^-}, H_{t^-}) \Big ] \ud H_t. 
\end{aligned}
  \end{align}
  Plugging \eqref{parte continua} and \eqref{salti} into \eqref{ito}, we obtain 
  \begin{align} 
\begin{aligned}
\ud g(t,\Pi_t,H_t) &= \mathcal{L}^{(\Pi,H)}g(t,\Pi_t,H_t) \ud t + \frac{\Delta\mu}{\beta}\Pi_t(1-\Pi_t) \frac{\partial g}{\partial x} (t,\Pi_t, H_t) \ud \hat{B}_t  \\ 
&\quad  +\Big [ g \Big ( t,  \frac{\mu_h \Pi_{t^-}}{\mu_\ell + \Delta \mu \Pi_{t^-}}, H_{t^-} +1\Big ) - g(t,\Pi_{t^-}, H_{t^-}) \Big ] \ud \hat m_t,
\end{aligned}
  \end{align}
where the last two terms are $(\bG^Y, \prob_\pi )$-martingales because $g$ and $\frac{\partial g}{\partial x}$ are bounded and this concludes the proof (see, \cite[Proposition 1.7, Chapter IV]{ethier2009markov}).
\end{proof}

\end{appendices}

\bibliography{sn-article}

\begin{thebibliography}{42}
\providecommand{\natexlab}[1]{#1}
\providecommand{\url}[1]{\texttt{#1}}
\expandafter\ifx\csname urlstyle\endcsname\relax
  \providecommand{\doi}[1]{doi: #1}\else
  \providecommand{\doi}{doi: \begingroup \urlstyle{rm}\Url}\fi

\bibitem[Aase and Persson((1994))]{aase1994pricing}
K.K. Aase and S.A. Persson.
\newblock Pricing of unit-linked life insurance policies.
\newblock \emph{Scandinavian Actuarial Journal}, \penalty0 (1):\penalty0 26--52, (1994).

\bibitem[Aksamit and Jeanblanc((2017))]{aksamit2017enlargement}
A.~Aksamit and M.~Jeanblanc.
\newblock \emph{Enlargement of filtration with finance in view}.
\newblock Springer, (2017).

\bibitem[Bain and Crisan((2009))]{Bain-Crisan}
A.~Bain and D.~Crisan.
\newblock \emph{Fundamentals of stochastic filtering}.
\newblock Springer, (2009).

\bibitem[Bayraktar et~al.(2022)Bayraktar, Ekström, and Guo]{Bayraktar}
E.~Bayraktar, E.~Ekström, and J.~Guo.
\newblock Disorder detection with costly observations.
\newblock \emph{Journal of Applied Probability}, 59\penalty0 (2):\penalty0 338–349, 2022.
\newblock \doi{10.1017/jpr.2021.63}.

\bibitem[Bielecki and Rutkowski((2013))]{bielecki2013credit}
T.R. Bielecki and M.~Rutkowski.
\newblock \emph{Credit risk: modeling, valuation and hedging}.
\newblock Springer Science \& Business Media, (2013).

\bibitem[Billingsley((2013))]{billingsley2013convergence}
P.~Billingsley.
\newblock \emph{Convergence of probability measures}.
\newblock John Wiley \& Sons, (2013).

\bibitem[Boyce et~al.((2021))Boyce, DiPrima, and Meade]{boyce2021elementary}
W.E. Boyce, R.C. DiPrima, and D.B. Meade.
\newblock \emph{Elementary differential equations and boundary value problems}.
\newblock John Wiley \& Sons, (2021).

\bibitem[Br{\'e}maud((1981))]{bremaud1981point}
P.~Br{\'e}maud.
\newblock \emph{Point Processes and Queues: Martingale Dynamics}.
\newblock Springer, (1981).

\bibitem[Br{\'e}maud and Yor((1978))]{bremaud1978changes}
P.~Br{\'e}maud and M.~Yor.
\newblock Changes of filtrations and of probability measures.
\newblock \emph{Zeitschrift f{\"u}r Wahrscheinlichkeitstheorie und verwandte Gebiete}, 45\penalty0 (4):\penalty0 269--295, (1978).

\bibitem[Ceci and Colaneri((2012))]{ceci2012nonlinear}
C.~Ceci and K.~Colaneri.
\newblock Nonlinear filtering for jump diffusion observations.
\newblock \emph{Advances in Applied Probability}, 44\penalty0 (3):\penalty0 678--701, (2012).

\bibitem[Ceci and Gerardi((2000))]{ceciG2000filtering}
C.~Ceci and A.~Gerardi.
\newblock Filtering of a markov jump process with counting observations.
\newblock \emph{Applied Mathematics and Optimization}, 42:\penalty0 1--18, (2000).

\bibitem[Ceci and Gerardi((2001))]{ceciG2001nonlinear}
C.~Ceci and A.~Gerardi.
\newblock Nonlinear filtering equation of a jump process with counting observations.
\newblock \emph{Acta Applicandae Mathematica}, 66:\penalty0 139--154, (2001).

\bibitem[Ceci et~al.((2015))Ceci, Colaneri, and Cretarola]{CCC1}
C.~Ceci, K.~Colaneri, and A.~Cretarola.
\newblock Hedging of unit-linked life insurance contracts with unobservable mortality hazard rate via local risk-minimization.
\newblock \emph{Insurance: Mathematics and Economics}, 60:\penalty0 47--60, (2015).

\bibitem[Ceci et~al.((2017))Ceci, Colaneri, and Cretarola]{CCC2}
C.~Ceci, K.~Colaneri, and A.~Cretarola.
\newblock Unit-linked life insurance policies: Optimal hedging in partially observable market models.
\newblock \emph{Insurance: Mathematics and Economics}, 76:\penalty0 149--163, (2017).

\bibitem[Ceci et~al.((2020))Ceci, Colaneri, and Cretarola]{CCC3}
C.~Ceci, K.~Colaneri, and A.~Cretarola.
\newblock Indifference pricing of pure endowments via bsdes under partial information.
\newblock \emph{Scandinavian Actuarial Journal}, 2020\penalty0 (10):\penalty0 904--933, (2020).

\bibitem[Coculescu and Nikeghbali((2012))]{coculescu2012hazard}
D.~Coculescu and A.~Nikeghbali.
\newblock Hazard processes and martingale hazard processes.
\newblock \emph{Mathematical Finance: An International Journal of Mathematics, Statistics and Financial Economics}, 22\penalty0 (3):\penalty0 519--537, (2012).

\bibitem[De~Angelis et~al.((2022))De~Angelis, Garg, and Zhou]{deangelis2022quickestdetectionproblemfalse}
T.~De~Angelis, J.~Garg, and Q.~Zhou.
\newblock A quickest detection problem with false negatives, (2022).
\newblock URL \url{https://arxiv.org/abs/2210.01844}.

\bibitem[Ernst et~al.(2024)Ernst, Mei, and Peskir]{Ernst}
P.A. Ernst, H.~Mei, and G.~Peskir.
\newblock Quickest real-time detection of multiple brownian drifts.
\newblock \emph{SIAM Journal on Control and Optimization}, 62\penalty0 (3):\penalty0 1832--1856, 2024.

\bibitem[Ethier and Kurtz((1986))]{ethier2009markov}
S.N. Ethier and T.G. Kurtz.
\newblock \emph{Markov processes: characterization and convergence}.
\newblock John Wiley \& Sons, (1986).

\bibitem[Frey and Schmidt((2012))]{frey2012pricing}
R.~Frey and T.~Schmidt.
\newblock Pricing and hedging of credit derivatives via the innovations approach to nonlinear filtering.
\newblock \emph{Finance and Stochastics}, 16:\penalty0 105--133, (2012).

\bibitem[Gapeev and Shiryaev(2013)]{Gapeev}
P.V. Gapeev and A.N. Shiryaev.
\newblock Bayesian quickest detection problems for some diffusion processes.
\newblock \emph{Advances in Applied Probability}, 45\penalty0 (1):\penalty0 164–185, 2013.
\newblock \doi{10.1239/aap/1363354107}.

\bibitem[Ikeda and Watanabe((1981))]{ikeda1981stochastic}
N.~Ikeda and S.~Watanabe.
\newblock \emph{Stochastic differential equations and diffusion processes}.
\newblock Elsevier, (1981).

\bibitem[Jacod and Shiryaev((2013))]{jacod2013limit}
J.~Jacod and A.N. Shiryaev.
\newblock \emph{Limit theorems for stochastic processes}.
\newblock Springer Science \& Business Media, (2013).

\bibitem[Jarrow and Turnbull((1995))]{jarrow1995pricing}
R.A. Jarrow and S.M. Turnbull.
\newblock Pricing derivatives on financial securities subject to credit risk.
\newblock \emph{The journal of finance}, 50\penalty0 (1):\penalty0 53--85, (1995).

\bibitem[Jeanblanc et~al.((2009))Jeanblanc, Yor, and Chesney]{jeanblanc2009mathematical}
M.~Jeanblanc, M.~Yor, and M.~Chesney.
\newblock \emph{Mathematical methods for financial markets}.
\newblock Springer Science \& Business Media, (2009).

\bibitem[Johnson and Stulz((1987))]{johnson1987pricing}
H.~Johnson and R.~Stulz.
\newblock The pricing of options with default risk.
\newblock \emph{The Journal of Finance}, 42\penalty0 (2):\penalty0 267--280, (1987).

\bibitem[Kailath((1968))]{kailath}
T.~Kailath.
\newblock An innovations approach to least-squares estimation--part i: Linear filtering in additive white noise.
\newblock \emph{IEEE transactions on automatic control}, 13\penalty0 (6):\penalty0 646--655, (1968).

\bibitem[Kalman((1960))]{kalman1}
R.E. Kalman.
\newblock A new approach to linear filtering and prediction problems.
\newblock \emph{ASME Journal of Basic Engineering}, 82\penalty0 (1):\penalty0 35--45, (1960).

\bibitem[Kalman and Bucy((1961))]{kalman2}
R.E. Kalman and R.S. Bucy.
\newblock New results in linear filtering and prediction theory.
\newblock \emph{ASME Journal of Basic Engineering}, 83\penalty0 (1):\penalty0 95--108, (1961).

\bibitem[Karatzas and Shreve((1991))]{karatzas1991brownian}
I.~Karatzas and S.~Shreve.
\newblock \emph{Brownian motion and stochastic calculus}, volume 113.
\newblock Springer Science \& Business Media, (1991).

\bibitem[Kurtz and Ocone((1988))]{kurtz1988unique}
T.G. Kurtz and D.L. Ocone.
\newblock Unique characterization of conditional distributions in nonlinear filtering.
\newblock \emph{The Annals of Probability}, pages 80--107, (1988).

\bibitem[Kushner((1964))]{kushner}
H.J. Kushner.
\newblock On the differential equations satisfied by conditional probablitity densities of markov processes, with applications.
\newblock \emph{Journal of the Society for Industrial and Applied Mathematics, Series A: Control}, 2\penalty0 (1):\penalty0 106--119, (1964).

\bibitem[Liptser and Shiryaev((2013))]{liptser2013statistics}
R.S. Liptser and A.N. Shiryaev.
\newblock \emph{Statistics of random processes: I. General theory}.
\newblock Springer Science \& Business Media, (2013).

\bibitem[M{\o}ller((1998))]{moller1998risk}
T.~M{\o}ller.
\newblock Risk-minimizing hedging strategies for unit-linked life insurance contracts.
\newblock \emph{ASTIN Bulletin: The Journal of the IAA}, 28\penalty0 (1):\penalty0 17--47, (1998).

\bibitem[Peskir and Shiryaev((2006))]{peskir2006optimal}
G.~Peskir and A.N. Shiryaev.
\newblock \emph{Optimal stopping and free-boundary problems}.
\newblock Springer, (2006).

\bibitem[Shiryaev((1961))]{61shiryaev}
A.N. Shiryaev.
\newblock The problem of the most rapid detection of a disturbance in a stationary process.
\newblock In \emph{Soviet Math. Dokl}, volume~2, page 103, (1961).

\bibitem[Shiryaev((1963))]{63shiryaev}
A.N. Shiryaev.
\newblock On optimum methods in quickest detection problems.
\newblock \emph{Theory of Probability \& Its Applications}, 8\penalty0 (1):\penalty0 22--46, (1963).

\bibitem[Shiryaev((2007))]{shiryaev2007optimal}
A.N. Shiryaev.
\newblock \emph{Optimal stopping rules}, volume~8.
\newblock Springer Science \& Business Media, (2007).

\bibitem[Shreve((2004))]{shreve2004stochastic}
S.E. Shreve.
\newblock \emph{Stochastic calculus for finance II: Continuous-time models}.
\newblock Springer, (2004).

\bibitem[Stratonovich((1965))]{stratonovich}
R.L. Stratonovich.
\newblock Conditional markov processes.
\newblock In \emph{Non-linear transformations of stochastic processes}, pages 427--453. Elsevier, (1965).

\bibitem[Wong and Hajek((1985))]{Wong1985}
E.~Wong and B.~Hajek.
\newblock \emph{Stochastic processes in engineering systems}.
\newblock Springer New York, NY, (1985).

\bibitem[Zakai((1969))]{zakai}
M.~Zakai.
\newblock On the optimal filtering of diffusion processes.
\newblock \emph{Zeitschrift f{\"u}r Wahrscheinlichkeitstheorie und verwandte Gebiete}, 11\penalty0 (3):\penalty0 230--243, (1969).

\end{thebibliography}
\end{document}